\documentclass[11pt,letterpaper]{article}
\usepackage{amsmath,amsthm,amssymb,amsfonts,color}
\usepackage{xcolor}
\usepackage{authblk}
\usepackage{natbib}
\usepackage[margin=1in]{geometry}
\usepackage{booktabs} % For formal tables
\usepackage[ruled]{algorithm2e} % For algorithms

\usepackage{algorithmic}
\usepackage{enumitem}
\SetAlFnt{\small}
\SetAlCapFnt{\small}
\SetAlCapNameFnt{\small}
\SetAlCapHSkip{0pt}
\IncMargin{-\parindent}

\theoremstyle{plain}
\newtheorem{theorem}{Theorem}[section]
\newtheorem{lemma}[theorem]{Lemma}

\newtheorem{proposition}[theorem]{Proposition}

\theoremstyle{definition}
\newtheorem{definition}[theorem]{Definition}
\newtheorem{assumption}[theorem]{Assumption}

\theoremstyle{remark}

\newtheorem*{remark*}{Remark}

\newcommand{\exan}{ex-ante\xspace}

\newcommand{\vq}{v_{\tt q}}

\newcommand{\selection}{{\sc ActionThresholdSelection}}
\newcommand{\spa}{{\tt SPA}}
\newcommand{\fpa}{{\tt FPA}}

\DeclareMathOperator*{\E}{\mathbb{E}}
\DeclareMathOperator*{\calU}{\mathcal{U}}

\begin{document}
\title{Auctions with Contract Design}

\author[1]{Xiaolin Bu}
\author[1]{Jiarong Jin}
\author[1]{Junzhu Ke}
\author[2]{Pinyan Lu}
\author[1]{Biaoshuai Tao}
\author[3]{Xiang Yan}
\author[3]{Chunxue Yang}
\author[3]{Haikuo Yang}
\author[4]{Zhihua Zhu}

\affil[1]{Shanghai Jiao Tong University: \texttt{lin\_bu@sjtu.edu.cn},
\texttt{jinjiarong@sjtu.edu.cn},
\texttt{jocelyn-10909@sjtu.edu.cn},
\texttt{bstao@sjtu.edu.cn}}
\affil[2]{Shanghai University of Finance and Economics: \texttt{lu.pinyan@mail.shufe.edu.cn}}
\affil[3]{Huawei Taylor Lab:
\texttt{xyansjtu@163.com},
\texttt{yangchunxue7@huawei.com},
\texttt{yanghaikuo1@huawei.com}}
\affil[4]{Huawei Ads: \texttt{zachzhu0418@126.com}}

\date{}

\maketitle

\begin{abstract}
We consider a new auction model where the bidders' valuations and the auctioneer's revenue/utility depend on a \emph{quality} factor of the transaction determined by costly and strategic investments of the bidders. A canonical application of our model is ad auctions, where the high quality of the ads benefits both the advertisers and the platform's long-term revenue due to good user experiences. This model also captures many other practical scenarios, such as government concessions and crowdsourcing contests. Crucially, these quality-enhancing efforts made by the bidders are often \emph{sunk costs} incurred prior to the allocation, creating a fundamental moral hazard problem where the risk of losing the auction discourages investments. In this paper, we study the design of revenue-maximizing contracts integrated into auctions: the auctioneer commits to a transfer rule that rewards the winner for the ex-post realized quality of the transaction to incentivize higher effort. Our new framework is a natural generalization of both the auction theory and the principal-agent model.

We consider both the second-price and the first-price auctions, and we consider the standard Bayesian setting with private bidder types. We characterize Bayes Nash equilibria and show that natural symmetric equilibria exist in both auctions. Assuming these natural equilibria are played by the bidders and the number of bidders is large, we study linear contracts and derive the optimal reward factor of the transfer rule that maximizes the auctioneer's revenue. As the main result, we show that the optimal reward factor converges to the auctioneer's marginal benefit from the quality, as the number of bidders grows. That is, \emph{it is optimal for the auctioneer to ``fully pass through'' the quality value to the winner}. This observation is largely independent of the auction rule used: we derive a revenue equivalence theorem showing that the revenue remains the same as long as symmetric Bayes Nash equilibria exist. Lastly, by quantitatively comparing with the standard auctions where no quality reward is used, we show that the use of contracts effectively improves the revenue by incentivizing high investments from the bidders.
\end{abstract}

\section{Introduction}
Consider the following auction setting. 
An auctioneer seeks to allocate a task or opportunity to one of several capable agents.
An auction is used to decide which bidder will execute the task. 
Crucially, each bidder makes a strategic investment before participating in the auction. 
The quality of the task completion depends on the amount of investment sunk by the winner, and the realized quality affects the utilities of both the winner and the auctioneer. 
Consequently, the goal of the auctioneer is to maximize their total payoff/revenue, which depends on both the monetary payment determined by the auction and the realized quality of the task completion.

One typical application of this model is the ad auction~\citep{edelman2007internet,varian2007position}.
The platform, which is the auctioneer, would like to sell one ad slot to one of the advertisers, which are the bidders.
The advertisers need to exert costly efforts to design high-quality creatives or optimize landing pages. 
This investment is beneficial for both the advertisers and the platform.
Naturally, high-quality ads increase the click-through and conversion rates for the advertiser.
On the other hand, high-quality ads benefit the platform by enhancing the user experience and long-term ecosystem health, and incentivizing such ads to bolster long-term platform utility has garnered significant attention from both academia and industry~\citep{athey2011position,abrams2007ad,li2023optimally,yan2025optimal}.

This model also has a wide range of applications in other fields, such as government concessions~\citep{hendricks1988empirical,cramton1997fcc} and crowdsourcing/innovation contests~\citep{dipalantino2009crowdsourcing,moldovanu2001optimal,archak2009optimal}.
In the allocation of government concessions, such as oil and gas leases or spectrum licenses, the government acts as the auctioneer and sells rights to the highest bidder. 
Similar to the ad auction setting, bidders must incur significant amounts of investment, such as geological surveys or network feasibility studies, prior to the auction. 
The government seeks to maximize the earnings from the bidders while ensuring high-quality execution (e.g., environmental safety or network reliability), which depends on these upfront investments. 
In crowdsourcing/innovation contests hosted by platforms such as Kaggle and InnoCentive, participants need to invest significant amounts of effort to design algorithms/solutions before a winner is selected.
The platform seeks good solutions and designs prize schemes that reward participants based on the qualities of the solutions.\footnote{There is one small difference between the crowdsourcing contests setting and the model studied in our paper. The auctions used in these contests are normally reverse auctions where the auctioneer pays the bidders. On the other hand, our paper deals with forward auctions where the bidders pay the auctioneer. However, these two auction settings are structurally identical.}

In all the above-mentioned scenarios, it is important to notice that the investments made by the bidders are prior to the auctions, and are therefore \emph{sunk cost}.
Bidders must take the risk of losing the auction and making the investments in vain.
Advertisers have already spent effort on ad designs before entering the auction, bidders in government concessions perform geological surveys or network feasibility studies prior to the auction, and participants of a crowdsourcing contest need to work out the algorithms/solutions before the submissions.
This distinguishes our work from previous work by~\citet{che1993design,branco1997design,cripps1994design,mcafee1986bidding,asker2010procurement}, which models costly actions in auctions.

In addition, the investments made by the bidders are typically hidden to the auctioneer (instead, the auctioneer normally only sees the quality of the outcome).
This introduces a fundamental \emph{moral hazard} issue where the risk of losing the auction discourages investments and the actions of the investments are hidden.
As \citet{holmstrom1979moral} remarked, \emph{it has long been recognized that a problem of moral hazard may arise when individuals engage in risk sharing under conditions such that their privately taken actions affect the probability distribution of the outcome.}
This moral hazard issue is addressed in the \emph{contract theory} literature, which has been extensively studied by economists and computer scientists (see a recent survey by~\citet{dutting2024algorithmic}).
In the principal-agent model, the principal uses \emph{contracts}, which design payment rules based on the observable quality of the outcomes to reward the agent, to incentivize the agent's effort investment.

In this paper, we address this challenge by integrating a quality-contingent contract into an auction.
In our mechanism, the winner pays according to the auction rule (e.g., the second-highest bid in the second-price auction) but also receives a reward from the auctioneer based on the ex-post observed quality.
By integrating a contract into the auction, the auctioneer's \emph{revenue} consists of three parts: the payment from the winner, plus the long-term benefit score based on the quality of the outcome, and minus the quality reward to the winner according to the contract.
Compared with the classical auctions without contracts, the extra third component of the revenue incurs a loss in the auctioneer's revenue, and this can be potentially compensated by better-quality outcomes due to the higher investments (incentivized by the contract) of the bidders.
Notice that a better-quality outcome improves both the first and the second components of the revenue:
bidders place higher bids by knowing that they will be rewarded should they win, and the long-term benefit score depends directly on the quality of the outcome.
We answer the following three fundamental questions in this paper:
\begin{enumerate}
    \item How will the bidders react and play when a contract is added to an action?
    \item How to design optimal contracts that maximize the auctioneer's revenue?
    \item Compared with the standard auction where no contract is used, can the auctioneer improve the overall revenue by sacrificing a little bit on rewarding the winner? If so, by how much?
\end{enumerate}

From the perspective of contract theory, our framework can be viewed as a natural generalization of the classical principal-agent model to a \emph{competitive environment}. 
Traditional contract theory~\citep{holmstrom1979moral,grossman1992analysis} typically focuses on a bilateral relationship where the principal interacts with a single pre-selected agent.
In that setting, the challenge in mechanism design is restricted to overcoming the moral hazard issue subject to the agent's fixed participation constraint. 
Our work extends this to a multilateral setting where the principal must simultaneously solve a selection problem (identifying the agent with the highest capability) and an incentive problem (motivating hidden effort).

\subsection{Our Results}
We consider the standard Bayesian setting where each bidder has a private type that is independently sampled from a public distribution, and we use the solution concept of the Bayes Nash equilibrium (in its interim form).
We address all three questions above.

To address the first question, for the second-price auction, we show that Bayes Nash equilibria always exist, and each bidder always plays a \emph{threshold strategy} in any equilibrium, where a threshold strategy consists of a threshold for the type such that a bidder plays the more costly action yielding better expected quality if and only if her type is above this threshold.
When bidders' type distributions are identical, we show that there exists a unique symmetric Bayes Nash equilibrium where bidders use the same threshold in their threshold strategy.
We also show that a similar type of symmetric Bayes Nash equilibria exists for the first-price auction.

For the corresponding revenue maximization problem, we consider linear contracts and show that the optimal reward factor converges to the auctioneer's marginal benefit from the quality as the number of bidders grows.
This implies that \emph{it is optimal for the auctioneer to ``fully pass through'' her quality value to the winner}.
This is explicitly derived from the second-price auction.
We also prove a revenue equivalence theorem, which states that the revenue remains the same regardless of the auction rule, as long as a symmetric Bayes Nash equilibrium exists.
Therefore, the optimal contract we derive applies to a broad class of auctions where symmetric equilibria exist.
In particular, since we have also shown that a symmetric Bayes Nash equilibrium exists in the first-price auction, the optimal contract applies to the first-price auction.

Finally, we demonstrate that the derived optimal contracts always result in strictly better revenue compared with the auction where no contract is used.
In addition, compared to standard auctions, the use of optimal contracts yields revenue improvements that are almost proportional to the potential gain in the quality score.
Specifically, it is proportional up to a logarithmic additive term.
This answers the third question affirmatively: \emph{the use of the contract can always promote high qualities, and the resultant revenue gain can compensate for the loss in rewarding the winner.}

\subsection{Related Work}
\paragraph{Multidimensional auctions and procurements.}
Our model shares structural similarities with the literature on multidimensional auctions/procurement, where bidders compete on both price and non-monetary attributes (quality). 
The seminal work by \citet{che1993design} analyzes a procurement setting where bidders bid on both price and quality, and the winner is determined by a \emph{scoring rule}. 
Che characterizes the optimal scoring rule, showing that the auctioneer should bias the score to limit the information rents of efficient suppliers. 
This framework has been extended to settings with correlated costs \citep{branco1997design} and various procurement contexts \citep{asker2010procurement,cripps1994design}.
Similar to our paper, \citet{mcafee1986bidding} also integrate contracts into the government procurement setting and characterize optimal contracts.

However, a critical distinction separates our work from these papers is that quality is typically modeled as a verifiable \emph{commitment} made at the bidding stage and efforts are invested in ex-post (i.e., as a part of the utility function only for the winner). 
In contrast, we model quality as the result of a sunk investment made \emph{prior} to the auction. 
This timing introduces a fundamental ``hold-up" problem: bidders must invest under the uncertainty of winning, creating a moral hazard friction that those above-mentioned papers do not address. 
In this sense, our work aligns more closely with the literature on auctions with pre-bid investments, such as \citet{tan1992entry} and \citet{piccione1996cost}, who study how auction formats affect incentives for R\&D and optimal mechanism design.
However, different from our work, the goal of the mechanism design in \citet{tan1992entry} and \citet{piccione1996cost} is cost minimization without focusing on quality enhancement.

\paragraph{Ad and keyword auctions.}
A substantial literature on sponsored-search/keyword (position) auctions incorporates ad \emph{quality} or \emph{relevance} components into the platform’s pricing rules, often through predicted click-through rates or related user-experience signals. 
Existing papers are largely problem-specific, and differ in how quality is modeled and how it enters the platform’s objective.  

A broad literature on sponsored-search and keyword auctions studies how advertiser quality enters ranking and pricing rules, typically through weighted generalized second-price auctions or position-auction variants with given quality parameters~\citep{edelman2007internet,varian2007position,athey2011position,liu2010ex,lahaie2011efficient,thompson2013revenue,li2023optimally,yan2025optimal}.
Another line of work studies how platforms choose the weight placed on quality in allocation and pricing rules, analyzing bid weighting, reserve prices, and efficiency or revenue trade-offs under different quality signals~\citep{liu2010ex,lahaie2011efficient,thompson2013revenue,li2023optimally,yan2025optimal}.
In all these papers, qualities are modeled as \emph{exogenous} variables, which is a key difference from our paper.

Closest to our focus on \emph{endogenous} qualities that are strategically chosen by advertisers, \citet{chen2010keyword} allow advertisers to improve performance at a cost.
In their framework, the auctioneer incentivizes investment by committing to a scoring rule that rewards past performance with better future allocation. 
In contrast, our work focuses on a static setting with sunk costs, where the auctioneer uses an immediate, explicit monetary transfer (a linear contract) to solve the moral hazard problem, rather than relying on long-term dynamic incentives.
There are also many other differences between their model and ours.

\paragraph{Contract theory and moral hazard.}
The problem of incentivizing hidden effort is the core subject of contract theory. 
The classic principal-agent models \citet{ross1973economic,holmstrom1979moral,grossman1992analysis,mirrlees1999theory} establish that contracts can effectively promote better observable outcomes by aligning the agent's incentive to the principal's. 
Recently, there has been significant interest in ``Algorithmic Contract Theory" within the computer science community, focusing on the optimality and the computational complexity of contract design~\citep{holmstrom1987aggregation,hermalin1991moral,dutting2019simple,guruganesh2021contracts,laffont2002theory}. 
Beyond the classical single-principal-single-agent setting, another line of work considers combinatorial~\citep{,dutting2021complexity,dutting2025combinatorial} or multi-agent settings~\citep{babaioff2006combinatorial,feldman2025one} (agents are collaborating in these multi-agent settings, which is different from our setting where agents are competing and only one winner is selected).
A comprehensive survey is provided by \citet{dutting2024algorithmic}.

\subsection{Structure of This Paper}
The formal model of this paper is described in Sec.~\ref{sec:prelim}.
In Sect.~\ref{sec:spa}, we study the second-price auction.
We start with an equilibrium analysis and then find the optimal linear contract.
In Sect.~\ref{sec:equivalence}, we show that the derived optimal contract continues to be optimal in other auctions where symmetric Bayes Nash equilibria exist.
This is shown by proving a revenue equivalence theorem.
In Sect.~\ref{sec:fpa}, we show that the first-price auction admits a symmetric Bayes Nash equilibrium, so the revenue equivalence theorem in Sect.~\ref{sec:equivalence} applies and the contract derived in Sect.~\ref{sec:spa} is also optimal for the first-price auction.
Finally, we conclude our paper in Sect.~\ref{sec:conclusion}.

\section{Preliminaries}
\label{sec:prelim}
We describe our model in the context of ad auctions where the bidders are the advertisers.
As we have mentioned in the introduction, our model captures other applications as well.

Let $[k]=\{1,\ldots,k\}$.
Throughout the paper, superscripts index bidders, and subscripts index action-related terms.
Bidders and action-related terms are indexed by $i$ and $j$, respectively.
Notations defined below are also available in Table~\ref{tab:notations}.

\paragraph{Bidder.}
There is a set $N=[n]$ of bidders, each owning an advertisement, and submitting a non-negative bid to compete for one display slot.
We naturally assume that $n\ge 2$ throughout the paper; some results consider the limit as $n\rightarrow\infty$.
Each bidder adopts an action from an action set $[m]$, which represents the effort level in designing her ad.
An action with a higher effort incurs a higher cost and leads to higher expected quality.
Formally, action $j$ incurs a cost $c_j$.
Ad quality is drawn from a quality set $Q$.
We assume $Q\subseteq\mathbb{R}_{\geq0}$ is bounded.
Each action $j$ leads to a probability distribution $\mathbf{s}_j$ over the qualities $Q$.
The collection $\{\mathbf{s}_j\}_{j\in [m]}$ is common knowledge to all bidders and the auctioneer.
For each action $j$, we denote $\mu_j=\mathbb{E}_{q\sim \mathbf{s}_j}[q]$.
We naturally assume that $c_1<\cdots<c_m$, and $\mathbf{s}_{j_2}$ stochastically dominates $\mathbf{s}_{j_1}$ and $\mathbf{s}_{j_2}\neq\mathbf{s}_{j_1}$ whenever $j_2>j_1$.

Each bidder $i$ has a private type $\theta^{(i)}\in\mathbb{R}_{\ge 0}$, which is drawn from distribution $\mathcal{D}^{(i)}$ with support $\Theta^{(i)}$.
We use $D^{(i)}$ and $d^{(i)}$ to denote the cumulative distribution function and the probability density function of $\mathcal{D}^{(i)}$, and we make the following assumption.
\begin{assumption}\label{assumption:distribution}
    Throughout this paper, for each bidder $i$, $\Theta^{(i)}$ is bounded, with $\Theta^{(i)}=[0,B^{(i)}]$, and $\mathcal{D}^{(i)}$ is a continuous distribution with positive probability densities on its support (and so with a strictly increasing $D^{(i)}$).
    When assuming \emph{identical type distribution}, we use $\mathcal{D}$ to refer to the common distribution. Its cumulative distribution function is $D(\cdot)$ and its support is $[0,B]$. We again assume $\mathcal{D}$ is a continuous distribution with the same assumption above.
\end{assumption}
Let $\theta^{(-i)}$ denote the profile of private types excluding bidder $i$'s type $\theta^{(i)}$. 
We assume that each bidder $i$ only knows her own type $\theta^{(i)}$, and holds the beliefs over $\theta^{(-i)}$ according to $\mathcal{D}^{(-i)}=\Pi_{i'\in N\setminus\{i\}}\mathcal{D}^{(i')}$.
The auctioneer observes only all distributions $\{\mathcal{D}^{(i)}\}_{i\in N}$ but not private types.
Each bidder's value for her advertisement is determined by a common valuation function $v:(\bigcup_i\Theta^{(i)})\times Q\rightarrow\mathbb{R}_{\ge 0}$.
The \emph{value} of a bidder with type $\theta$ and realized quality $q$ is denoted by $v(\theta,q)$.
We assume $v$ is \emph{separable}, with $v(\theta,q)=\theta\cdot \vq(q)$ for some strictly increasing function $\vq:Q\rightarrow\mathbb{R}_{> 0}$.
We denote $\nu_j=\mathbb{E}_{q\sim \mathbf{s}_j}[\vq(q)]$, and we have $\nu_1<\nu_2<\cdots<\nu_m$ as $\vq$ is an increasing function and $\mathbf{s}_{j_2}$ first-order stochastically dominates $\mathbf{s}_{j_1}$ whenever $j_2>j_1$.

It is important to remark that the quality $q\in Q$ of the ad is only observed by the bidders and the auctioneer after the advertising event is over, which is also standard in the contract theory literature.
In practice, the quality $q$ can be measured by users' feedback, click-through rates, and many other factors.
In particular, for every losing bidder of the auction, her ad does not have a chance to be shown, and the quality is never realized.
Before the winner is announced, each bidder with action $j$ can only know the distribution $\mathbf{s}_j$ of the qualities.

\paragraph{On large number of bidders.}
In many parts of our paper, we assume the number of bidders $n$ is large with the assumption $n\to\infty$ or with the asymptotic analysis.
We assume that all the other parameters, $\{c_1,\ldots,c_m\},\{\mathbf{s}_1,\ldots,\mathbf{s}_m\}$, as well as each $\mathcal{D}^{(i)}$ and $B^{(i)}$, do not depend on $n$.

\paragraph{Auction and contract.}
The auctioneer runs an auction $\mathcal{M}=(\mathbf{x},\mathbf{p})$ with allocation rule $\mathbf{x}$ and payment rule $\mathbf{p}=(p^{(1)},\ldots,p^{(n)})$.
The allocation rule $\mathbf{x}$ takes those $n$ bids from the bidders as inputs and outputs a winner from the $n$ bidders.
For each $i\in [n]$, the payment rule $p^{(i)}$ takes those $n$ bids as inputs and outputs the payment for bidder $i$.

We make the following assumption for the auction rule.
\begin{assumption}\label{assumption:auction}
Throughout this paper, $\mathcal{M}=(\mathbf{x},\mathbf{p})$ satisfies the following. 
    \begin{itemize}
    \item The allocation rule $\mathbf{x}$ allocates the ad slot to the bidder with the highest bid. If multiple bidders have the same highest bid, the tie is broken uniformly at random.
    \item For each bidder $i\in [n]$, the payment rule $p^{(i)}$ takes those $n$ bids as input and outputs the payment for bidder $i$. 
    We assume that each $p^{(i)}$ is weakly increasing in bidder $i$'s bid.
    \item Only the winner has a nonnegative payment, and every other bidder pays $0$.
\end{itemize}
\end{assumption}

We will mainly focus on the second-price auction (SPA), denoted by $\mathcal{M}_{\spa}=(\mathbf{x},\mathbf{p}_{\spa}=(p_{\spa}^{(1)},\ldots,p_{\spa}^{(n)}))$ where $\mathbf{x}$ respects Assumption~\ref{assumption:auction} and each $p_{\spa}^{(i)}$ is defined as follows: $p_{\spa}^{(i)}({\tt b}^{(1)},\ldots,{\tt b}^{(n)})=0$ if bidder $i$ loses and $p_{\spa}^{(i)}({\tt b}^{(1)},\ldots,{\tt b}^{(n)})=\max_{j\neq i}{\tt b}^{(j)}$ if bidder $i$ wins.
We will also consider the first-price auction (FPA) $\mathcal{M}_{\fpa}=(\mathbf{x},\mathbf{p}_{\fpa}=(p_{\fpa}^{(1)},\ldots,p_{\fpa}^{(n)}))$ where $\mathbf{x}$ respects Assumption~\ref{assumption:auction} and each $p_{\fpa}^{(i)}$ is defined as follows: $p_{\fpa}^{(i)}({\tt b}^{(1)},\ldots,{\tt b}^{(n)})=0$ if bidder $i$ loses and $p_{\fpa}^{(i)}({\tt b}^{(1)},\ldots,{\tt b}^{(n)})={\tt b}^{(i)}$ if bidder $i$ wins.
We remark that there are many other auctions satisfying Assumption~\ref{assumption:auction}, for example, the \emph{average-between-first-and-second-price-auction} proposed by~\citet{hartman2025s}.

The auctioneer also commits to a contract $T:Q\rightarrow \mathbb{R}_{\ge 0}$, which is a transfer/reward rule to incentivize bidders to take a higher-cost action, thus improving their ad qualities.
If bidder $i$ wins the auction with realized quality $q$, bidder $i$ receives a \emph{transfer} $T(q)$ from the auctioneer.
The transfer is only made to the winning bidder.
We say a contract $T$ is
\begin{itemize}
    \item \emph{monotone}, if $T(q)$ is increasing in $q$.
    \item \emph{linear}, if $T(q)=t\cdot q$ for some constant $t\ge 0$.
\end{itemize}
A linear contract clearly satisfies monotonicity.
Let $\beta_j=\mathbb{E}_{q\sim\mathbf{s}_{j}}[T(q)]$ be the expected transfer given to the winner when action $j$ is played.
The auctioneer will enjoy an extra long-term utility $\Omega:Q\rightarrow \mathbb{R}$ from the quality of the winning ad.
We assume $\Omega(q)=\omega\cdot q$ for some constant $\omega>0$.

When the winner of the auction plays action $j$ with realized quality $q$ and pays $p$, the auctioneer's ex-post revenue is given by $$R=p-T(q)+\Omega(q)=p-T(q)+\omega q.$$
Before the quality is realized, the auctioneer's expected revenue over quality is $$R=p-\E_{q\sim \mathbf{s}_j}[T(q)-\omega q]=p-\beta_j+\omega \mu_j.$$

\begin{table}[t]
    \centering\small
    \begin{tabular}{c|c||c|c}
    \toprule
        $i\in N$ & Index of a bidder & $j\in [m]$ & Index of an action\\
        $\mathcal{D}^{(i)}$ & Type distribution of bidder $i$ & $c_j$ & Cost of action $j$ \\
        $[0,B^{(i)}]$ & support of $\mathcal{D}^{(i)}$ & $\mathbf{s}_j$ & Quality distribution under action $j$ \\
        $\theta^{(i)}$ & Private type of bidder $i$ & $\mu_j:=\mathbb{E}_{q\sim \mathbf{s}_j}[q]$ & Expected quality under action $j$ \\
        $a^{(i)}$ & Action strategy of bidder $i$ & $\nu_j:=\mathbb{E}_{q\sim \mathbf{s}_j}[\vq(q)]$ & $\vq$'s expectation under action $j$ \\
        $b^{(i)}$ & Bidding strategy of bidder $i$ & $T(q)$ & Transfer function \\
        $q\in Q$  & Quality & $\beta_j:=\mathbb{E}_{q\sim \mathbf{s}_j}[T(q)]$ & Expected transfer under action $j$ \\
        $v(\theta,q)=\theta \vq(q)$ & Separable valuation function & $\Omega(q)=\omega q$ & Auctioneer's benefit function from quality \\
    \bottomrule
    \end{tabular}
    \caption{Table of Notations}
    \label{tab:notations}
\end{table}

\subsection{Strategy, Utility, and Equilibrium}
Each bidder's strategy $\sigma^{(i)}=(a^{(i)}, b^{(i)})$ consists of an action strategy $a^{(i)}$, which maps from her private type space to the action space, and a bidding strategy $b^{(i)}$, which maps from her private type space to a nonnegative bid.

Fix a strategy profile $\Sigma=(\sigma^{(1)},\ldots,\sigma^{(n)})$.
Let $\Sigma^{(-i)}$ denote the strategies except bidder $i$.
Given each bidder's type, bidder $i$'s expected utility over quality is defined as
$$u^{(i)}=
\left\{\begin{array}{ll}
\mathbb{E}_{q\sim\mathbf{s}_j}\left[v(\theta^{(i)},q)+T(q)\right]-p-c_j & \mbox{if bidder }i\mbox{ wins the auction} \\
-c_j & \mbox{otherwise}
\end{array}\right.,$$
where $p=p^{(i)}\left(b^{(1)}(\theta^{(1)}),\ldots,b^{(n)}(\theta^{(n)})\right)$ denotes the payment of bidder $i$.
Then, bidder $i$'s \exan expected utility is
$$U^{(i)}(\Sigma) = \E_{\theta^{(i)}\sim\mathcal{D}^{(i)},\theta^{(-i)}\sim \mathcal{D}^{(-i)}}\!\left[
u^{(i)}\!\left(\theta^{(i)},\Sigma(\theta^{(-i)};\theta^{(i)})\right)
\right].$$
Given bidder $i$'s type $\theta^{(i)}$, bidder $i$'s interim expected utility is
$$
U^{(i)}(\theta^{(i)},\Sigma)
=
\E_{\theta^{(-i)}\sim \mathcal{D}^{(-i)}}\!\left[
u^{(i)}\!\left(\theta^{(i)},\Sigma(\theta^{(-i)};\theta^{(i)})\right)
\right].
$$

Below, we define Bayes Nash equilibrium in its interim form.
Throughout our paper, we only consider pure-strategy Bayes Nash equilibrium.
\begin{definition}
 A strategy profile $\Sigma = (\sigma^{(1)}, \ldots, \sigma^{(n)})$ forms a \emph{Bayes Nash equilibrium} if, for every bidder $i\in N$ and every type $\theta^{(i)}\in\Theta^{(i)}$, playing action $a^{(i)}(\theta^{(i)})$ and bidding $b^{(i)}(\theta^{(i)})$ is weakly better than switching to any other pair $(j,{\tt b})$ of action and bid, i.e.,
$$
\E_{\theta^{(-i)}\sim\mathcal{D}^{(-i)}}\left[u^{(i)}\left(\theta^{(i)},(\Sigma^{(-i)}(\theta^{(-i)});\sigma^{(i)}(\theta^{(i)}))\right)\right]
\ge
\E_{\theta^{(-i)}\sim\mathcal{D}^{(-i)}}\left[u^{(i)}\left(\theta^{(i)},(\Sigma^{(-i)}(\theta^{(-i)});(j,{\tt b}))\right)\right]
$$
for any action $j\in [m]$ and any bid ${\tt b}\in\mathbb{R}_{\ge0}$.
\end{definition}

\section{Second-Price Auction: Equilibria Analysis and Revenue Maximization}
\label{sec:spa}
We start by analyzing the second-price auction, which is the easiest to study.
Although we will show the revenue equivalence in the next section (so an optimal contract is largely independent of auction rules), the exact optimal contract parameter is easiest to derive under the second-price auction.
Roughly speaking, we will prove the following results in this section.
\begin{enumerate}
    \item Every bidder has a dominant bidding strategy when fixing an action (Proposition~\ref{prop:naturalbidding}).
    \item Bayes Nash equilibria always exist, and each bidder's action strategy must be of a ``threshold form'' (defined later) in every equilibrium (Theorems~\ref{thm:Bayesian-EquilibriaCharacterization}, \ref{thm:BNE-existence}).
    \item When the bidders' type distributions are identical, there exists a unique symmetric Bayes Nash equilibrium (Theorem~\ref{thm:BNE-identical}).
    \item For identical type distributions, among all linear contracts $T(q)=tq$, as $n\to\infty$, setting $t=\omega$ maximizes the revenue when assuming the unique symmetric Bayes Nash equilibrium is played, i.e., fully passing through the quality value of the auctioneer optimizes the revenue (see Theorem~\ref{thm:BNE-optimalt} and the discussions thereafter);
    \item compared with the setting where a conventional second-price auction without reward is used (i.e., with $t=0$), the use of optimal contracts yields revenue improvements that are almost proportional to the maximum possible gain of the quality score (see the discussions following Theorem~\ref{thm:BNE-optimalt}).
\end{enumerate}

For the ease of illustrating the main ideas, we discuss the case with two actions $m=2$ in this section.
Most of our results extend to the case with general $m$.
Specifically, the main results for the unique existence of Bayes Nash equilibrium and the optimality of setting $t=\omega$ continue to hold for general numbers of actions.
We defer the analysis for general $m$ to Sect.~\ref{sect:Incomplete-m-actions}.

\subsection{Natural Bidding Strategy for Second-Price Auction}
\label{sec:naturalbidding}

\begin{proposition}\label{prop:naturalbidding}
    Under the second-price auction, for each bidder $i$ with type $\theta^{(i)}$ who has decided to play action $j$, the bidding strategy $$b^{(i)}(\theta^{(i)})=\E_{q\sim \mathbf{s}_j}[v(\theta^{(i)},q)+T(q)]=\theta^{(i)}\cdot\nu_j+\beta_j$$ is a dominant bidding strategy under any contract $T$.
\end{proposition}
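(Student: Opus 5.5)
The plan is to reduce the claim to the classical dominance of truthful bidding in a second-price auction. The key fact is that once bidder $i$ has committed to action $j$, her cost $c_j$ is sunk. It is paid whether or not she wins. Since $T$ and $\mathbf{s}_j$ are fixed, the expected quality-dependent gain from winning is a deterministic number, namely
$$w:=\E_{q\sim\mathbf{s}_j}[v(\theta^{(i)},q)+T(q)]=\theta^{(i)}\nu_j+\beta_j.$$
Here I use separability, $v(\theta,q)=\theta\vq(q)$, together with the definitions of $\nu_j$ and $\beta_j$. So, for any fixed profile of the other bidders' bids, the utility of bidder $i$ from bid ${\tt b}$ is $-c_j$ plus either $w-p$ (if she wins and pays $p$) or $0$ (if she loses). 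The term $-c_j$ does not depend on ${\tt b}$, so it can be dropped. What remains is exactly the quasi-linear utility of a bidder with value $w$ in a standard single-item second-price auction.

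The central step is to fix an arbitrary realization of the other bids, including bids generated by arbitrary (even non-equilibrium) strategies of the others. Let $h=\max_{k\neq i}{\tt b}^{(k)}$. Then I compare bidding $w$ against any deviation ${\tt b}$, case by case.
\begin{itemize}
\item If $w>h$, bidding $w$ wins outright and yields $w-h>0$. Any ${\tt b}>h$ gives the same outcome. Any ${\tt b}\le h$ yields at most $w-h$ on a tie, or $0$ on a loss, and hence no more than $w-h$.
\item If $w<h$, bidding $w$ loses and yields $0$. Any ${\tt b}$ that wins, possibly through a tie, pays $h$ and gets $w-h<0$.
\item If $w=h$, every outcome, winning or losing, yields $0$.
\end{itemize}
For ties I need to check that uniform random tie-breaking only mixes between the winning and losing payoffs, $w-h$ and $0$. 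When $w=h$ these coincide, so nothing breaks. Since truthful bidding is weakly better pointwise for every realization of the others' bids, it is weakly better in expectation over $\theta^{(-i)}$ and over any strategies of the others. This is the meaning of dominance used here.

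The only subtle point, and the step I expect to need the most care, is justifying that expected quality can be taken before comparing bids. Quality is realized only after allocation, and its distribution $\mathbf{s}_j$ depends solely on bidder $i$'s own action, not on the bid or on the other bidders. By linearity of expectation, the winning payoff is $w-p$ with $p$ determined by the bids alone. One should also note that $w\ge 0$, because $\theta^{(i)}\ge0$, $\vq>0$ and $T\ge0$. This makes $w$ a legal nonnegative bid. The statement also holds for every contract $T$, since $T$ enters only through the constant $\beta_j$.
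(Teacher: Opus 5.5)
Your proof is correct and follows essentially the same route as the paper: you fix the others' bids, observe that the sunk cost $c_j$ does not depend on the bid, and compare bidding $\theta^{(i)}\nu_j+\beta_j$ against any deviation by cases relative to the highest competing bid. Your explicit handling of ties under uniform tie-breaking is slightly more careful than the paper's, but the argument is the same classical second-price dominance argument.
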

\begin{proof}
    The reason is similar to the analysis of the dominant strategy in the classical second-price auction.
    Fix an arbitrary bidding profile that contains all $n$ bidders' bids, and in which $b^{(i)}(\theta^{(i)})=\mathbb{E}_{q\sim \mathbf{s}_j}[v(\theta^{(i)},q)+T(q)]$.
    
    Assume that bidder $i$ is the winner, then $u^{(i)}\ge -c_j$.
    Bidding any value higher than the second-highest bid does not change the outcome or the payment, while bidding a value below the second-highest bid makes bidder $i$ lose the auction, and her utility will decrease to $-c_j\le u^{(i)}$.

    On the other hand, assume that bidder $i$ loses the auction by bidding $b^{(i)}(\theta^{(i)})$, then $u^{(i)}=-c_j$.
    Let ${\tt b}_{\max}$ denote the highest bid, then we have ${\tt b}_{\max}\ge \mathbb{E}_{q\sim \mathbf{s}_j}[v(\theta^{(i)},q)+T(q)]$.
    If bidder $i$ deviates to another bid yet still loses the auction, her utility remains $-c_j$,
    If bidder $i$ instead deviates to a bid under which she becomes the winner, her utility becomes $\mathbb{E}_{q\sim \mathbf{s}_j}[v(\theta^{(i)},q)+T(q)]-{\tt b}_{\max}-c_j\le-c_j$ and thus does not increase.
\end{proof}
    
This implies that for each bidder under type $\theta$, every strategy is weakly dominated by a strategy with the same action strategy and $b(\theta)=\mathbb{E}_{q\sim \mathbf{s}_j}[v(\theta,q)+T(q)]$.
Such is called a \emph{natural strategy}.
\begin{definition}
    Considering the second-price auction, for a bidder under type $\theta$, a strategy $\sigma=(a,b)$ is called a \emph{natural strategy} if the bidding strategy $b(\theta)=\mathbb{E}_{q\sim \mathbf{s}_j}[v(\theta,q)+T(q)]$ where $j=a(\theta)$.
\end{definition}

To avoid those well-known unnatural Nash equilibria in second-price auction (e.g., one bidder bids $\infty$ while the others bid $0$), \emph{we will assume bidders only play natural strategies from now on.}
For simplicity, we will only consider a bidder's action when referring to her \emph{strategy} $\sigma^{(i)}$, which implicitly implies the corresponding natural bid.

In the following, when we consider an equilibrium, we restrict it to a natural-strategy equilibrium, that is, always bidding $\mathbb{E}_{q\sim \mathbf{s}_{a(\theta)}}[v(\theta,q)+T(q)]$ under action strategy $a(\cdot)$.
We may write equilibrium instead of natural-strategy equilibrium for short.

\subsection{Equilibrium Analysis}
In this section, we characterize Bayes Nash equilibria.
The technical heart of all theorems in this section is the following lemma, which shows a monotonicity property of a particular bidder when fixing the strategy of the remaining $n-1$ bidders.
Consider a bidder $i$ and fix the strategies $\Sigma^{(-i)}$ of the remaining bidders. 
Let $U_j^{(i)}(\theta^{(i)}, \Sigma^{(-i)})$ be the expected utility of bidder $i$ with type $\theta^{(i)}$ when choosing action $j$. 
We omit $\Sigma^{(-i)}$ and the superscript $(i)$ below, as it is clear from the context.

\begin{lemma}[Key Monotonicity Lemma]\label{prop:monotone_diff}
    Consider the second-price auction and suppose $T(\cdot)$ is monotone.
    For any fixed $\Sigma^{(-i)}$ and any two actions $j_1$ and $j_2$ with $j_1<j_2$, the expected utility difference for a type-$\theta$ bidder $i$, $U_{j_2}(\theta)- U_{j_1}(\theta)$, is increasing in $\theta$.
    In addition, the equation $U_{j_2}(\theta)- U_{j_1}(\theta)=0$ (with the variable $\theta$) has at most one solution.
\end{lemma}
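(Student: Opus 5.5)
Fix $\Sigma^{(-i)}$; every opponent uses a natural strategy, so each opponent's bid is a random variable determined by its type. Let $Y$ be the highest opposing bid, with distribution $G$. Since the opponents' types have continuous distributions, ties can be treated separately or absorbed into the integrals. Under the second-price auction, a type-$\theta$ bidder $i$ playing action $j$ bids $x_j(\theta)=\theta\nu_j+\beta_j$ by Proposition~\ref{prop:naturalbidding}. Her interim utility is
\begin{equation*}
U_j(\theta)=\E\big[(x_j(\theta)-Y)^+\big]-c_j=\int_0^{x_j(\theta)} G(y)\,dy - c_j .
\end{equation*}
Here I use the standard identity $\E[(x-Y)^+]=\int_0^x \Pr[Y\le y]\,dy$; a random tie-break does not matter because tying gives zero surplus. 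The point of this representation is that $U_j$ is a convex function of the scalar bid $x_j(\theta)$, and $x_j(\theta)$ is affine in $\theta$.

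The difference is then
\begin{equation*}
U_{j_2}(\theta)-U_{j_1}(\theta)=\int_{x_{j_1}(\theta)}^{x_{j_2}(\theta)} G(y)\,dy-(c_{j_2}-c_{j_1}).
\end{equation*}
Since $T$ is monotone and $\mathbf{s}_{j_2}$ dominates $\mathbf{s}_{j_1}$, we have $\beta_{j_2}\ge\beta_{j_1}$, and also $\nu_{j_2}>\nu_{j_1}$. Hence $x_{j_2}(\theta)>x_{j_1}(\theta)$ for $\theta\ge 0$. Differentiating in $\theta$ (the map is Lipschitz, so we can argue through a.e.\ derivatives or directly through increments) gives
\begin{equation*}
\frac{d}{d\theta}\big(U_{j_2}-U_{j_1}\big)=\nu_{j_2}G(x_{j_2}(\theta))-\nu_{j_1}G(x_{j_1}(\theta))\ \ge\ (\nu_{j_2}-\nu_{j_1})\,G(x_{j_1}(\theta))\ \ge 0,
\end{equation*}
using that $G$ is nondecreasing and $x_{j_2}\ge x_{j_1}$. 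This proves weak monotonicity. To avoid differentiability issues I would instead compare $\theta<\theta'$ directly: write each bid interval as the corresponding interval of $\theta$ and change variables, which reduces the comparison to the same pointwise inequality.

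Uniqueness of the root is the main obstacle, because the difference could be flat on an interval where $G$ vanishes. The derivative bound gives strict increase wherever $G(x_{j_2}(\theta))>0$. So I would split into two regions. In the region where $G(x_{j_2}(\theta))=0$, the integral is zero and the difference equals $-(c_{j_2}-c_{j_1})<0$, so it has no root there. For larger $\theta$, $G(x_{j_2}(\theta))>0$ holds. By monotonicity of $x_{j_2}$ and $G$ this remains true for all larger $\theta$, so the difference is strictly increasing there. Combined with the fact that the difference is constant and negative on the first region, this gives at most one solution of $U_{j_2}(\theta)-U_{j_1}(\theta)=0$. The delicate point is the strictness argument, namely $\nu_{j_2}G(x_{j_2})-\nu_{j_1}G(x_{j_1})\ge(\nu_{j_2}-\nu_{j_1})G(x_{j_2})$ when $G(x_{j_1})\le G(x_{j_2})$. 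I would verify it carefully in integrated form: over any interval $[\theta,\theta']$ in the second region, the increment is at least $(\nu_{j_2}-\nu_{j_1})\int_\theta^{\theta'}G(x_{j_2}(s))\,ds>0$.
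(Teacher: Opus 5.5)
Your proof is correct, and it reaches the result by a somewhat different route from the paper.

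\textbf{The paper's route.} The paper conditions on the realized highest competing bid $x$. It shows that for every fixed $x$, the ex-post difference $V_{j_2}(\theta;x)-V_{j_1}(\theta;x)$ is a continuous piecewise-linear function of $\theta$ with slopes $0$, $\nu_{j_2}$, $\nu_{j_2}-\nu_{j_1}$, and then integrates against the density of $x$. Uniqueness is proved by contradiction: flatness on an interval forces the zero-slope regime (losing even with action $j_2$) almost surely, and there the difference equals $-(c_{j_2}-c_{j_1})<0$.

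\textbf{Your route.} You average first, writing $U_j(\theta)=\int_0^{x_j(\theta)}G(y)\,dy-c_j$. Your slope $\nu_{j_2}G(x_{j_2}(\theta))-\nu_{j_1}G(x_{j_1}(\theta))$ is exactly the expectation of the paper's pointwise slopes.

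\textbf{What each buys.} Your version handles ties exactly, since a tie gives zero surplus, so you never need the highest competing bid to have a density. The paper assumes such a density in order to dismiss ties. Your explicit lower bound $(\nu_{j_2}-\nu_{j_1})G(x_{j_2}(\theta))$ on the slope also turns uniqueness into a direct two-region argument rather than a contradiction. Finally, it exposes the envelope structure (slope $=\nu_j\times$ winning probability) that the paper later relies on in the first-price and revenue-equivalence sections. The paper's route is more elementary and gives a stronger ex-post statement: monotonicity holds for every realization of the competing bid.

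\textbf{Two small points.}
\begin{itemize}
\item At $\theta=0$ you only have $x_{j_2}(0)\ge x_{j_1}(0)$, with equality e.g.\ when $T\equiv 0$. This weak inequality is all you actually use, so nothing breaks.
\item Your strictness bound rests on $\nu_{j_1}G(x_{j_1})\le \nu_{j_1}G(x_{j_2})$, which needs $\nu_{j_1}\ge 0$. This holds because $\vq>0$, and it is worth stating explicitly.
\end{itemize}
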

\begin{proof}
    Suppose the remaining $n-1$ bidders play $\Sigma^{(-i)}$.
    Let $F$ be the cumulative distribution function of the random variable describing the highest bid among these $n-1$ bidders.
    Let $f$ be the corresponding probability density function.
    Let $V_{j_1}(\theta;{\tt b}_{\max})$ be the expected utility of bidder $i$ with type $\theta$ when the highest bid among the remaining bidders is fixed to be ${\tt b}_{\max}$.
    Recall that bidder $i$'s natural bidding strategy for action $j_1$ is to bid $\E_{q\sim\mathbf{s}_{j_1}}[v(\theta,q)+T(q)]=\theta\cdot\nu_{j_1}+\beta_{j_1}$.
    We have 
    $$V_{j_1}(\theta;{\tt b}_{\max})=\left\{\begin{array}{ll}
        \theta\cdot\nu_{j_1}+\beta_{j_1}-{\tt b}_{\max}-c_{j_1} & \mbox{if }\theta\cdot\nu_{j_1}+\beta_{j_1}> {\tt b}_{\max}\\
        -c_{j_1} & \mbox{otherwise}
    \end{array}\right..$$
    Let $\beta_{j_2}$ and $V_{j_2}(\theta;{\tt b}_{\max})$ be defined and calculated similarly.
    As a remark, for $\theta\cdot\nu_{j_1}+\beta_{j_1}= {\tt b}_{\max}$, the value of $V_{j_1}(\theta;{\tt b}_{\max})$ is supposed to be somewhere in between $\theta\cdot\nu_{j_1}+\beta_{j_1}-{\tt b}_{\max}-c_{j_1}$ and $-c_{j_1}$ depending on the number of bidders who submit equally highest bids (due to our tie-breaking rule). However, we can arbitrarily define the value of $V_{j_1}(\theta;{\tt b}_{\max})$ since ${\tt b}_{\max}$ is drawn from a continuous distribution $f$ and the probability of $\theta\cdot\nu_{j_1}+\beta_{j_1}= {\tt b}_{\max}$ is $0$. We set it to $-c_{j_1}$.

    We have
    $U_{j_1}(\theta)=\int_{0}^\infty f(x)\cdot V_{j_1}(\theta;x)dx$ and $U_{j_2}(\theta)=\int_{0}^\infty f(x)\cdot V_{j_2}(\theta;x)dx$,
    and
    \begin{equation}\label{eqn:prop-monotone}
    U_{j_2}(\theta)- U_{j_1}(\theta)=\int_0^\infty f(x)\left(V_{j_2}(\theta;x)-V_{j_1}(\theta;x)\right)dx.
    \end{equation}
    To show weak monotonicity of $U_{j_2}(\theta)- U_{j_1}(\theta)$, it suffices to show that $\phi(\theta):=V_{j_2}(\theta;x)-V_{j_1}(\theta;x)$ is increasing in $\theta$ for any \emph{fixed} $x$ (i.e., holds point-wise).

    By monotonicity of $T$ and $\nu_{j_2}>\nu_{j_1}$, for any $\theta>0$, we have   $\theta\cdot\nu_{j_1}+\beta_{j_1}<\theta\cdot\nu_{j_2}+\beta_{j_2},$
    and both sides of the inequalities are increasing in $\theta$.
    By considering three regimes of $\theta$, we have
    $$\phi(\theta)=\left\{\begin{array}{ll}
        -c_{j_2}+c_{j_1} &  \mbox{if }\theta\cdot\nu_{j_1}+\beta_{j_1}<\theta\cdot\nu_{j_2}+\beta_{j_2}\leq x\\
        \theta\cdot\nu_{j_2}+\beta_{j_2}-x-c_{j_2}+c_{j_1} & \mbox{if } \theta\cdot\nu_{j_1}+\beta_{j_1}\leq x<\theta\cdot\nu_{j_2}+\beta_{j_2}\\
        \theta(\nu_{j_2}-\nu_{j_1})+(\beta_{j_2}-\beta_{j_1})-(c_{j_2}-c_{j_1}) & \mbox{if }x<\theta\cdot\nu_{j_1}+\beta_{j_1}<\theta\cdot\nu_{j_2}+\beta_{j_2}
    \end{array}\right..$$
    It is straightforward to check that $\phi(\theta)$ is increasing in $\theta$. Specifically, it is a continuous function with three linear segments: $\phi(\theta)$ is constant in the first regime, then it increases with slope $\nu_{j_2}$ in the second regime, and finally it increases with slope $(\nu_{j_2}-\nu_{j_1})$ in the third regime.
    This proves the first part of the proposition.

    To show the second part of the proposition, if the equation $U_{j_2}(\theta)- U_{j_1}(\theta)=0$ has two different solutions $\theta_1$ and $\theta_2$ with $\theta'<\theta''$, we must have $U_{j_2}(\theta)- U_{j_1}(\theta)=0$ for all $\theta\in[\theta',\theta'']$ due to the weak monotonicity we have shown.
    Since $\phi(\theta)$ is a constant only on the first regime, it must be that the first regime condition $\theta\cdot\nu_{j_1}+\beta_{j_1}<\theta\cdot\nu_{j_2}+\beta_{j_2}\leq x$ happens with probability $1$ for each $\theta\in[\theta',\theta'']$.
    However, we have $U_{j_2}(\theta)- U_{j_1}(\theta)=-c_{j_2}+c_{j_1}<0$, which leads to a contradiction.
    Intuitively, weak monotonicity of $U_{j_2}(\theta)- U_{j_1}(\theta)$ can only happen for those $\theta$'s where bidder $i$ never has a chance to win even if playing action $j_2$; in this case, playing $j_1$ with sunk cost $c_{j_1}$ is strictly better than playing $j_2$ with sunk cost $c_{j_2}$.
\end{proof}

Lemma~\ref{prop:monotone_diff} provides a useful tool to characterize Bayes Nash equilibria, as it characterizes \emph{the best response} for each bidder.
Taking the case with two actions as an example, fixing the strategies of the remaining bidders, bidder $i$'s best response is a \emph{threshold strategy} with the threshold between the two actions $\theta^\ast$ being exactly at the point where $U_1(\theta^\ast)=U_2(\theta^\ast)$.
By (both parts of) Lemma~\ref{prop:monotone_diff}, for $\theta<\theta^\ast$, we have $U_1(\theta)>U_2(\theta)$, so playing action $1$ is better;
Likewise, for $\theta>\theta^\ast$, we have $U_2(\theta)>U_1(\theta)$, so playing action $2$ is better.
Consequently, Lemma~\ref{prop:monotone_diff} implies Theorem~\ref{thm:Bayesian-EquilibriaCharacterization} straightforwardly.

\begin{theorem}[Characterization for BNE under SPA]\label{thm:Bayesian-EquilibriaCharacterization}
    Consider the second-price auction.
    Suppose $T(\cdot)$ is monotone and $m=2$.
    In any Bayes Nash equilibrium, each bidder $i$'s strategy is characterized by a threshold $\theta^{\ast(i)}\in[0,B^{(i)}]$ such that action $2$ is played if $\theta^{(i)}>\theta^{*(i)}$ and action $1$ is played if $\theta^{(i)}\leq\theta^{*(i)}$.
\end{theorem}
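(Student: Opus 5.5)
The plan is to fix an arbitrary Bayes Nash equilibrium $\Sigma$ and an arbitrary bidder $i$, and then show that bidder $i$'s action strategy must be a best response of threshold form to $\Sigma^{(-i)}$. Holding $\Sigma^{(-i)}$ fixed, set $\Delta(\theta):=U_2(\theta)-U_1(\theta)$ on $[0,B^{(i)}]$. In equilibrium, every type $\theta$ must play an action maximizing $U_j(\theta)$ (with the natural bid). Consequently, whenever $\Delta(\theta)>0$ the type plays action $2$, and whenever $\Delta(\theta)<0$ it plays action $1$.

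By Lemma~\ref{prop:monotone_diff}, $\Delta$ is weakly increasing and has at most one zero. I will first argue that $\Delta$ is continuous in $\theta$. Writing $\Delta(\theta)=\int f(x)\phi_x(\theta)\,dx$, each $\phi_x$ is continuous and uniformly bounded on $[0,B^{(i)}]$, so continuity follows by dominated convergence. Should $F$ have atoms, I would handle this by noting that the integrand discontinuities have measure zero. Given continuity and monotonicity, the set $\{\theta:\Delta(\theta)>0\}$ is an interval of the form $(\theta^*,B^{(i)}]$, and the set $\{\theta:\Delta(\theta)<0\}$ is of the form $[0,\theta^*)$. Here $\theta^*$ is defined by cases:
\begin{itemize}
\item $\theta^*$ is the unique zero of $\Delta$ if one exists;
\item $\theta^*=B^{(i)}$ if $\Delta<0$ everywhere;
\item $\theta^*=0$ if $\Delta>0$ everywhere.
\end{itemize}
Uniqueness of the zero, which is the second part of Lemma~\ref{prop:monotone_diff}, ensures that no interval of indifferent types exists. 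The only type at which either action may be a best response is $\theta^*$ itself.

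It then remains to reconcile the boundary type with the statement's convention that action $1$ is played when $\theta^{(i)}\le\theta^{*(i)}$. At the single point $\theta=\theta^*$, an equilibrium could in principle have the bidder play action $2$. There are two routes. The first is to note that this is a measure-zero event under $\mathcal{D}^{(i)}$, so the threshold description holds almost surely and changing the boundary action affects no other bidder's utility; this justifies treating the choice at $\theta^*$ as a tie-breaking convention. The second, if the statement is meant literally for every type, is to observe that a boundary type playing action $2$ can be described by the same threshold with the strict inequality flipped, and to adopt the paper's convention for indifferent types. I expect this boundary bookkeeping, together with the continuity of $\Delta$ in the corner cases $\theta^*\in\{0,B^{(i)}\}$, to be the only delicate point; everything else follows directly from Lemma~\ref{prop:monotone_diff}. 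Applying the argument to each $i$ then gives the thresholds $\theta^{*(i)}$.
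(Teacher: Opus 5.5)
Your proposal is correct and is essentially the paper's argument: fix $\Sigma^{(-i)}$ and invoke Lemma~\ref{prop:monotone_diff} (weak monotonicity of $U_2-U_1$ plus at most one zero) to get a threshold best response. Your continuity step fills in details the paper leaves implicit, and it can be made sharper: each $\phi_x$ is continuous and piecewise linear with slopes in $[0,\nu_2]$, so $\Delta$ is in fact $\nu_2$-Lipschitz, and atoms or ties are harmless because a tied SPA winner pays exactly her natural bid and so gets $-c_j$.

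One correction to your boundary bookkeeping: when $\Delta>0$ on all of $[0,B^{(i)}]$, type $0$ is not indifferent; it must strictly play action $2$. This case does occur in asymmetric equilibria, e.g.\ $n=2$, types uniform on $[0,1]$, $\nu_1=1$, $\nu_2=2$, $\beta_2-\beta_1=10$, $c_2-c_1=1$, with one bidder always on action $1$ and the other always on action $2$. So this is a measure-zero exception to the literal statement with $\theta^{*(i)}\in[0,B^{(i)}]$, not a tie-breaking choice, and it is your second route (or a threshold below $0$, as in the appendix existence proof) that covers it.
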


This theorem can be generalized to any number $m$ of actions.
Instead of a single threshold $\theta^{*(i)}$, each bidder now has up to $m-1$ thresholds that separate the $m$ actions.
See Theorem~\ref{thm:Bayesian-EquilibriaCharacterization-m} in the appendix.

In the remaining part of Sect.~\ref{sec:spa}, we will stick to the case with two actions $m=2$ for the ease of illustrating the main ideas.
Our results are generalized to general numbers of actions in Sect.~\ref{sect:Incomplete-m-actions}.

The next theorem shows the existence of Bayes Nash equilibria with two actions under some mild technical assumptions.

\begin{theorem}[Existence of BNE under SPA]\label{thm:BNE-existence}
    Consider the second-price auction.
    Suppose $T(\cdot)$ is monotone and $m=2$, a Bayes Nash equilibrium (with the natural bidding strategy) always exists.
\end{theorem}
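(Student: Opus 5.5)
We reduce existence to a fixed point on the threshold profile, using the structure from Lemma~\ref{prop:monotone_diff} and Theorem~\ref{thm:Bayesian-EquilibriaCharacterization}. Any candidate equilibrium is described by a vector $\boldsymbol{\theta}^*=(\theta^{*(1)},\ldots,\theta^{*(n)})\in\prod_i[0,B^{(i)}]$, a compact convex set. Here bidder $i$ plays action $1$ with the natural bid $\theta\nu_1+\beta_1$ when $\theta\le\theta^{*(i)}$, and action $2$ with bid $\theta\nu_2+\beta_2$ otherwise.

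\textbf{Best-response map.} Fix the other thresholds $\boldsymbol{\theta}^{*(-i)}$, and let $\Delta^{(i)}(\theta;\boldsymbol{\theta}^{*(-i)})=U_2(\theta)-U_1(\theta)$. By Lemma~\ref{prop:monotone_diff}, this difference is weakly increasing in $\theta$ and has at most one zero. Define the best-response threshold $g^{(i)}(\boldsymbol{\theta}^{*(-i)})$ in three cases:
\begin{itemize}
\item it is $0$ if $\Delta^{(i)}(0)\ge 0$;
\item it is $B^{(i)}$ if $\Delta^{(i)}(B^{(i)})\le 0$;
\item otherwise it is the unique zero in $(0,B^{(i)})$.
\end{itemize}
Equivalently, $g^{(i)}=\sup\{\theta\in[0,B^{(i)}]:\Delta^{(i)}(\theta)\le 0\}$, with the convention that the sup of an empty set is $0$. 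The threshold strategy at $g^{(i)}$ is a best response for every type, since types below it weakly prefer action $1$ and types above it strictly prefer action $2$. A fixed point of $g=(g^{(1)},\ldots,g^{(n)})$ is then a Bayes Nash equilibrium, and Brouwer's theorem gives one once $g$ is shown to be continuous.

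\textbf{Continuity.} First, $\Delta^{(i)}(\theta;\boldsymbol{\theta}^{*(-i)})$ is jointly continuous. To see this, write $U_j(\theta)=\int V_j(\theta;x)\,dF(x)$, where $F$ is the CDF of the maximum opponent bid. That maximum is $\max_{i'\ne i}$ of bids $\theta^{(i')}\nu_{a}+\beta_{a}$. Each opponent's bid distribution is a mixture of two continuous pieces. Its CDF at level $x$ is the probability that $\theta^{(i')}\le\theta^{*(i')}$ and $\theta^{(i')}\nu_1+\beta_1\le x$, plus the probability that $\theta^{(i')}>\theta^{*(i')}$ and $\theta^{(i')}\nu_2+\beta_2\le x$. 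This quantity is continuous in $(x,\theta^{*(i')})$ because $\mathcal{D}^{(i')}$ is atomless. The bid distribution may have a gap, but it has no atoms, since $\nu_j>0$ and types are continuous. Since $V_j(\theta;x)$ is bounded and continuous in $(\theta,x)$ except on the line $x=\theta\nu_j+\beta_j$, which carries zero $F$-mass, weak convergence of $F$ gives joint continuity of $U_j$. An alternative is to write $U_j(\theta)=\int_0^{\theta\nu_j+\beta_j}F(x)\,dx-c_j$, which follows from integration by parts in a second-price auction; with this formula continuity is immediate, since $F$ depends continuously on the thresholds.

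Next, joint continuity of $\Delta^{(i)}$ together with uniqueness of the zero gives continuity of $g^{(i)}$. Take a sequence of opponent thresholds converging to a limit, and any limit point $\bar g$ of the corresponding $g^{(i)}$ values. By continuity, $\Delta^{(i)}(\bar g)$ has the sign pattern defining the best-response threshold at the limit: it equals $0$ in the interior case, and satisfies the correct inequality at the endpoints. Uniqueness then forces $\bar g$ to be that threshold.

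The main obstacle is the uniqueness step in the interior case. Lemma~\ref{prop:monotone_diff} guarantees at most one zero, but one still has to rule out $\Delta^{(i)}$ merely touching zero, as well as a limit that equals $0$ at an endpoint while the sign case differs. The plan is to handle these by noting that $\Delta^{(i)}\le 0$ on $[0,\bar g]$ and $\Delta^{(i)}>0$ above $\bar g$ at the limit; together with monotonicity, this makes the sup-definition of $g^{(i)}$ upper and lower semicontinuous. Once continuity is established, Brouwer's fixed point theorem applied to $g$ on $\prod_i[0,B^{(i)}]$ yields the equilibrium.
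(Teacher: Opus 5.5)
Your proposal is correct, but it applies Brouwer's theorem to a different map than the paper does.

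\textbf{Your route.} You apply Brouwer directly to the best-response threshold map $g=(g^{(1)},\ldots,g^{(n)})$ on $\prod_i[0,B^{(i)}]$. This forces you to show that the crossing point depends continuously on the opponents' thresholds. That follows, as you indicate, from joint continuity of $\Delta^{(i)}$ together with the strict single-crossing structure behind Lemma~\ref{prop:monotone_diff}. Specifically, $\Delta^{(i)}\equiv-(c_2-c_1)$ wherever bidder $i$ cannot win even with action $2$, and it is strictly increasing elsewhere. Hence $\Delta^{(i)}<0$ strictly below the crossing and $\Delta^{(i)}>0$ strictly above it.

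Your three-case limit argument already finishes the continuity proof, so the ``main obstacle'' you flag is not a real one. Tangential zeros are ruled out by weak monotonicity plus uniqueness of the zero.

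\textbf{The paper's route.} The paper avoids continuity of any zero map. It extends types to $[-B-2,B+2]$, choosing $B$ so that action $1$ is strictly preferred at the left end and action $2$ at the right end; this requires extending Lemma~\ref{prop:monotone_diff} to negative types. It then applies Brouwer to the adjustment map $x_i\mapsto x_i+\frac1W\bigl(U_1^{(i)}-U_2^{(i)}\bigr)$. That map is continuous as soon as the $U_j^{(i)}$ are, and the monotonicity lemma is used only to read off best responses at a fixed point. Corner cases (always action $1$, always action $2$) appear as fixed thresholds outside $[0,B^{(i)}]$.

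\textbf{Trade-off.} Your approach avoids the domain extension, the constants $B_L,B_R,W$, and negative types. Your identity $U_j(\theta)=-c_j+\int_0^{\theta\nu_j+\beta_j}F(x)\,dx$ also gives continuity more cleanly than the paper's critical-threshold computation. The paper's approach needs only continuity of utilities to build its map.

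\textbf{One small fix.} If $\Delta^{(i)}(0)>0$, you set $g^{(i)}=0$. Your convention ``action $1$ iff $\theta\le\theta^{*(i)}$'' then has type $0$ play the strictly worse action $1$. This violates the interim equilibrium condition at that single type. Let type $0$ play action $2$ in this case; the change has measure zero and does not affect anyone else's payoffs. The paper's allowance of $x_i<0$ accomplishes exactly this.
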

The proof applies Lemma~\ref{prop:monotone_diff} and Brouwer's fixed point theorem and is technically involved.
The full proof of this theorem is available in Sect.~\ref{sec:BNE-existence-proof}.

In the next, we assume bidders' type distributions are \emph{identically} $\mathcal{D}$ on $[0,B]$ (following the second part of Assumption~\ref{assumption:distribution}).
We assume linear contract $T(q)=tq$.
We show that a symmetric Bayes Nash equilibrium always exists and it has a simple form described in the next theorem.
Lemma~\ref{prop:monotone_diff} plays an important role again in the proof of the following theorem.

\begin{theorem}[Unique Symmetric BNE for Identical Type Distributions under SPA]\label{thm:BNE-identical}
    Consider the second-price auction.
    When bidders' type distributions are identical (following the second part of Assumption~\ref{assumption:distribution}) and there are two actions $1$ and $2$, upon a linear contract $T(q)=tq$, there is a symmetric Bayes Nash equilibrium where each bidder's strategy is characterized by $\theta^\ast\in[0,B]$ and is defined as
    $$a(\theta)=\left\{\begin{array}{ll}
        \mbox{action }1, & \mbox{if }\theta\leq\theta^* \\
        \mbox{action }2, & \mbox{if }\theta>\theta^*
    \end{array}\right..$$
    where $\theta^\ast$ is the solution to the equation
    \begin{equation}\label{eqn:thetastar}
        (D(\theta^\ast))^{n-1}\left(\theta^\ast(\nu_2-\nu_1)+t(\mu_2-\mu_1)\right)=c_2-c_1
    \end{equation}
    if a solution exists, and $\theta^\ast=B$ otherwise.

    In addition, the equilibrium stated above is the unique symmetric Bayes Nash equilibrium\footnote{Note that another symmetric Bayes Nash equilibrium is to play action $1$ if $\theta<\theta^*$ and action $2$ otherwise if a solution to $\theta^*$ exists. This tie-breaking of action at $\theta^*$ does not affect the expected revenue. Therefore, with a slight abuse of terminology, we say that the equilibrium in the theorem is unique.}.
\end{theorem}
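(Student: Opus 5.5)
By Theorem~\ref{thm:Bayesian-EquilibriaCharacterization}, in any Bayes Nash equilibrium each bidder plays a threshold strategy. In a symmetric one, all bidders therefore share a common threshold $\theta^*\in[0,B]$. The plan is to fix a candidate common threshold $\theta^*$ for the other $n-1$ bidders, compute bidder $i$'s utility gain $\Delta(\theta):=U_2(\theta)-U_1(\theta)$ from switching to action $2$, and evaluate it at $\theta=\theta^*$. Symmetric equilibrium then amounts to $\theta^*$ being a best-response threshold against itself. By Lemma~\ref{prop:monotone_diff}, $\Delta$ is increasing with at most one zero. So bidder $i$'s best response is the threshold $\theta^*$ exactly when $\Delta(\theta^*)=0$ (interior case), or when $\theta^*=B$ and $\Delta(B)\le 0$, or when $\theta^*=0$ and $\Delta(0)\ge 0$.

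The key computation is $\Delta(\theta^*)$ when the others use threshold $\theta^*$. An opponent of type $\theta'$ bids $\theta'\nu_1+t\mu_1$ if $\theta'\le\theta^*$ and $\theta'\nu_2+t\mu_2$ otherwise. Bidder $i$ at type $\theta^*$ bids either $\theta^*\nu_1+t\mu_1$ (action $1$) or $\theta^*\nu_2+t\mu_2$ (action $2$).
\begin{itemize}
\item Any opponent with $\theta'>\theta^*$ bids $\theta'\nu_2+t\mu_2>\theta^*\nu_2+t\mu_2$. Hence bidder $i$ loses under either action whenever some opponent exceeds $\theta^*$, and the utility difference in that event is just $-(c_2-c_1)$.
\item On the event that all opponents are $\le\theta^*$, which has probability $D(\theta^*)^{n-1}$, the highest opposing bid is $x=\max\theta'\nu_1+t\mu_1\le \theta^*\nu_1+t\mu_1$. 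Bidder $i$ wins under both actions (up to measure zero). The payment $x$ cancels in the difference, leaving $\theta^*(\nu_2-\nu_1)+t(\mu_2-\mu_1)-(c_2-c_1)$.
\end{itemize}
Therefore
\[
\Delta(\theta^*)=D(\theta^*)^{n-1}\bigl(\theta^*(\nu_2-\nu_1)+t(\mu_2-\mu_1)\bigr)-(c_2-c_1),
\]
and $\Delta(\theta^*)=0$ is exactly equation~\eqref{eqn:thetastar}.

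Next define $g(\theta):=D(\theta)^{n-1}\bigl(\theta(\nu_2-\nu_1)+t(\mu_2-\mu_1)\bigr)$. It is continuous and strictly increasing on $[0,B]$, since $D$ is strictly increasing, $\nu_2>\nu_1$ and $\mu_2\ge\mu_1$. It also satisfies $g(0)=0<c_2-c_1$. Hence~\eqref{eqn:thetastar} has at most one solution.
\begin{itemize}
\item If a solution exists, it lies in $(0,B]$. Lemma~\ref{prop:monotone_diff} then gives $\Delta<0$ below $\theta^*$ and $\Delta>0$ above it, so the threshold profile is a best response and hence a symmetric equilibrium.
\item If no solution exists, then $g(B)<c_2-c_1$, so $\Delta(B)<0$ with $\theta^*=B$. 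Since $\Delta$ is increasing, $\Delta<0$ on all of $[0,B]$, so playing action $1$ everywhere is a best response.
\end{itemize}

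For uniqueness, take any symmetric equilibrium. It has a common threshold $\tilde\theta$ by Theorem~\ref{thm:Bayesian-EquilibriaCharacterization}. The computation above gives $\Delta(\tilde\theta)=g(\tilde\theta)-(c_2-c_1)$.
\begin{itemize}
\item If $\tilde\theta\in(0,B)$, continuity of $\Delta$ in $\theta$ together with indifference at the switch point forces $\Delta(\tilde\theta)=0$, so $\tilde\theta$ solves~\eqref{eqn:thetastar}.
\item $\tilde\theta=0$ is impossible, because $\Delta(0)=-(c_2-c_1)<0$ contradicts types just above $0$ preferring action $2$.
\item If $\tilde\theta=B$, then $\Delta(B)\le 0$. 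This means either no solution exists or $B$ itself is the solution; in both cases the threshold matches the statement.
\end{itemize}
The only ambiguity left is tie-breaking at $\theta^*$, which is the footnoted caveat.

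The step that needs the most care is the evaluation of $\Delta$ at the threshold, specifically showing that ties and boundary events have measure zero. This relies on the continuity of $\mathcal{D}$ and on strict bid monotonicity across the two actions. It also requires justifying that $\Delta(\theta)$ is continuous in $\theta$, which holds because the highest opposing bid has an atomless distribution. That continuity is what lets us pass from ``threshold equilibrium'' to the exact equation $\Delta(\tilde\theta)=0$.
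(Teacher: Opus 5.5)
Your proof is correct and follows the paper's argument. Against opponents using the common threshold $\theta^\ast$, both actions at type $\theta^\ast$ win on exactly the event that all opponents are below $\theta^\ast$, and they pay the same second price there. This gives $U_2(\theta^\ast)-U_1(\theta^\ast)=D(\theta^\ast)^{n-1}\bigl(\theta^\ast(\nu_2-\nu_1)+t(\mu_2-\mu_1)\bigr)-(c_2-c_1)$. Lemma~\ref{prop:monotone_diff} then turns indifference at $\theta^\ast$ (or $\Delta(B)<0$ when Eqn.~(\ref{eqn:thetastar}) has no root) into a best response.

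Your uniqueness part is more complete than the paper's, which only observes that Eqn.~(\ref{eqn:thetastar}) has at most one root. You additionally use Theorem~\ref{thm:Bayesian-EquilibriaCharacterization} and continuity of $\Delta$ to show that the threshold of any symmetric equilibrium must solve the equation, and you rule out $\tilde\theta=0$ and handle $\tilde\theta=B$ explicitly.
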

\begin{proof}
Let $\Sigma$ be the strategy profile where every bidder plays the strategy described in the theorem.
Fix $\Sigma^{(-i)}$.
We aim to show that the strategy described in the theorem is a best response to $\Sigma^{(-i)}$.

We write $U_j(\theta)$ for $U_j(\theta,\Sigma^{(-i)})$.
To show that adopting the same threshold $\theta^\ast$ is a best response to $\Sigma^{(-i)}$, we need exactly $U_1(\theta^\ast)=U_2(\theta^\ast)$.
This will imply $U_1(\theta)> U_2(\theta)$ for $\theta<\theta^\ast$ and $U_1(\theta)< U_2(\theta)$ for $\theta>\theta^\ast$ (by Lemma~\ref{prop:monotone_diff}), which implies playing action $1$ is a best response if $\theta
\leq\theta^\ast$ and playing action $2$ is a best response if $\theta\geq\theta^\ast$.

Now we calculate $U_1(\theta^\ast)$ and $U_2(\theta^\ast)$.
For $U_1$ (resp., $U_2$), there is a term $-c_1$ (resp., $-c_2$) regardless of whether or not bidder $i$ wins.
If bidder $i$ loses, this is the only term in $U_1$ (resp., $U_2$).
Otherwise, if bidder $i$ wins, bidder $i$'s type must be the highest, which happens with probability $(D(\theta^\ast))^{n-1}$. 
In this case, bidder $i$ earns an extra expected utility which consists of three terms:
\begin{enumerate}
    \item The expected value: $\theta^\ast\nu_1$ for $U_1$ and $\theta^\ast\nu_2$ for $U_2$.
    \item Minus the expected payment $\bar{p}$, which is the expectation of the highest bids conditioning on all remaining bidders' types are below $\theta^\ast$. Notice that, by definition of $\Sigma^{(-i)}$, all the remaining bidders play action $1$ in this case, and the expected payment is $$\bar{p}=\nu_1\cdot\mathbb{E}_{X_1,\ldots,X_{n-1}\sim\mathcal{D}}[\max\{X_1,\ldots,X_{n-1}\}\mid X_1,\ldots,X_{n-1}\leq\theta^\ast]+t\mu_1.$$
    The exact expression of $\bar{p}$ is not important. What is important is that $\bar{p}$ is the same for both $U_1$ and $U_2$.
    \item Plus the reward, which is $t\mu_1$ for $U_1$ and $t\mu_2$ for $U_2$.
\end{enumerate}
Thus, we have
$$U_1(\theta^\ast)=-c_1+(D(\theta^\ast))^{n-1}\left(\theta^\ast\nu_1-\bar{p}+t\mu_1\right)\quad\mbox{and}\quad U_2(\theta^\ast)=-c_2+(D(\theta^\ast))^{n-1}\left(\theta^\ast\nu_2-\bar{p}+t\mu_2\right).$$
We have $U_2(\theta^\ast)-U_1(\theta^\ast)=0$ if Eqn.~(\ref{eqn:thetastar}) holds (notice that $\bar{p}$ is canceled).

On the other hand, if a solution to Eqn.~(\ref{eqn:thetastar}) does not exist, it implies that 
$$B(\nu_2-\nu_1)+t(\mu_2-\mu_1)< c_2-c_1.$$
Therefore, $U_2(B)<U_1(B)$, implying that playing action $1$ is better for every $\theta\le B$.

To see the uniqueness of $\theta^\ast$, it suffices to see that the solution to (\ref{eqn:thetastar}) is unique, as the left-hand side of the equation is strictly increasing in $\theta^\ast$.
\end{proof}

We remark that if we do not restrict ourselves to \emph{symmetric} Bayes Nash equilibria, there are instances where unnatural asymmetric equilibria exist in which bidders agree on heterogeneous carefully calculated thresholds (see Sect.~\ref{sec:non-uniqueness}).

\subsection{Optimal Contract Design and Advantage of Contracts}
\label{sec:optimalcontract-Bayesian}
In this section, we consider the same setting as in Theorem~\ref{thm:BNE-identical}: identical type distributions, two actions (see Sect.~\ref{sect:Incomplete-m-actions} for general $m$), and linear contract.
We show that setting $t=\omega$ always maximizes the revenue for $n\to\infty$, and we compare the resultant revenue with the benchmark case $t=0$, in which no quality-based reward is applied.

We present the theorem in the limit as $n\to\infty$. 
The computation of the exact optimal parameter $t$ for finite $n$ is deferred to Sect.~\ref{sec:BNE-optimalt-finite-n}, while the proof for the theorem below already captures the essential ideas without complicated calculations.

\begin{theorem}\label{thm:BNE-optimalt}
    Consider the second-price auction.
    When bidders' type distributions are identical (and following Assumption~\ref{assumption:distribution}) and there are two actions $1$ and $2$, consider linear contract $T(q)=tq$ and suppose bidders play the unique symmetric Bayes Nash equilibrium.
    Below, let $$K=\frac{(c_2-c_1)-B(\nu_2-\nu_1)}{\mu_2-\mu_1}.$$
    \begin{itemize}
        \item If $K\geq\omega>0$, setting any $t$ in $[0,K)$ maximizes the expected revenue for $n\to\infty$, and the expected revenue tends to
        $B\nu_1+\omega\mu_1.$
        \item If $\omega>K\geq0$, the optimal $t$ that maximizes the expected revenue satisfies $t\to\omega$ as $n\to\infty$, and the expected revenue tends to
        $$(B\nu_2+\omega\mu_2)-(c_2-c_1)\left(\ln\left(\frac{\omega(\mu_2-\mu_1)+B(\nu_2-\nu_1)}{c_2-c_1}\right)+1\right).$$
        If we set $t=0$ instead, the expected revenue becomes $B\nu_1+\omega\mu_1$ as $n\to\infty$.
        \item If $\omega>0>K$, the optimal $t$ that maximizes the expected revenue satisfies $t\to\omega$ as $n\to\infty$, and the expected revenue tends to
        $$(B\nu_2+\omega\mu_2)-(c_2-c_1)\left(\ln\left(\frac{\omega(\mu_2-\mu_1)+B(\nu_2-\nu_1)}{c_2-c_1}\right)+1\right).$$
        If we set $t=0$ instead, the expected revenue becomes 
        $$B\nu_2+\omega\mu_2-(c_2-c_1)\left(\ln\left(\frac{B(\nu_2-\nu_1)}{c_2-c_1}\right)+1+\frac{\omega(\mu_2-\mu_1)}{B(\nu_2-\nu_1)}\right)$$ 
        as $n\to\infty$.
    \end{itemize}
\end{theorem}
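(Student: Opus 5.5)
The plan is to write the expected revenue under the symmetric equilibrium of Theorem~\ref{thm:BNE-identical} as an explicit function of $t$ and $n$, take $n\to\infty$ to get a limit function $R(t)$, and then maximize $R(t)$ in $t$. The revenue is the second-highest bid plus $(\omega-t)\mu_w$, where $w$ is the winner's action. Bidders bid $\theta\nu_{a(\theta)}+t\mu_{a(\theta)}$, and in equilibrium the ranking of bids agrees with the ranking of types. Abbreviate $\Delta\nu=\nu_2-\nu_1$, $\Delta\mu=\mu_2-\mu_1$, $\Delta c=c_2-c_1$ and $X(t)=B\Delta\nu+t\Delta\mu$.

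\textbf{Step 1 (threshold asymptotics).} Eqn.~(\ref{eqn:thetastar}) has a solution iff $X(t)>\Delta c$, i.e.\ iff $t>K$. If $t<K$, then $\theta^\ast=B$, everyone plays action $1$, and the revenue tends to $B\nu_1+\omega\mu_1$, since the top two types both tend to $B$. If $t>K$, then $D(\theta^\ast)^{n-1}=\Delta c/(\theta^\ast\Delta\nu+t\Delta\mu)$ is bounded away from $0$. This forces $\theta^\ast\to B$ and
\[
n\bigl(1-D(\theta^\ast)\bigr)\to\lambda(t):=\ln\bigl(X(t)/\Delta c\bigr).
\]
Hence the number of action-$2$ bidders converges in distribution to $\mathrm{Poisson}(\lambda)$. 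Meanwhile the highest types within each group converge to $B$ in probability, because each group has either $\Theta(n)$ members or members confined to $[\theta^\ast,B]$.

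\textbf{Step 2 (limit revenue).} Condition on $k$, the number of action-$2$ bidders.
\begin{itemize}
\item If $k=0$, the revenue tends to $B\nu_1+\omega\mu_1$.
\item If $k=1$, the winner plays action $2$ and pays about $B\nu_1+t\mu_1$, so the revenue is about $B\nu_1+\omega\mu_2-t\Delta\mu$.
\item If $k\ge2$, the revenue tends to $B\nu_2+\omega\mu_2$.
\end{itemize}
Averaging with the Poisson weights and using $e^{-\lambda}=\Delta c/X$ gives
\[
R(t)=B\nu_2+\omega\mu_2-\frac{\Delta c\,(B\Delta\nu+\omega\Delta\mu)}{X(t)}-\Delta c\ln\frac{X(t)}{\Delta c}.
\]
Bounded convergence applies because all bids are bounded for $t$ in a compact set.

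\textbf{Step 3 (optimization).} As a function of $X$, $R$ is unimodal, with derivative proportional to $(B\Delta\nu+\omega\Delta\mu)-X$. It is therefore maximized at $X=B\Delta\nu+\omega\Delta\mu$, i.e.\ at $t=\omega$, which yields the stated value $(B\nu_2+\omega\mu_2)-\Delta c\bigl(\ln(\cdot)+1\bigr)$.
\begin{itemize}
\item Case $\omega>K\ge0$: here $t=0<K$ gives $B\nu_1+\omega\mu_1$.
\item Case $K<0$: substituting $t=0$, i.e.\ $X=B\Delta\nu$, gives the third displayed formula.
\item Case $K\ge\omega$: the unconstrained maximizer lies outside the feasible range $X\ge\Delta c$, so on $t>K$ the function is decreasing in $X$. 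Its supremum is attained as $X\downarrow\Delta c$, where it equals $B\nu_2+\omega\mu_2-(B\Delta\nu+\omega\Delta\mu)=B\nu_1+\omega\mu_1$. This is the same value attained for every $t\in[0,K)$, which proves the first bullet.
\end{itemize}

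\textbf{Main obstacle.} The delicate part is Step 1 together with passing from the limit to the optimizer. We must show that the count of bidders above $\theta^\ast$ is asymptotically Poisson and that the relevant order statistics concentrate at $B$, both uniformly in $t$ on compacts. Then the revenue converges uniformly in $t$, so the finite-$n$ optimizers converge to the maximizer $\omega$ of the strictly unimodal limit. For the $k=1$ case I would explicitly control the highest action-$1$ type, which is the maximum of roughly $n$ draws conditioned to lie below $\theta^\ast$, and check that it converges to $B$. Uniformity near $t=K$, where $\lambda\to0$, needs separate care, but $R$ is continuous there, matching $B\nu_1+\omega\mu_1$.
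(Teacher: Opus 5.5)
Your limit computation is correct and essentially the same as the paper's; only the bookkeeping differs. Conditioning on the asymptotically Poisson count $k$ of action-$2$ bidders repackages the paper's term-by-term evaluation of $R_1$, $R_{23}$, $R_4$ via order-statistic densities. Your $R(t)$ is exactly the paper's $R(x)=(B\nu_2+\omega\mu_2)-x\Delta c-e^{-x}(B\Delta\nu+\omega\Delta\mu)$ under the substitution $x=\lambda(t)=\ln(X(t)/\Delta c)$. Maximizing in $X$ instead of in $x$ and then inverting the threshold equation is the same optimization.

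The one genuine gap is in your last step: ``the revenue converges uniformly in $t$ on compacts, so the finite-$n$ optimizers converge to $\omega$.'' Uniform convergence on compact $t$-sets says nothing about maximizers $t_n$ that escape to infinity. Nor does the fact that your limit $R(t)\to-\infty$ as $t\to\infty$ carry over to finite $n$, since convergence is not uniform at infinity. You need an a priori bound keeping the optimal $t_n$ in a fixed compact set. The paper devotes its ``$t$ cannot be infinite'' paragraph to exactly this, noting it is not obvious because $\mathbb{E}[q^{(N_2)}]-\mathbb{E}[q^{(N_1)}]\to0$.

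In your notation the bound goes as follows.
\begin{enumerate}
\item For every $n$ and every $t>K$, the only $t$-dependent part of the revenue is $-t\Delta\mu\Pr[k=1]=-t\Delta\mu\,nD(\theta^\ast)^{n-1}(1-D(\theta^\ast))$. The remaining terms total at most $B\nu_2+\omega\mu_2$.
\item The threshold equation gives $t\Delta\mu D(\theta^\ast)^{n-1}=\Delta c-D(\theta^\ast)^{n-1}\theta^\ast\Delta\nu$ and also $D(\theta^\ast)^{n-1}\le\Delta c/(t\Delta\mu)$. Hence, for $t\ge 2B\Delta\nu/\Delta\mu$, you get $R_n(t)\le B\nu_2+\omega\mu_2-\tfrac{\Delta c}{2}\,n(1-D(\theta^\ast))$.
\item From $D(\theta^\ast)^{n-1}\le\Delta c/(t\Delta\mu)$ and $1-e^{-u}\ge u/(1+u)$, you get $n(1-D(\theta^\ast))\ge\tfrac12\min\{n,\ln(t\Delta\mu/\Delta c)\}$.
\end{enumerate}
Therefore some $T_0$ satisfies $R_n(t)<R_n(\omega)$ for all $t\ge T_0$ and all sufficiently large $n$. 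Once the optimizers are confined to $[0,T_0]$, your uniform-convergence and strict-unimodality argument yields $t_n\to\omega$ as claimed.
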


We make a few remarks before proving this theorem.

\paragraph{Interpretation of $K$.}
The value of $K$ that classifies the three cases has a natural meaning: it is the minimum value of the reward to incentivize a hypothetical bidder with maximum type $B$ to play action $2$, supposing this bidder is the only bidder in the auction who always wins (and pays $0$ for having no competitors).
To see this, this bidder's utility for playing action $1$ is $B\nu_1+t\mu_1-c_1$, and her utility for playing action $2$ is $B\nu_2+t\mu_2-c_2$.
It is then straightforward to check that playing action $2$ has a higher utility if and only if $t>K$.

Naturally, lower values of $K$ represent bidders' higher intrinsic incentive to play the more costly action $2$ in the absence of reward.
In particular, if $K<0$, a bidder with a large type may want to play action $2$ even without a contract.

\paragraph{Setting $t\approx\omega$ is optimal.}
As a take-home message, this theorem implies that it is always optimal to set $t\approx\omega$, meaning that the quality reward should be set to approximately the long-term benefit.

The first case corresponds to the scenario where the auctioneer does not care much about long-term quality benefit (with small $\omega$).
In this case, the optimal $t$ is set such that all bidders always play action $1$ in the equilibrium.
The term $t\mu_1$ appears both in the second-highest bidder's bid (natural bidding strategy) and in the revenue loss due to the quality reward to the winner. 
The two appearances of $t\mu_1$ have different signs, and are canceled.
Therefore, $t$ can be set to any sufficiently small values, and these values include $\omega$ (except in the corner case with $K=\omega$, in which case any $t\in[0,\omega)$ is optimal, and we still have $t\approx\omega$ since $\omega$ is at the boundary).

Both the second and the third cases regard large $\omega$.
The second case corresponds to the scenario where bidders' incentive to play action $2$ is small, so that they will play action $1$ for sure if $t=0$.
The third case describes the scenario where bidders have high incentives to play action $2$, and they will play action $2$ for sufficiently large types even if $t=0$.
For both cases, we have the same conclusion that setting $t$ to be approximately $\omega$ is optimal.
However, it is natural to expect that the advantage of using contracts (comparing using $t=\omega$ with setting $t=0$) is more significant in the second case, as we will remark soon.

Given the coincidence $t\to\omega$, one may also wonder if $t=\omega$ (instead of approaching $\omega$) holds for \emph{all} $n$.
We will show in Sect.~\ref{sec:BNE-optimalt-finite-n} that this is not the case.
We will consider the special case with $\mathcal{D}$ being the uniform distribution on $[0,1]$ to demonstrate that the optimal $t$ can be smaller than $\omega$.

As another remark, we have
  $B\nu_1+\omega\mu_1=(B\nu_2+\omega\mu_2)-(c_2-c_1)-(c_2-c_1)\ln\left(\frac{\omega(\mu_2-\mu_1)+B(\nu_2-\nu_1)}{c_2-c_1}\right)$
for $\omega=K$, in which case the revenues in the first two cases in the theorem agree.

\paragraph{Almost linear improvement on quality score.}
Consider the maximum possible expected improvement in the revenue due to high-quality ads, $\omega(\mu_2-\mu_1)$.
Theorem~\ref{thm:BNE-optimalt} implies that, compared with setting $t=0$, using optimal reward factor $t$ improves the revenue by an amount that is almost linear in $\omega(\mu_2-\mu_1)$.
This shows that we can effectively improve the ad quality by using contracts.

For the first case, the auctioneer does not have much incentive for improving the quality (with small $\omega$), in which case we obtain the same revenue with or without contracts.
However, using contracts significantly improves the quality for slightly large $\omega$, as captured by the second and the third cases.

Consider the second case where bidders' own incentives to play action $2$ are low, with $B(\nu_2-\nu_1)\leq(c_2-c_1)$ implied by the pre-condition $\omega>K=\frac{(c_2-c_1)-B(\nu_2-\nu_1)}{\mu_2-\mu_1}\geq0$. By Theorem~\ref{thm:BNE-optimalt}, the increment in revenue is
$(c_2-c_1)\left(W-\ln W-1\right)$,
where we denote $W=\frac{\omega(\mu_2-\mu_1)+B(\nu_2-\nu_1)}{c_2-c_1}$.
Notice that the revenue increment is always positive by the inequality $x-1>\ln x$ for $x>1$ and $W>1$ for $\omega>K$.
Furthermore, by noticing $W\gg\ln W$ for large $W$, the revenue increment is approximately $(c_2-c_1)W$ for large $\omega$, which is $\omega(\mu_2-\mu_1)$ plus the constant $B(\nu_2-\nu_1)$, which is linear in $\omega(\mu_2-\mu_1)$.
Moreover, other than the term $\omega(\mu_2-\mu_1)$, which is the maximum possible improvement in quality score, the auctioneer even has an extra earning of $B(\nu_2-\nu_1)$.
This is due to the fact that the bidders' bids are higher in the natural bidding strategy, which results in higher payment of the winner.

Consider the third case where bidders already have high incentives to play action $2$ even without contracts. 
By Theorem~\ref{thm:BNE-optimalt}, the increment in revenue is $(c_2-c_1)\cdot\left(\frac{\omega(\mu_2-\mu_1)}{B(\nu_2-\nu_1)}-\ln\left(1+\frac{\omega(\mu_2-\mu_1)}{B(\nu_2-\nu_1)}\right)\right)$.
Again, the revenue increment is always positive due to the inequality $x>\ln(1+x)$ (for $x>0$).
By noticing that the linear function grows significantly faster than the logarithmic function, the improvement of the revenue is again approximately a linear factor $\frac{c_2-c_1}{B(\nu_2-\nu_1)}$ of $\omega(\mu_2-\mu_1)$.
This improvement is more significant if $\nu_2-\nu_1$ in the denominator is small.
Intuitively, a large $\nu_2-\nu_1$ implies that bidders' self-motivation of playing action $2$ is high, and, in this case, the effect of a contract becomes less significant.
We should also remark that, even the effect of a contract depends on $\nu_2-\nu_1$, the optimal reward parameter is still $t\to\omega$ as indicated by Theorem~\ref{thm:BNE-optimalt}.
Thus, our result is also somehow surprising.

\begin{proof}[Proof of Theorem~\ref{thm:BNE-optimalt}]
In the following, we assume bidders play the equilibrium described in Theorem~\ref{thm:BNE-identical} with the threshold $\theta^\ast$ described in Eqn.~(\ref{eqn:thetastar}).

We begin by writing the expected revenue in four terms.
Let $N_1$ and $N_2$ be the two random variables describing the bidders with the highest and the second-highest types.
\begin{align*}
    R&=\mathbb{E}[b(N_2)]-t\cdot \mathbb{E}[q^{(N_1)}]+\omega\cdot\mathbb{E}[q^{(N_1)}]\\
    &=\mathbb{E}[\theta^{(N_2)}\vq(q^{(N_2)})]+t\cdot \mathbb{E}[q^{(N_2)}]-t\cdot \mathbb{E}[q^{(N_1)}]+\omega\cdot\mathbb{E}[q^{(N_1)}]\tag{Natural bidding strategy}
\end{align*}

\paragraph{$t$ cannot be infinite.}
We first show that $t$ must be bounded as $n\to\infty$, i.e., we cannot have $t\to\infty$.
This is not immediately clear from the revenue formula above.
Let
$$\gamma(t)=t\cdot (\mathbb{E}[q^{(N_2)}]-\mathbb{E}[q^{(N_1)}])$$
be the sum of the second and third terms in the revenue (which are the only two terms explicitly containing $t$).
Although $(\mathbb{E}[q^{(N_2)}]-\mathbb{E}[q^{(N_1)}])$ is always nonpositive (by Theorem~\ref{thm:BNE-identical}, bidders with higher types are more likely to play the better action $2$), it is also true that $(\mathbb{E}[q^{(N_2)}]-\mathbb{E}[q^{(N_1)}])\to0$ as $n\to\infty$ (as it is very likely that the types of $N_1$ and $N_2$ are both above or below $\theta^\ast$ for $n\to\infty$).
Thus, $\gamma(t)$ may or may not be bounded from below as $t\to\infty$ at first glance.
However, we will see below that $\gamma(t)\to-\infty$ for $t\to\infty$.

We break down $\gamma(t)$ by computing the two expectations.
Notice that, for two actions, the only case for $\gamma(t)\neq0$ is when $\theta^{(N_1)}>\theta^\ast$ and $\theta^{(N_2)}<\theta^\ast$, in which case two different actions are played by $N_1$ and $N_2$.
This also implies that Eqn.~(\ref{eqn:thetastar}) has a solution $\theta^\ast$ (otherwise, $\theta^\ast=B$ and $\theta^{(N_1)}>\theta^\ast$ is impossible).
This happens precisely when $n-1$ bidders' types are below $\theta^\ast$ and $1$ bidder's type is above $\theta^\ast$.
The probability for this event is therefore $n(D(\theta^\ast))^{n-1}(1-D(\theta^\ast))$.
When this event happens, bidder $N_1$ plays action $2$, with expected quality $\mu_2$, and bidder $N_2$ plays action $1$, with expected quality $\mu_1$.
Thus,
$$\gamma(t)=-t\cdot n(D(\theta^\ast))^{n-1}(1-D(\theta^\ast))(\mu_2-\mu_1).$$
We replace $t(D(\theta^\ast))^{n-1}(\mu_2-\mu_1)$ above by $c_2-c_1-(D(\theta^\ast))^{n-1}\theta^\ast(\nu_2-\nu_1)$ according to Eqn.~(\ref{eqn:thetastar}), and we obtain
\begin{equation}\label{eqn:gamma}
    \gamma(\theta^\ast)=-n(1-D(\theta^\ast))\left(c_2-c_1-(D(\theta^\ast))^{n-1}\theta^\ast(\nu_2-\nu_1)\right).
\end{equation}
Notice that we have replaced $t$ by $\theta^\ast$ for the input of $\gamma(\cdot)$.
We can do this because $\gamma$ can be viewed as a function of $\theta^\ast$ as Eqn.~(\ref{eqn:thetastar}) gives a bijection between $t$ and $\theta^\ast$ (to see it is a bijection, the expression on the left-hand side is strictly increasing in both $t$ and $\theta^\ast$).

Suppose $t\to\infty$.
By Eqn.~(\ref{eqn:thetastar}), we must have $(D(\theta^\ast))^{n-1}\to0$.
The term $\left(c_2-c_1-(D(\theta^\ast))^{n-1}\theta^\ast(\nu_2-\nu_1)\right)$ in $\gamma(\theta^\ast)$ tends to $c_2-c_1$, which is positive.
To avoid $\gamma(t)\to-\infty$, we must then have bounded $n(1-D(\theta^\ast))$.
This implies $1-D(\theta^\ast)$ is upper bounded by $C/n$ for some constant $C$ as $n\to\infty$.
However, if this is the case, $(D(\theta^\ast))^{n-1}$ is then lower-bounded by $e^{-C}$, contradicting the fact that $(D(\theta^\ast))^{n-1}\to0$.
Therefore, $t$ is bounded as $n\to\infty$.

Furthermore, we must have $D(\theta^\ast)\to1$, for otherwise, we have $(D(\theta^\ast))^{n-1}\to0$ 
and Eqn.~(\ref{eqn:thetastar}) fails for bounded $t$.
On the other hand, $D(\theta^\ast)\not\to1$ implies $\theta^\ast\not\to B$ (since $D$ is strictly increasing), which implies $\theta^\ast$ must be a solution of Eqn.~(\ref{eqn:thetastar}) by Theorem~\ref{thm:BNE-identical}, leading to a contradiction.

Since $\mathcal{D}$ is a continuous distribution with strictly increasing $D(\cdot)$, $D(\theta^\ast)\to1$ also implies $\theta^\ast\to B$.
Equation~(\ref{eqn:thetastar}) then becomes
\begin{equation}\label{eqn:thetastar'}
        (D(\theta^\ast))^{n-1}\left(B(\nu_2-\nu_1)+t(\mu_2-\mu_1)\right)=c_2-c_1
\end{equation}
It has solutions if and only if $t\geq\frac{(c_2-c_1)-B(\nu_2-\nu_1)}{\mu_2-\mu_1}$.
Note that $\theta^*=B$ when $t=\frac{(c_2-c_1)-B(\nu_2-\nu_1)}{\mu_2-\mu_1}$.
On the other hand, if $t<\frac{(c_2-c_1)-B(\nu_2-\nu_1)}{\mu_2-\mu_1}$, Eqn.~(\ref{eqn:thetastar'}) does not have a solution.
In this case, we have $\theta^\ast=B$ by Theorem~\ref{thm:BNE-identical} and all bidders will play action $1$ for sure.

\paragraph{Finding optimal $t$.}
We explore two options: $t\le \frac{(c_2-c_1)-B(\nu_2-\nu_1)}{\mu_2-\mu_1}$ and $t>\frac{(c_2-c_1)-B(\nu_2-\nu_1)}{\mu_2-\mu_1}$.
For the first option, all bidders will surely play action $1$.
The expected revenue is
\begin{align*}
    R&=\mathbb{E}[\theta^{(N_2)}\vq(q^{(N_2)})]+t\cdot \mathbb{E}[q^{(N_2)}]-t\cdot \mathbb{E}[q^{(N_1)}]+\omega\cdot\mathbb{E}[q^{(N_1)}]\\
    &=\mathbb{E}[\theta^{(N_2)}]\cdot\nu_1+t\mu_1-t\mu_1+\omega\mu_1\\
    &=B\nu_1+\omega\mu_1\tag{since $\mathbb{E}[\theta^{(N_2)}]\to B$ for $n\to\infty$}
\end{align*}
We conclude that setting any $t$ with $t\in[0,\frac{(c_2-c_1)-B(\nu_2-\nu_1)}{\mu_2-\mu_1}]$ results in the same revenue
\begin{equation}\label{eqn:Bayesian-revenue-action1}
    B\nu_1+\omega\mu_1.
\end{equation}

In the remaining part, we will explore the option $t>\frac{(c_2-c_1)-B(\nu_2-\nu_1)}{\mu_2-\mu_1}$ and optimize $t$ in this range, in which case $\theta^\ast$ and $t$ are related by Eqn.~(\ref{eqn:thetastar'}). 
Keep in mind that we also have the option for $t=0$, and we need to compare the two options at the end.

To work out the optimal $t$, we have seen that it is equivalent to working out the optimal $\theta^\ast$. Let $D(\theta^\ast)=1-\frac{x}{n}$.
Since there is a bijection between $x$ and $\theta^\ast$, we will work out the optimal $x$ instead of $\theta^\ast$.
As we have proved $D(\theta^\ast)\to1$, we must have $\frac{x}{n}\to0$.
As a result, we have $(D(\theta^\ast))^{n}\to e^{-x}$, and the same is for $(D(\theta^\ast))^{n-1}$.

We compute the four terms in the expected revenue.

We begin by computing the first term $\mathbb{E}[\theta^{(N_2)}\vq(q^{(N_2)})]$.
Consider $n$ independent random variables with distribution $\mathcal{D}$, and consider the distribution of the second largest random variable.
Let $\pi$ be the probability density function for this distribution.
We have
$$\pi(z)=n(n-1)d(z)(1-D(z))(D(z))^{n-2},$$
where $D$ and $d$ are the cumulative distribution function and probability density function for $\mathcal{D}$.
The expectation for the first term of the revenue is
$$\int_0^{\theta^\ast}\pi(z)z\nu_1 dz+\int_{\theta^\ast}^B\pi(z)z\nu_2 dz=\nu_1\int_0^B\pi(z)zdz+(\nu_2-\nu_1)\int_{\theta^\ast}^B\pi(z)zdz.$$
The first integral $\int_0^B\pi(z)zdz$ is just the expectation of the second largest type, which approaches to $B$ as $n\to\infty$.
For the second integral, the lower integral limit $\theta^\ast$ approaches to the upper limit $B$, and we can approximate $z$ by $B$ in the integrand.
Thus, for $n\to\infty$, we have
\begin{align*}
    \int_{\theta^\ast}^B\pi(z)zdz&=B\int_{\theta^\ast}^Bn(n-1)d(z)(1-D(z))(D(z))^{n-2}dz\\
    &=B\left[n(D(z))^{n-1}-(n-1)(D(z))^n\right]_{\theta^\ast}^B\\
    &=B\left(1-n(D(\theta^\ast))^{n-1}+(n-1)(D(\theta^\ast))^n\right)\\
    &=B\left(1-(D(\theta^\ast))^{n}-n(D(\theta^\ast))^{n-1}(1-D(\theta^\ast))\right)
\end{align*}
By substituting $D(\theta^\ast)=1-\frac xn$ and substituting $(D(\theta^\ast))^{n}$ and $(D(\theta^\ast))^{n}$ by $e^{-x}$ to the above equation, the first term of the revenue is
$$R_1(x)=\nu_1\cdot B+(\nu_2-\nu_1)\cdot B(1-e^{-x}-xe^{-x})=B((e^{-x}
+xe^{-x})\nu_1+(1-e^{-x}-xe^{-x})\nu_2).$$

Similarly, for the second and the third terms, by using the same substitutions as (\ref{eqn:gamma}), we have
$$R_{23}(x)=-x(c_2-c_1-B(\nu_2-\nu_1)e^{-x}).$$
For the fourth term,
$$R_4(x)=\omega\cdot\mathbb{E}[q^{(N_1)}]=\omega\cdot\left((D(\theta^\ast))^n\mu_1+(1-(D(\theta^\ast))^n)\mu_2\right)=\omega(e^{-x}\mu_1+\mu_2-e^{-x}\mu_2).$$
Putting $R_1,R_{23}$, and $R_4$ together and simplifying, we have
\begin{equation}\label{eqn:Bayesian-revenue}
    R(x)=(B\nu_2+\mu_2\omega)-x(c_2-c_1)-e^{-x}\omega(\mu_2-\mu_1)-e^{-x}B(\nu_2-\nu_1).
\end{equation}
To find $x$ that optimizes $R$, we compute the derivatives:
$$R'(x)=-(c_2-c_1)+e^{-x}\omega(\mu_2-\mu_1)+e^{-x}B(\nu_2-\nu_1).$$
Since
$$R''(x)=-e^{-x}\omega(\mu_2-\mu_1)-e^{-x}B(\nu_2-\nu_1)<0,$$
$R$ is maximized when $R'(x)=0$.
This yields
$$e^{-x}=\frac{c_2-c_1}{\omega(\mu_2-\mu_1)+B(\nu_2-\nu_1)}.$$

Notice that $x$ must not be negative; we need to consider two cases:
\begin{itemize}
    \item Case 1: $\frac{c_2-c_1}{\omega(\mu_2-\mu_1)+B(\nu_2-\nu_1)}\geq 1$, in which case $\omega\leq\frac{(c_2-c_1)-B(\nu_2-\nu_1)}{\mu_2-\mu_1}$, and $R$ is optimized at $x=0$.
    In this case, we have $D(\theta^\ast)=1$, so $\theta^\ast=B$. 
    This goes back to the first option $t\le\frac{(c_2-c_1)-B(\nu_2-\nu_1)}{\mu_2-\mu_1}$ we have explored before, with revenue in Eqn.~(\ref{eqn:Bayesian-revenue-action1}).
    In this case, setting $t$ to be any value between $0$ and $\frac{(c_2-c_1)-B(\nu_2-\nu_1)}{\mu_2-\mu_1}$ is optimal.
    Since $\omega<\frac{(c_2-c_1)-B(\nu_2-\nu_1)}{\mu_2-\mu_1}$, we can set $t=\omega$, and obtain the optimal revenue $B\nu_1+\omega\mu_1.$
    \item Case 2: $\frac{c_2-c_1}{\omega(\mu_2-\mu_1)+B(\nu_2-\nu_1)}<1$, in which case $\omega>\frac{(c_2-c_1)-B(\nu_2-\nu_1)}{\mu_2-\mu_1}$, and $R$ is optimized at $x=\ln\left(\frac{\omega(\mu_2-\mu_1)+B(\nu_2-\nu_1)}{c_2-c_1}\right)$. Since $(D(\theta^\ast))^{n-1}=e^{-x}$, substitute the above expression into Eqn.~(\ref{eqn:thetastar'}) and we have
    $$\frac{c_2-c_1}{\omega(\mu_2-\mu_1)+B(\nu_2-\nu_1)}\left(B(\nu_2-\nu_1)+t(\mu_2-\mu_1)\right)=c_2-c_1.$$
    Thus, we have $t=\omega$.
    The corresponding revenue is
    $$(B\nu_2+\omega\mu_2)-(c_2-c_1)-(c_2-c_1)\ln\left(\frac{\omega(\mu_2-\mu_1)+B(\nu_2-\nu_1)}{c_2-c_1}\right).$$
\end{itemize}
This verifies that the optimal choice of $t$ in the three cases of the theorem, as well as the corresponding optimal expected revenue.
It remains to verify the expected revenue for setting no reward $t=0$.

For the second case $\omega>\frac{(c_2-c_1)-B(\nu_2-\nu_1)}{\mu_2-\mu_1}\geq0$, we have $B(\nu_2-\nu_1)\leq c_2-c_1$.
For $t=0$, Eqn.~(\ref{eqn:thetastar'}) only has solution $D(\theta^\ast)=1$ for $B(\nu_2-\nu_1)=c_2-c_1$, and it has no solution if $B(\nu_2-\nu_1)< c_2-c_1$.
In both cases, we have $\theta^\ast=B$, and all bidders play action $1$.
We have seen in (\ref{eqn:Bayesian-revenue-action1}) the expected revenue is $B\nu_1+\omega\mu_1$.

For the third case $\omega>0>\frac{(c_2-c_1)-B(\nu_2-\nu_1)}{\mu_2-\mu_1}$, Eqn.~(\ref{eqn:thetastar'}) implies $(D(\theta^\ast))^{n-1}=\frac{c_2-c_1}{B(\nu_2-\nu_1)}$ for $t=0$.
This implies $e^{-x}=\frac{c_2-c_1}{B(\nu_2-\nu_1)}$ for $D(\theta^\ast)=1-\frac xn$.
By Eqn.~(\ref{eqn:Bayesian-revenue}), we have
$$R(x)=B\nu_2+\omega\mu_2-(c_2-c_1)\left(\ln\left(\frac{B(\nu_2-\nu_1)}{c_2-c_1}\right)+1+\frac{\omega(\mu_2-\mu_1)}{B(\nu_2-\nu_1)}\right),$$
which concludes the theorem.
\end{proof}

\section{Other Auction Rules and Revenue Equivalence}
\label{sec:equivalence}
In the previous section, we have analyzed the equilibria for the second-price auction.
We show that, when bidders' types are identically distributed, there exists a unique symmetric equilibrium and fully passing through the quality value to the winner (i.e., setting $t\approx\omega$) optimizes the revenue.
In this section, we will show that this holds for other auction rules as well.
In particular, we derive the following revenue equivalence theorem which says that the auctioneer's expected revenue is the same in all auctions where symmetric equilibria exist.

\begin{theorem}\label{thm:revenue-equivalence}
    Consider any $n$ and any auction rule $\mathcal{M}=(\mathbf{x},\mathbf{p})$ satisfying Assumption~\ref{assumption:auction}.
    Suppose bidders' types are identically distributed (Assumption~\ref{assumption:distribution}).
    Suppose a linear contract $T(q)=tq$ is used and there exists a symmetric Bayes Nash equilibrium $\Sigma=(\sigma^{(1)},\ldots,\sigma^{(n)})$ where, for each $i\in[n]$, $\sigma^{(i)}=(a,b)$ with strictly increasing bidding strategy $b:\Theta\to\mathbb{R}_{\geq0}$.
    The expected revenue obtained in this auction is equal to the expected revenue obtained in $\mathcal{M}_{\spa}$ under the same contract $T(q)=tq$, when bidders play the unique symmetric Bayes Nash equilibrium $\Sigma_{\spa}$ characterized in Theorem~\ref{thm:BNE-identical}.
\end{theorem}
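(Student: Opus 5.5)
The plan is to reduce the statement to two facts about an arbitrary symmetric equilibrium $\Sigma=(a,b)$ with strictly increasing $b$. First, its action strategy coincides with the threshold strategy of Theorem~\ref{thm:BNE-identical}, with the same $\theta^\ast$ from Eqn.~(\ref{eqn:thetastar}). Second, the interim expected payment of every type coincides with that in $\Sigma_{\spa}$.

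Once both facts hold, the revenue comparison is immediate. Because $b$ is strictly increasing and $\mathcal{D}$ is continuous, the winner is almost surely the bidder with the highest type, in both auctions. A type-$\theta$ bidder therefore wins with probability $D(\theta)^{n-1}$. The expected revenue
$$\E[\text{payment}]-t\,\E[q^{\rm win}]+\omega\,\E[q^{\rm win}]$$
then depends only on two things: the action played by the highest type, which is the same threshold rule in both auctions, and the total expected payment $n\,\E_\theta[P(\theta)]$, where $P(\theta)$ is the interim expected payment.

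\textbf{Step 1 (reduction to a direct-mechanism form).} For a bidder of type $\theta$ facing $\Sigma^{(-i)}$, bidding $b(\theta')$ yields win probability $D(\theta')^{n-1}$ and expected payment $P(\theta')$. Bids outside the range of $b$ are weakly dominated, using monotonicity of $p^{(i)}$ and ties of probability zero. So the payoff of the pair (action $j$, mimic $\theta'$) is
$$D(\theta')^{n-1}(\theta\nu_j+t\mu_j)-P(\theta')-c_j.$$
The equilibrium interim utility $U(\theta)$ is the maximum of this expression over $(j,\theta')$. As a maximum of affine functions of $\theta$, $U$ is convex and hence absolutely continuous. The envelope theorem then gives $U'(\theta)=D(\theta)^{n-1}\nu_{a(\theta)}$ almost everywhere.

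\textbf{Step 2 (threshold structure and identification of $\theta^\ast$).} This is the step I expect to be the main obstacle. The payoff above has increasing differences in $(\theta,j)$, with cross difference $D(\theta')^{n-1}(\nu_2-\nu_1)>0$, and in $(\theta,\theta')$. I would use a revealed-preference (Milgrom--Shannon style) argument to show that $a$ is monotone, so $a$ is a threshold rule with some cutoff $\hat\theta$.

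To pin down $\hat\theta$, let $U_j(\theta)$ denote the best payoff using action $j$. Both $U_1$ and $U_2$ are continuous, being maxima of affine functions. On the two sides of $\hat\theta$ they order oppositely, so if $\hat\theta<B$ we get $U_1(\hat\theta)=U_2(\hat\theta)$.

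Next I would argue that at $\hat\theta$ both maxima are attained at $\theta'=\hat\theta$. The win probability is continuous and the choice $\theta'=\theta$ is optimal on each side of $\hat\theta$, so a limiting argument applies. With both maxima attained at $\hat\theta$, the indifference condition reads
$$D(\hat\theta)^{n-1}\bigl(\hat\theta(\nu_2-\nu_1)+t(\mu_2-\mu_1)\bigr)=c_2-c_1.$$
This is exactly Eqn.~(\ref{eqn:thetastar}). Its solution is unique, so $\hat\theta=\theta^\ast$. If no solution exists, the same computation at $B$ shows that action $1$ is optimal throughout, so $\hat\theta=B$.

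If the monotone-comparative-statics step is delicate because $P$ is not known to be regular, I would fall back on a direct two-type swap inequality. Adding the incentive constraints of types $\theta_1<\theta_2$ against each other's choices yields $(\theta_2-\theta_1)D^{n-1}(\cdot)(\nu_{a(\theta_2)}-\nu_{a(\theta_1)})\ge 0$ in the appropriate form, which gives the required monotonicity of $a$.

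\textbf{Step 3 (payment identity).} Type $0$ wins with probability $D(0)^{n-1}=0$, and losers pay nothing, so $P(0)=0$ and $U(0)=-c_1$. This holds in both auctions, since $\theta^\ast\ge 0$ means type $0$ plays action $1$ except in a measure-zero tie. Integrating the envelope formula gives
$$U(\theta)=-c_1+\int_0^\theta D(s)^{n-1}\nu_{a(s)}\,ds,$$
which by Step 2 is the same function as in $\Sigma_{\spa}$. Hence
$$P(\theta)=D(\theta)^{n-1}\bigl(\theta\nu_{a(\theta)}+t\mu_{a(\theta)}\bigr)-c_{a(\theta)}-U(\theta)$$
agrees with the SPA interim payment for every $\theta$. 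Taking expectations over the $n$ bidders and adding the identical quality terms $(\omega-t)\E[q^{\rm win}]$ completes the proof.
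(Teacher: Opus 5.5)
Your proof is correct and uses the same two-claim skeleton as the paper: the action rule is identical, and the interim payment $P$ is identical. You reach each claim by a different route, and for the first claim a much longer one.

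\textbf{The action rule.} The paper (Proposition~\ref{prop:strictlyIncreasingBidding->threshold}) uses the single deviation $(j,b(\theta))$. Switching action while keeping the \emph{same} bid leaves the win probability $D(\theta)^{n-1}$ and the payment $P(\theta)$ unchanged, so $P$ cancels. Equilibrium then forces, pointwise in $\theta$, action $2$ exactly when $D(\theta)^{n-1}\bigl(\theta(\nu_2-\nu_1)+t(\mu_2-\mu_1)\bigr)\ge c_2-c_1$. Strict monotonicity of the left side gives the threshold and $\theta^\ast$ at once, including the no-solution case.

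Your Milgrom--Shannon route is valid here: the objective is supermodular in $(j,\theta')$ and has strict single crossing for $\theta'>0$. The detour costs you three things.
\begin{enumerate}
\item The limiting step at $\hat\theta$ needs one-sided continuity of $P$ at $\hat\theta$. You have not shown this; it does follow from the pairwise IC constraints.
\item Your case analysis never excludes $\hat\theta=B$ when Eqn.~(\ref{eqn:thetastar}) has a root below $B$.
\item Your fallback swap, built from mutual mimicry of $(a(\theta_k),b(\theta_k))$, only gives $(\theta_2-\theta_1)\bigl(D(\theta_2)^{n-1}\nu_{a(\theta_2)}-D(\theta_1)^{n-1}\nu_{a(\theta_1)}\bigr)\ge 0$. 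This does not rule out $a(\theta_1)=2$, $a(\theta_2)=1$.
\end{enumerate}
All three issues vanish if you use the same-bid deviation instead.

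\textbf{The payment identity.} The paper sandwiches $P(\theta_2)-P(\theta_1)$ between the two pairwise IC constraints inside each fixed-action region. It then integrates $dP=(z\nu_{a(z)}+t\mu_{a(z)})\,d\bigl(D(z)^{n-1}\bigr)$ from $P(0)=0$. This is elementary and produces the SPA payment formula directly, though it tacitly uses continuity of $P$ at $\theta^\ast$.

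Your envelope argument on the convex, Lipschitz $U$ is the utility-side version of the same identity; the two agree by integration by parts. It handles the kink at $\theta^\ast$ automatically. Note that it also requires applying Steps~1--2 to $\Sigma_{\spa}$ itself. That is legitimate, because the natural SPA bid $\theta\nu_{a(\theta)}+t\mu_{a(\theta)}$ is strictly increasing.
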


We prove this theorem for two actions in this section, and the extension to $m$ actions is available in Appendix~\ref{sec:revenue-equivalence-m}.

We first prove the following proposition which will be used in the proof of Theorem~\ref{thm:revenue-equivalence} and also in Sect.~\ref{sec:fpa}.
The statement of the proposition below is similar to that of Theorem~\ref{thm:BNE-identical}.
However, the proposition below only says that threshold action strategy is a \emph{necessary condition} for $(a,b)$ being a symmetric Bayes Nash equilibrium, while Theorem~\ref{thm:BNE-identical} is a \emph{characterization} of symmetric Bayes Nash equilibria.
The necessary condition for $a$ in the proposition below is due to the strict monotonicity of $b$: we consider fixing everyone's bidding strategy to be $b$ and consider bidder $i$'s deviation of the action strategy; the crucial observation is that bidder $i$ with type $\theta^{(i)}$ wins if and only if $\theta^{(i)}$ is the highest (due to strict monotonicity and symmetry of the bidding strategy).
On the other hand, for Theorem~\ref{thm:BNE-identical} regarding the second-price auction, the bidding strategy is coupled with the action strategy (the natural bidding strategy in Sect.~\ref{sec:naturalbidding}), so we cannot analyze the equilibria by fixing bidding strategies while changing action strategies.
The proof of Theorem~\ref{thm:BNE-identical} is based on the threshold characterizations in Lemma~\ref{prop:monotone_diff} and Theorem~\ref{thm:Bayesian-EquilibriaCharacterization}.
In particular, Lemma~\ref{prop:monotone_diff} and Theorem~\ref{thm:Bayesian-EquilibriaCharacterization} assume the natural bidding strategy and only work for the second-price auction.

\begin{proposition}\label{prop:strictlyIncreasingBidding->threshold}
    Consider identical bidders' type distributions, $m=2$, and any linear contract $T(q)=tq$.
    Suppose there exists a symmetric Bayes Nash equilibrium where everyone plays $(a,b)$ with strictly increasing $b$.
    There must exist $\theta^\ast\in[0,B]$ such that
    $$a(\theta)=\left\{\begin{array}{ll}
        \mbox{action }1, & \mbox{if }\theta\leq\theta^* \\
        \mbox{action }2, & \mbox{if }\theta>\theta^*
    \end{array}\right..$$
    where $\theta^\ast$ is the solution to the equation
    \begin{equation}\label{eqn:thetastar-increasingbidding}
        (D(\theta^\ast))^{n-1}\left(\theta^\ast(\nu_2-\nu_1)+t(\mu_2-\mu_1)\right)=c_2-c_1
    \end{equation}
    if a solution exists, and $\theta^\ast=B$ otherwise.
\end{proposition}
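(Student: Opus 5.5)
The plan is to decouple the action choice from the bid. Fix every other bidder at $(a,b)$ and let bidder $i$ of type $\theta$ keep bidding $b(\theta)$ while switching her action. Because $b$ is strictly increasing, all bidders use the same $b$, and types are continuous, bidder $i$ wins (up to a null tie event) exactly when $\theta$ is the highest type. This happens with probability $(D(\theta))^{n-1}$, independently of her action. Her expected payment conditional on winning, $\bar p(\theta)$, depends only on the bid profile, hence only on $\theta$ and the others' types, not on her action. Writing $U_j(\theta)$ for her interim utility from playing action $j$ with bid $b(\theta)$, we get
$$U_j(\theta)=-c_j+(D(\theta))^{n-1}\bigl(\theta\nu_j+t\mu_j\bigr)-P(\theta),$$
where $P(\theta)$ is the expected payment, the same for $j=1,2$. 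Hence
$$\Delta(\theta):=U_2(\theta)-U_1(\theta)=(D(\theta))^{n-1}\bigl(\theta(\nu_2-\nu_1)+t(\mu_2-\mu_1)\bigr)-(c_2-c_1).$$
I will use only Assumption~\ref{assumption:auction} here (highest bid wins, losers pay $0$), not the specific payment form.

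Next comes the equilibrium condition. In a BNE, the pair $(a(\theta),b(\theta))$ must be weakly better than the deviation $(j,b(\theta))$ for every $j$. So $a(\theta)=2$ forces $\Delta(\theta)\ge0$, and $a(\theta)=1$ forces $\Delta(\theta)\le0$. The function $\Delta$ is strictly increasing on $(0,B]$: $D$ is strictly increasing and positive there, $\nu_2>\nu_1$, and $\mu_2-\mu_1>0$ because $\mathbf{s}_2$ strictly dominates $\mathbf{s}_1$ and $t\ge0$. Also $\Delta(0)\le -(c_2-c_1)<0$. Two cases follow.

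\emph{Case 1: Eqn.~(\ref{eqn:thetastar-increasingbidding}) has a solution $\theta^\ast$.} It is unique by strict monotonicity. Then $\Delta<0$ on $[0,\theta^\ast)$, forcing action $1$ there, and $\Delta>0$ on $(\theta^\ast,B]$, forcing action $2$ there. At $\theta^\ast$ itself either action is consistent. Following the paper's convention (the footnote to Theorem~\ref{thm:BNE-identical}), this single point is measure-zero and I take action $1$ there.

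\emph{Case 2: no solution exists.} Then $\Delta(B)<0$, since $\Delta$ is continuous, $\Delta(0)<0$, and there is no root. Thus $\Delta<0$ everywhere, action $1$ is forced for all $\theta$, and this is the threshold form with $\theta^\ast=B$.

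The main obstacle is making rigorous that the winning probability and the expected payment are independent of the chosen action. This requires that ties among bids occur with probability zero, which holds because $b$ is strictly increasing and $\mathcal D$ is atomless. It also requires that $p^{(i)}$ depends only on bids, which are held fixed under the deviation. A secondary subtlety is that the tie-breaking convention at $\theta^\ast$ makes the "must" in the statement hold only up to this boundary point, and I will note this explicitly.
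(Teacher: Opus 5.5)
Your proposal is correct and follows essentially the same route as the paper. Hold the bid $b(\theta)$ fixed, observe that the winning probability $D(\theta)^{n-1}$ and the expected payment $P(\theta)$ do not depend on the chosen action, and read off the threshold from the strict monotonicity of $D(\theta)^{n-1}\left(\theta(\nu_2-\nu_1)+t(\mu_2-\mu_1)\right)$. Your explicit treatment of the no-solution case and of the indifference at $\theta^\ast$ is slightly more careful than the paper's.
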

\begin{proof}
    Consider any bidder $i$ with type $\theta$ (we have omitted the superscript $(i)$ for notation simplicity).
    Since $\Sigma$ is a symmetric equilibrium and $b$ is strictly increasing, bidder $i$ wins with probability $D(\theta)^{n-1}$ (i.e., if and only if her type is the highest).
    Let $P(\theta)=\mathbb{E}_{\theta^{(-i)}\sim\mathcal{D}^{-i}}[p^{(i)}(b(\theta^{(1)}),\ldots,b(\theta^{(n)}))]$ be the expected payment of bidder $i$. 
    Notice that $P$ is a function from $\Theta$ (bidder $i$'s type) to $\mathbb{R}_{\geq0}$, and, in particular, $P$ is the ``personal'' expected payment function for bidder $i$ under the strategy profile $\Sigma$.

    The expected utility for bidder $i$ when playing action $1$ is $D(\theta)^{n-1}(\theta\nu_1+t\mu_1)-P(\theta)-c_1$, and the expected utility for playing action $2$ is $D(\theta)^{n-1}(\theta\nu_2+t\mu_2)-P(\theta)-c_2$.
    For $(a,b)$ to be a best response to $\Sigma^{(-i)}$, it must be that action $2$ is played if and only if
    $$D(\theta)^{n-1}(\theta\nu_1+t\mu_1)-P(\theta)-c_1\leq D(\theta)^{n-1}(\theta\nu_2+t\mu_2)-P(\theta)-c_2,$$
    or equivalently,
    $$D(\theta)^{n-1}\left(\theta(\nu_2-\nu_1)+t(\mu_2-\mu_1)\right)\geq c_2-c_1.$$
    Since the left-hand side is strictly increasing in $\theta$ (as we have assumed $D(\cdot)$ is strictly increasing), $a$ must be the threshold strategy with the threshold $\theta^\ast$ exactly as described in Theorem~\ref{thm:BNE-identical}.
\end{proof}

Now we are ready to prove Theorem~\ref{thm:revenue-equivalence}.
\begin{proof}[Proof of Theorem~\ref{thm:revenue-equivalence}]
    It suffices to prove the following two claims: 
    \begin{enumerate}
        \item the action strategy $a$ must be the same as it is in $\Sigma_{\spa}$, and
        \item for each bidder $i$ and each $\theta^{(i)}\in\Theta$, the expected payment (where the expectation is taken over $\theta^{(-i)}\sim\mathcal{D}^{(-i)}$) of bidder $i$ under $\Sigma$ in $\mathcal{M}$ is the same as the expected payment under $\Sigma_{\spa}$ in $\mathcal{M}_{\spa}$.
    \end{enumerate}
    Recall that the auctioneer's revenue consists of three parts: the payment from the winner, minus the quality reward to the winner, and plus the quality value obtained.
    When claim 2 holds, the expectation of the first part is the same as it is in $\mathcal{M}_{\spa}$; when claim 1 holds, the expectations of the second and the third parts are the same as they are in $\mathcal{M}_{\spa}$.

    Consider any bidder $i$ with type $\theta$ (we have omitted the superscript $(i)$ for notation simplicity).
    Since $\Sigma$ is a symmetric equilibrium and $b$ is strictly increasing, bidder $i$ wins with probability $D(\theta)^{n-1}$ (i.e., if and only if her type is the highest).
    Let $P(\theta)=\mathbb{E}_{\theta^{(-i)}\sim\mathcal{D}^{-i}}[p^{(i)}(b(\theta^{(1)}),\ldots,b(\theta^{(n)}))]$ be the expected payment of bidder $i$. 
    Notice that $P$ is a function from $\Theta$ (bidder $i$'s type) to $\mathbb{R}_{\geq0}$, and, in particular, $P$ is the ``personal'' expected payment function for bidder $i$ under the strategy profile $\Sigma$.

    Claim 1 follows immediately by comparing Proposition~\ref{prop:strictlyIncreasingBidding->threshold} and Theorem~\ref{thm:BNE-identical}.

    To prove claim 2, we first consider $\theta_1,\theta_2\in[0,\theta^\ast]$ with $\theta_1<\theta_2$.
    If bidder $i$'s type is $\theta_1$, she will play action $1$ and bid $b(\theta_1)$ in the equilibrium, and the expected utility is $D(\theta_1)^{n-1}(\theta_1\nu_1+t\mu_1)-P(\theta_1)-c_1$.
    Suppose she bids $b(\theta_2)$ instead (while still playing action $1$), her expected utility is $D(\theta_2)^{n-1}(\theta_1\nu_1+t\mu_1)-P(\theta_2)-c_1$. In this case, bidder $i$ wins with probability $D(\theta_2)^{n-1}$ instead, and her expected payment becomes $P(\theta_2)$.
    Since bidder $i$ should have played a best response in the equilibrium, we must have
    $$D(\theta_1)^{n-1}(\theta_1\nu_1+t\mu_1)-P(\theta_1)-c_1\geq D(\theta_2)^{n-1}(\theta_1\nu_1+t\mu_1)-P(\theta_2)-c_1,$$
    and equivalently,
    $$P(\theta_2)-P(\theta_1)\geq\left(D(\theta_2)^{n-1}-D(\theta_1)^{n-1}\right)(\theta_1\nu_1+t\mu_1).$$
    Now, suppose bidder $i$'s true type is $\theta_2$ instead, and consider the deviation when she bids $b(\theta_1)$.
    We can obtain
    $$D(\theta_2)^{n-1}(\theta_2\nu_1+t\mu_1)-P(\theta_2)-c_1\geq D(\theta_1)^{n-1}(\theta_2\nu_1+t\mu_1)-P(\theta_1)-c_1,$$
    and equivalently, 
    $$P(\theta_2)-P(\theta_1)\leq\left(D(\theta_2)^{n-1}-D(\theta_1)^{n-1}\right)(\theta_2\nu_1+t\mu_1).$$
    Putting together, we have
    $$\left(D(\theta_2)^{n-1}-D(\theta_1)^{n-1}\right)(\theta_1\nu_1+t\mu_1)\leq P(\theta_2)-P(\theta_1)\leq\left(D(\theta_2)^{n-1}-D(\theta_1)^{n-1}\right)(\theta_2\nu_1+t\mu_1).$$
    Since $D(\cdot)$ is continuous, by letting $\theta_2$ approach $\theta_1$ and denoting $z=\theta_1$, we see that $P(\cdot)$ must be continuous and we have
    $$d(P(z))=d\left(D(z)^{n-1}\right)\cdot (z\nu_1+t\mu_1),$$
    where $d(\cdot)$ is the differential operator.\footnote{We use the differential operator in this proof for simplicity and clarity. It is straightforward to convert the proof into a more formal language with Riemann sums.}
    By integrating $z$ over $[0,\theta]$ for $\theta\in[0,\theta^\ast)$, we have
    $$P(\theta)=P(0)+\int_{0}^{\theta}(z\nu_1+t\mu_1)\,d\left(D(z)^{n-1}\right).$$
    Since a losing bidder pays $0$ and a bidder with type $0$ loses with probability $1$, we have $P(0)=0$, so
    $$P(\theta)=\int_{0}^{\theta}(z\nu_1+t\mu_1)\,d\left(D(z)^{n-1}\right).$$
    This is exactly the expected payment when the second-price auction is used.

    By a similar analysis for $\theta\in[\theta^\ast,B]$, we can prove that
$$P(\theta)=P(\theta^\ast)+\int_{\theta^\ast}^\theta(z\nu_2+t\mu_2)\,d\left(D(z)^{n-1}\right)=\int_{0}^{\theta^\ast}(z\nu_1+t\mu_1)\,d\left(D(z)^{n-1}\right)+\int_{\theta^\ast}^\theta(z\nu_2+t\mu_2)\,d\left(D(z)^{n-1}\right).$$
    Again, this is exactly the expected payment when the second-price auction is used.
    We have proved claim 2.
\end{proof}

\section{First-Price Auction}
\label{sec:fpa}
In this section, we show that, under the first-price auction, a symmetric Bayes Nash equilibrium always exists for identical type distributions, in which the bidding strategy is strictly increasing in the bidder's type.
We show the existence for two actions in the following, and the extension to $m$ actions in Appendix~\ref{append:fpa}.
Combined with the revenue equivalence result established in the previous section, this implies that setting $t\approx\omega$ asymptotically maximizes the revenue under the symmetric first-price auction equilibrium as $n\to\infty$.

\begin{theorem}[Existence of Symmetric BNE for Identical Type Distributions under FPA]\label{thm:BNE-identical-fpa}
    Consider the first-price auction when bidders' type distributions are identical (following Assumption~\ref{assumption:distribution}). 
    Suppose a linear contract $T(q)=tq$ is used and there are two actions $1$ and $2$.
    Then there exists a symmetric Bayes Nash equilibrium $\Sigma=(\sigma^{(1)},\ldots,\sigma^{(n)})$ where every bidder $i\in[n]$ adopts the same strategy $\sigma^{(i)}=(a,b)$.
    The action strategy $a$ is a threshold strategy with threshold $\theta^\ast\in[0,B]$:
$$
a(\theta)=
\begin{cases}
\text{action }1, & \theta\le\theta^\ast,\\
\text{action }2, & \theta>\theta^\ast.
\end{cases}
$$
    where $\theta^\ast$ is the solution to the equation
    \begin{equation}
        (D(\theta^\ast))^{n-1}\left(\theta^\ast(\nu_2-\nu_1)+t(\mu_2-\mu_1)\right)=c_2-c_1\label{eqn:fpa-threshold}
    \end{equation}
    if a solution exists, and $\theta^\ast=B$ otherwise.

    Each bidder adopts a continuous and strictly increasing bidding strategy $b(\theta)$, which is given by
    $$
    b(\theta)
    =
    \theta\nu_{a(\theta)}+t\mu_{a(\theta)}
    -
    \frac{
        c_{a(\theta)}-c_1
        +
        \displaystyle\int_0^\theta
        D(z)^{n-1}\nu_{a(z)}\,dz
    }{
        D(\theta)^{n-1}
    }
    $$
    for every $\theta>0$, and set $b(0):=t\mu_1$ for $\theta=0$.
    Equivalently, when $\theta\le\theta^\ast$,
    \begin{equation}\label{eqn:fpa-1}
         b(\theta)
        =
        \theta\nu_1+t\mu_1
        -
        \frac{
            \displaystyle\int_0^\theta D(z)^{n-1}\nu_1\,dz
        }{
            D(\theta)^{n-1}
        },
    \end{equation}
    and when $\theta>\theta^\ast$,
    \begin{equation}\label{eqn:fpa-2}
         b(\theta)
        =
        \theta\nu_2+t\mu_2
        -
        \frac{
            c_2-c_1
            +
            \displaystyle\int_0^{\theta^\ast}D(z)^{n-1}\nu_1\,dz
            +
            \displaystyle\int_{\theta^\ast}^{\theta}D(z)^{n-1}\nu_2\,dz
        }{
            D(\theta)^{n-1}
        }.
    \end{equation}
\end{theorem}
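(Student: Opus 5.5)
Our plan is to verify directly that the profile in which every bidder plays $(a,b)$, with $a$ the stated threshold strategy and $b$ given by (\ref{eqn:fpa-1})--(\ref{eqn:fpa-2}), is a Bayes Nash equilibrium. We first check that $b$ is well defined, continuous and strictly increasing. Given this, a bidder of type $\theta$ wins exactly when her type is the highest, which has probability $D(\theta)^{n-1}$. The formula for $b$ then comes from the envelope (revenue-equivalence) computation in the proof of Theorem~\ref{thm:revenue-equivalence}. That computation gives interim utility
$$U(\theta)=\int_0^\theta D(z)^{n-1}\nu_{a(z)}\,dz-c_1,$$
and hence the payment $P(\theta)=D(\theta)^{n-1}b(\theta)$.

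\textbf{Step 1: $b$ is continuous and strictly increasing.} On $(0,\theta^\ast]$, rewrite (\ref{eqn:fpa-1}) as $t\mu_1+\nu_1\,\E[\max \text{ of } n-1 \text{ others}\mid \text{all}<\theta]$. This is strictly increasing, and it tends to $t\mu_1=b(0)$ as $\theta\to0$. On $(\theta^\ast,B]$, differentiate (\ref{eqn:fpa-2}). Writing $G=D^{n-1}$, we get
$$b'(\theta)=\frac{G'(\theta)}{G(\theta)^2}\Big(c_2-c_1+\int_0^{\theta}G\nu_{a}\,dz\Big)>0.$$
Continuity at $\theta^\ast$ reduces to
$$\theta^\ast(\nu_2-\nu_1)+t(\mu_2-\mu_1)=\frac{c_2-c_1}{G(\theta^\ast)},$$
which is exactly (\ref{eqn:fpa-threshold}). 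If no solution exists, then $\theta^\ast=B$ and only the first branch is used.

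\textbf{Step 2: no profitable deviation.} Fix the others at $(a,b)$. Since $b$ is continuous and strictly increasing with range $[b(0),b(B)]$, any bid outside this range is dominated. A bid below $b(0)$ loses surely; a bid above $b(B)$ is dominated by bidding $b(B)$. Any remaining bid equals $b(\hat\theta)$ for some $\hat\theta$. A type-$\theta$ bidder choosing action $j$ and report $\hat\theta$ gets
$$W_j(\theta,\hat\theta)=G(\hat\theta)\big(\theta\nu_j+t\mu_j-b(\hat\theta)\big)-c_j .$$
We need $\max_{j,\hat\theta}W_j(\theta,\hat\theta)$ to equal the equilibrium value $U(\theta)$. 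The standard single-crossing argument applies. By construction, $W_{a(\hat\theta)}(\hat\theta,\hat\theta)=U(\hat\theta)$, so
$$W_j(\theta,\hat\theta)=U(\hat\theta)+G(\hat\theta)\big[(\theta-\hat\theta)\nu_j+t(\mu_j-\mu_{a(\hat\theta)})\big]-(c_j-c_{a(\hat\theta)}).$$
Setting $j=a(\hat\theta)$, the deviation gain is
$$U(\theta)-U(\hat\theta)-G(\hat\theta)(\theta-\hat\theta)\nu_{a(\hat\theta)}=\int_{\hat\theta}^{\theta}\big(G(z)\nu_{a(z)}-G(\hat\theta)\nu_{a(\hat\theta)}\big)dz\ge0,$$
because $G\nu_a$ is nondecreasing in $z$. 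For $j\ne a(\hat\theta)$, we show that switching the action at a fixed report does not help. This uses the threshold condition: $G(\hat\theta)\big(\hat\theta(\nu_2-\nu_1)+t(\mu_2-\mu_1)\big)-(c_2-c_1)$ has the sign of $\hat\theta-\theta^\ast$.

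\textbf{Main obstacle.} The hard case is the cross term, e.g.\ $\hat\theta\le\theta^\ast$ with action $2$ and true $\theta>\theta^\ast$. There we would bound
$$W_2(\theta,\hat\theta)\le W_2(\theta,\theta^\ast)\le U(\theta),$$
using monotonicity of $W_2(\theta,\cdot)$ below $\theta^\ast$. This monotonicity should follow because $\partial_{\hat\theta}W_2=G'(\hat\theta)(\theta\nu_2+t\mu_2)-(G b)'(\hat\theta)$, with $(Gb)'=G'(\hat\theta\nu_1+t\mu_1)$ on $[0,\theta^\ast]$, and this derivative is positive. The symmetric case ($\hat\theta>\theta^\ast$, action $1$, $\theta\le\theta^\ast$) is handled the same way. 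The quantity $\theta\nu_1+t\mu_1$ is less than $\hat\theta\nu_2+t\mu_2$, so the derivative is negative and the best report is $\theta^\ast$. The tie at $\hat\theta=\theta^\ast$ has measure zero.
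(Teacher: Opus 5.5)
Your proof is correct. Your Step~1 is essentially the paper's Claim~2. You also rightly skip the paper's Claim~1, which shows the bid formula is forced among strictly increasing bids; that is not needed for existence.

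Where you differ is the no-deviation step. The paper does not split by action at all:
\begin{itemize}
\item It takes the upper envelope $\mathcal{V}(\theta,x)=\max_{j}\{x(\theta\nu_j+t\mu_j)-c_j\}$ over actions at winning probability $x$.
\item By the envelope theorem, $\partial_\theta\mathcal{V}(\theta,x)=x\nu_{a(\theta,x)}$ is weakly increasing in $x$.
\item A single integral comparison, $\mathcal{U}(\theta)-\mathcal{U}(z)=\int_z^\theta\partial_s\mathcal{V}(s,D(s)^{n-1})\,ds\ge\mathcal{V}(\theta,D(z)^{n-1})-\mathcal{V}(z,D(z)^{n-1})$ for $\theta\ge z$ (and the mirror inequality for $\theta<z$), then rules out every bid paired with the best action at that bid.
\end{itemize}
You instead run a case analysis:
\begin{itemize}
\item the prescribed action at the mimicked type, via monotonicity of $G\nu_a$;
\item action switches when $\theta,\hat\theta$ lie on the same side of $\theta^\ast$, via the threshold equation;
\item the two cross cases, by showing $W_j(\theta,\cdot)$ is monotone in the report. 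This uses $(Gb)'(\hat\theta)=G'(\hat\theta)(\hat\theta\nu_{a(\hat\theta)}+t\mu_{a(\hat\theta)})$ on each side of $\theta^\ast$ and pushes the best report to $\theta^\ast$.
\end{itemize}
Your route is more elementary and shows concretely where a deviator would want to report. The paper's increasing-differences argument is shorter, and it is what lets the proof carry over verbatim to $m$ actions (as in its appendix). Your cross-case monotonicity would have to be iterated across several thresholds there.

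A few details need tidying.
\begin{itemize}
\item Your displayed decomposition of $W_j(\theta,\hat\theta)$ is only correct for $j=a(\hat\theta)$. In general the bracket is $\theta\nu_j-\hat\theta\nu_{a(\hat\theta)}+t(\mu_j-\mu_{a(\hat\theta)})$. This is harmless, since you only use it with $j=a(\hat\theta)$.
\item The expression that decides an action switch at report $\hat\theta$ is $G(\hat\theta)\bigl(\theta(\nu_2-\nu_1)+t(\mu_2-\mu_1)\bigr)-(c_2-c_1)$, with the \emph{true} type $\theta$, not $\hat\theta$. Its sign follows from the threshold equation plus monotonicity in both arguments exactly when $\theta$ and $\hat\theta$ lie on the same side of $\theta^\ast$ (including the no-solution case $\theta^\ast=B$). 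That is precisely why the cross cases need separate treatment.
\item The step $W_2(\theta,\theta^\ast)\le U(\theta)$ for $\theta>\theta^\ast$ needs a line. Here $\theta^\ast<B$ solves the threshold equation, so type $\theta^\ast$ is indifferent between the two actions. Hence $W_2(\theta,\theta^\ast)=U(\theta^\ast)+G(\theta^\ast)(\theta-\theta^\ast)\nu_2\le U(\theta^\ast)+\int_{\theta^\ast}^{\theta}G\nu_2\,dz=U(\theta)$. Alternatively, let $\hat\theta\downarrow\theta^\ast$ in your on-path case.
\end{itemize}
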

\begin{proof}
    According to Proposition~\ref{prop:strictlyIncreasingBidding->threshold}, if a symmetric equilibrium exists with a continuous and strictly increasing bidding strategy, then the action strategy in the equilibrium must be the same as in $\Sigma_\spa$.
    Therefore, it suffices to prove the following three claims:  
    \begin{enumerate}
        \item If a symmetric equilibrium with a strictly increasing bidding strategy exists, it must be as described in the theorem.
        We prove this by showing that the expected utility of each bidder is uniquely determined among all such possible equilibria. 
        The proof also provides an explicit construction of the bidding function.
        \item The bidding strategy is continuous and strictly increasing.
        \item For a bidder with fixed type $\theta\in\Theta$, the deviation to any pair of action and bid is not profitable. This shows that the symmetric profile described in the theorem is a Bayes Nash equilibrium.
    \end{enumerate}
    
    For claim 1, consider any symmetric equilibrium with the action strategy $a(\cdot)$ described above and some strictly increasing bidding strategy $b(\cdot)$.
    For a bidder with type $\theta$, she wins the auction when she has the highest type with probability $D(\theta)^{n-1}$.
    Let $\calU(\theta)$ denote the expected utility when she follows the prescribed strategy, we have
    \begin{equation}
    \calU(\theta) = D(\theta)^{n-1}
        \left(
            \theta\nu_{a(\theta)}
            +t\mu_{a(\theta)}
            -b(\theta)
        \right)
        -
        c_{a(\theta)}.
        \label{eqn:fpa-utility}
    \end{equation}

    We will show a necessary condition of $\calU(\cdot)$ under any symmetric equilibrium with a strictly increasing bidding strategy.
    Consider two types $\theta_1>\theta_2$ in $[0,B]$ that satisfy $a(\theta_1)=a(\theta_2)=j$. 
    Under the equilibrium, bidder with type $\theta_1$ will not deviate to bid $b(\theta_2)$, therefore it is necessary to guarantee
    $$\calU(\theta_1)-D(\theta_2)^{n-1}\left(\theta_1\nu_j+t\mu_j-b(\theta_2)\right)+c_j\ge 0.$$
    Substituting $\calU(\theta_2)=D(\theta_2)^{n-1}(\theta_2\nu_j+t\mu_j-b(\theta_2))-c_j$, the above inequality is equivalent to
    $$\calU(\theta_1)-\calU(\theta_2)\ge D(\theta_2)^{n-1}(\theta_1-\theta_2)\nu_j.$$
    
    Similarly, bidder with type $\theta_2$ will not deviate to bid $b(\theta_1)$, which gives us
    $$\calU(\theta_1)-\calU(\theta_2)\le D(\theta_1)^{n-1}(\theta_1-\theta_2)\nu_j.$$
    
    Dividing the two inequalities by $\theta_1-\theta_2$ gives
    $$
        D(\theta_2)^{n-1}\nu_j
        \le
        \frac{\calU(\theta_1)-\calU(\theta_2)}{\theta_1-\theta_2}
        \le
        D(\theta_1)^{n-1}\nu_j.
    $$
    Fix a point $\theta\neq\theta^*$, then the action strategy $a(\cdot)$ is locally constant around $\theta$. 
    By taking $\theta_1\rightarrow\theta$ and $\theta_2\rightarrow\theta$, the left and right bounds converge to the same value as $D(\cdot)^{n-1}$ is continuous and strictly increasing. 
    Therefore, at every point except $\theta^*$, $\calU$ is
    differentiable with
    $$
        \mathcal U'(\theta)=D(\theta)^{n-1}\nu_{a(\theta)}.
    $$
    The only point at which the action strategy may fail to be locally constant is the threshold $\theta^\ast$, which is a single point and does not affect the subsequent integration.
    
    The lowest type $0$ wins with probability zero and chooses the lowest-cost action, so
    $$
        \calU(0)=-c_1.
    $$
    Integrating, we have
    \begin{equation}
        \calU(\theta)
        =
        -c_1+\int_0^\theta D(z)^{n-1}\nu_{a(z)}\,dz,
        \label{eqn:fpa-utility-int}
    \end{equation}
    which has a unique form.

    Combining Eqns.~(\ref{eqn:fpa-utility}) and~(\ref{eqn:fpa-utility-int}), we conclude that, among all strictly increasing bidding strategies, the following is the unique candidate for a symmetric equilibrium:
    $$
    b(\theta)
    =
    \theta\nu_{a(\theta)}+t\mu_{a(\theta)}
    -
    \frac{
        c_{a(\theta)}-c_1
        +
        \displaystyle\int_0^\theta
        D(z)^{n-1}\nu_{a(z)}\,dz
    }{
        D(\theta)^{n-1}
    }
    $$
    for $\theta>0$.\footnote{\label{fn:fpa-b}Notice that the formula for $b(\theta)$ can also be derived from the second observation in the proof of Theorem~\ref{thm:revenue-equivalence}. Consider $\theta\in[0,\theta^\ast]$ for example. The expected payment in the second-price auction is $P(\theta)=\int_0^\theta (z\nu_1+t\mu_1)d(D(z)^{n-1})$ as derived in the proof of Theorem~\ref{thm:revenue-equivalence}. For the first price auction, the expected payment is $b(\theta)\cdot(D(\theta)^{n-1})$. The equation $b(\theta)\cdot(D(\theta)^{n-1})=\int_0^\theta (z\nu_1+t\mu_1)d(D(z)^{n-1})$ yields exactly (\ref{eqn:fpa-1}) by applying integration by parts. Equation (\ref{eqn:fpa-2}) can be obtained similarly.}
    For $\theta=0$, to make the bidding strategy continuous, we define 
    $$b(0):=\lim_{\theta\rightarrow0^+}b(\theta)=t\mu_1.$$

    We now prove claim 2 and verify that $b(\cdot)$ given above is continuous and strictly increasing. 
    First, consider an interval on which the action is constant, say $a(\theta)=j$. 
    On this interval,
    $$
        b(\theta)
        =
        \theta\nu_j+t\mu_j
        -
        \frac{
            c_j-c_1
            +
            \displaystyle\int_0^\theta D(z)^{n-1}\nu_{a(z)}\,dz
        }{
            D(\theta)^{n-1}
        }.
    $$
    It is straightforward to see its continuity.
    By taking the derivative, we obtain
    $$
    \begin{aligned}
        b'(\theta)
        &=
        \frac{
            (D(\theta)^{n-1})'
        }{
            D(\theta)^{2n-2}
        }
        \left(
            c_j-c_1
            +
            \int_0^\theta D(z)^{n-1}\nu_{a(z)}\,dz
        \right).
    \end{aligned}
    $$
    For every $\theta>0$, $c_j-c_1\ge 0$ and $\nu_{a(\theta)}>0$.
    Therefore, $b'(\theta)>0$ on every interval where the action is constant.

    It remains to show the continuity at $\theta^*$.
    The left limit of the bid at $\theta^\ast$ is
    $$
        b(\theta^{\ast-})
        =
        \theta^\ast\nu_1+t\mu_1
        -
        \frac{\displaystyle\int_0^{\theta^*}
        D(z)^{n-1}\nu_{a(z)}\,dz}{D(\theta^\ast)^{n-1}}.
    $$
    The right limit is
    $$
        b(\theta^{\ast+})
        =
        \theta^\ast\nu_2+t\mu_2
        -
        \frac{c_2-c_1+\displaystyle\int_0^{\theta^*}
        D(z)^{n-1}\nu_{a(z)}\,dz}{D(\theta^\ast)^{n-1}}.
    $$
    Thus, 
    $$
    \begin{aligned}
        b(\theta^{\ast+})
        -
        b(\theta^{\ast-})=
        \theta^\ast(\nu_2-\nu_1)+t(\mu_2-\mu_1)
        -
        \frac{c_2-c_1}{D(\theta^\ast)^{n-1}}=0,
    \end{aligned}
    $$
    where the last equality follows from Eqn.~(\ref{eqn:fpa-threshold}).
    This concludes the proof that the bidding strategy is continuous and strictly increasing in $\theta$.

    In the final step of claim $3$, we show that no profitable deviation exists. 
    Consider the case where a bidder with type $\theta$ deviates from the strategy stated above, while others follow it.
    When the bidder submits a bid $\tt b$, the expected utility for playing action $j$ is
    \begin{equation}
        \text{Probability of winning the auction}\times (\theta\nu_j+t\mu_j-{\tt b})-c_j,\label{eqn:fpa-deviation}
    \end{equation}
    where the probability of winning the auction depends only on $\tt b$ and is irrelevant to the chosen action.
    Therefore, instead of considering all possible pairs of actions and bids, it suffices to consider those pairs $(j,{\tt b})$ where $j$ is the action that maximizes Eqn.~(\ref{eqn:fpa-deviation}) under bid $\tt b$.

    Because $b(\cdot)$ is continuous and strictly increasing, every bid $\tt b$ in the interval $[b(0),b(B)]$ can be written uniquely as $b(z)$ for some $z\in[0,B]$. 
    If the bidder deviates to bid ${\tt b}=b(z)\in [b(0),b(B)]$, she will win with probability $D(z)^{n-1}$.
    Thus, her best expected utility from choosing the best action under bid $b(z)$ is
    $$
        \max_{j\in\{1,2\}}
        \left\{
            D(z)^{n-1}
            \left(
                \theta\nu_j+t\mu_j
            \right)
            -c_j
        \right\}
        -
        D(z)^{n-1}b(z).
    $$
    Define the value of the optimal action choice as
    $$
        \mathcal V(\theta,x)
        :=
        \max_{j\in\{1,2\}}
        \left\{
            x(\theta\nu_j+t\mu_j)-c_j
        \right\},
    $$
    where $\theta$ denotes the bidder's type and $x$ denotes the winning probability.   
    To prove the bidder does not deviate to bid $b(z)$, it suffices to prove that 
    $$\calU(\theta)\ge \mathcal V\left(\theta,D(z)^{n-1}\right)-D(z)^{n-1}b(z).$$
    Substituting with $\calU(z)=\mathcal V\left(z,D(z)^{n-1}\right)-D(z)^{n-1}b(z)$, it is equivalent to show
    \begin{equation}
        \calU(\theta)-\calU(z)\ge \mathcal V\left(\theta,D(z)^{n-1}\right)-\mathcal V\left(z,D(z)^{n-1}\right).\label{eqn:fpa-no-deviation}
    \end{equation}
    
    To prove Eqn.~(\ref{eqn:fpa-no-deviation}), the following lemma about $\mathcal V(\theta,x)$ is established.

    Let $a(\theta,x)$ denote the action that maximizes $\mathcal{V}(\theta,x)$, i.e.,
    $$
    a(\theta,x) \in \arg\max_{j\in\{1,2\}}
    \left\{
    x(\theta\nu_j+t\mu_j)-c_j
    \right\}.
    $$

Equivalently, since there are only two actions, we can write it explicitly as
$$
a(\theta,x)=
\begin{cases}
1, & \text{if } x(\theta\nu_1+t\mu_1)-c_1 \ge x(\theta\nu_2+t\mu_2)-c_2,\\
2, & \text{otherwise}.
\end{cases}
$$
Note that $a(\theta,D(\theta)^{n-1}) = a(\theta)$.
    
    \begin{lemma}\label{lem:fpa-V}
    For every $\theta\in[0,B]$ where $\frac{\partial \mathcal V(\theta,x)}{\partial \theta}$ exists, we have
        $$
        \frac{\partial \mathcal V(\theta,x)}{\partial \theta}=x\nu_{a(\theta,x)},
        $$
        and $\frac{\partial \mathcal V(\theta,x)}{\partial \theta}$ is weakly increasing in $x$ almost everywhere for $x\in[0,1]$.
    \end{lemma}
    \begin{proof}
        For each fixed $x$, $\mathcal V(\cdot,x)$ is continuous and differentiable except possibly at one cutoff point. 
        According to the envelope theorem\footnote{The envelope theorem states that, for $f(x)=\max_y g(x,y)$ where $g(x,y)$ is continuously differentiable, it holds that $f'(x)=\frac{\partial g(x, y^*)}{\partial x}$ where $y^*$ is the maximizer to $g(x,y)$.}, it straightforwardly holds that
        $$
        \frac{\partial \mathcal V(\theta,x)}{\partial \theta}=x\nu_{a(\theta,x)}.
        $$
        Note that the difference in choosing the two actions
        $$(x(\theta\nu_2+t\mu_2)-c_2)-(x(\theta\nu_1+t\mu_1)-c_1)$$
        is weakly increasing in $x$ when $\theta$ is fixed.
        Therefore, increasing $x$ will only make action $2$ more attractive and lead to a possible change from action $1$ to action $2$ in $a(\theta,x)$.
        Since $\nu_2 > \nu_1$, the lemma concludes.
    \end{proof}

    We now turn to Eqn.~(\ref{eqn:fpa-no-deviation}), and first assume that $\theta\ge z$.
    Due to Eqn.~(\ref{eqn:fpa-utility-int}), we have
    $$\calU(\theta)-\calU(z)=\int_z^\theta D(s)^{n-1}\nu_{a(s)}\,ds=\int_z^\theta D(s)^{n-1}\nu_{a(s,D(s)^{n-1})}\,ds.$$
    Applying Lemma~\ref{lem:fpa-V}, we have
    \begin{align*}
        \calU(\theta)-\calU(z)&=
        \int_z^\theta \frac{\partial \mathcal V(s,D(s)^{n-1})}{\partial s}\,ds\\
        &\ge \int_z^\theta \frac{\partial \mathcal V(s,D(z)^{n-1})}{\partial s}\,ds\\
        &= \int_0^\theta \frac{\partial \mathcal V(s,D(z)^{n-1})}{\partial s}\,ds-\int_0^z \frac{\partial \mathcal V(s,D(z)^{n-1})}{\partial s}\,ds\\
        &= \mathcal V(\theta,D(z)^{n-1})-\mathcal V(z,D(z)^{n-1}),
    \end{align*}
    where the first equality holds due to Lemma~\ref{lem:fpa-V}, the next inequality holds due to the weakly increasing property for every $s\ge z$, and the last equality holds as $D(z)^{n-1}$ in the integral is a constant.

    Then we assume that $\theta<z$.
    By the same reasoning as in the previous case, we have
    \begin{align*}
        \calU(z)-\calU(\theta)&=
        \int_\theta^z D(s)^{n-1}\nu_{a(s)}\,ds\\
        &=\int_\theta^z \frac{\partial \mathcal V(s,D(s)^{n-1})}{\partial s}\,ds\\
        &\le \int_\theta^z \frac{\partial \mathcal V(s,D(z)^{n-1})}{\partial s}\,ds\\
        &= \mathcal V(z,D(z)^{n-1})-\mathcal V(\theta,D(z)^{n-1}).
    \end{align*}

    We have shown that no deviation to any bid ${\tt b}\in[b(0),b(B)]$ is profitable.
    It remains to consider ${\tt b}\notin[b(0),b(B)]$.
    If ${\tt b}<b(0)$, the bidder will lose the auction with probability $1$.
    The best expected utility is $-c_1$, which is the same as bidding $b(0)$, so she will not deviate.
    If ${\tt b}>b(B)$, the bidder will win the auction with probability $1$, which is strictly dominated by bidding $b(B)$, as the bidder will still win the auction with a strict decrease in payment from $\tt b$ to $b(B)$.
    This proves claim $3$.
\end{proof}

We remark here that the above equilibrium is unique among equilibria with strictly increasing bidding strategies, except that $b(0)$ may be chosen arbitrarily within $[0, t\mu_1]$ (as the bidder with type $0$ always loses the auction, any bid in $[0, t\mu_1]$ results in a utility of $-c_1$).

\section{Conclusion and Future Work}
\label{sec:conclusion}
We have proposed a new mechanism design framework by integrating a contract into an auction mechanism.
Our framework is a natural generalization of the auction theory when the quality of the transaction is concerned and the bidders face the risk of sunk cost for effort investment, and it is also a natural generalization of the principal-agent model with competitive environments of multiple agents.
We believe this is a natural economic framework that deserves further study.

We have focused on auction mechanisms that allocate the slot to the bidder with the highest bid, and worked out the optimal linear contracts under symmetric strictly increasing equilibria through a careful analysis of second-price auction and a generalized revenue equivalence theorem.
We demonstrate that the use of contracts can effectively mitigate the moral hazard issue and improve the revenue of the auctioneer.

There are multiple directions for future work.
A natural direction is to design the optimal mechanism together with the corresponding optimal linear contract that maximizes the auctioneer's expected revenue.
To reach an optimal revenue, the mechanism will violate the property that allocates the slot to the highest bid.
Recall that in the classical single-item auction setting without the contract, the Myerson auction achieves the optimal revenue by setting a reserve price for the bidders according to their type distributions.
We may also need to set a reserve price under the contract setting; however, our optimization seems more complicated.
The mechanism and the choice of $t$ under our model will jointly affect the allocation, the payment, as well as the bidders' incentives to choose actions.
Therefore, the optimal reserve price may further depend on the choice of $t$, instead of merely a function of type distributions.
We need to optimize the mechanism and $t$ jointly.

We have shown that under the second-price auction (and other auctions satisfying Theorem~\ref{thm:revenue-equivalence}), when both the quality reward $\Omega(\cdot)$ and contract $T(\cdot)$ have the linear form, it is optimal to set $t\approx\omega$, that is, it is optimal for the auctioneer to fully pass through her quality value to the winner.
It is natural to ask whether the full-pass-through conclusion continues to hold when $\Omega(\cdot)$ is nonlinear, when restricting the contract $T(\cdot)$ to have the same form as $\Omega(\cdot)$; if not, how does the optimal contract depend on the shape of $\Omega(\cdot)$

Other than this, another future direction is to explore the setting with richer contract forms or richer information settings.
On contracts, there is a rich literature that studies simple contracts (i.e., linear contracts) versus optimal contracts (see, e.g., \citet{dutting2019simple}).
It is known that simple linear contracts are optimal in many settings, and they are not in other settings.
It is interesting to see if linear contracts are optimal in our framework.
On information settings, bidders' types can be correlated, and bidders may have partial information, or even full information, about the others' types.
In this case, equilibrium analysis becomes more technically challenging, while our current technique for finding a Bayes Nash equilibrium in Theorem~\ref{thm:BNE-identical} relies on the symmetry of bidders' type distributions.

\bibliographystyle{plainnat}
\bibliography{reference}

\newpage
\appendix
\section{Omitted Details in Sect.~\ref{sec:spa}}
\subsection{Proof of Theorem~\ref{thm:BNE-existence}}
\label{sec:BNE-existence-proof}
    The proof uses Brouwer's fixed point theorem and Lemma~\ref{prop:monotone_diff}.

    We first extend the support of each $\mathcal{D}^{(i)}$ from $[0,B^{(i)}]$ to $[-B-2,B+2]$ for some sufficiently large $B$ (to be defined later) by assigning probability density $d^{(i)}(x)=0$ for $x\in[-B-2,B+2]\setminus[0,B^{(i)}]$.
    The value of $B$ is set to $B=\max\{B_L,B_R\}$ such that:
    \begin{enumerate}
        \item when an arbitrary bidder's type is at $\theta=B_R$ (which may happen with probability $0$), playing action $2$ yields better expected utility than playing action $1$ regardless of the strategies played by the remaining $n-1$ bidders;
        \item when an arbitrary bidder's type is at $\theta=-B_L$ (which happens with probability $0$ since types are nonnegative in our setting, and we extend the domain of $v(\theta,q)=\theta\cdot\vq(q)$ to allow negative utility for each bidder), playing action $1$ yields better expected utility than playing action $2$.
    \end{enumerate}
    To show that $B$ is well-defined, we only need to show $B_R$ and $B_L$ above exist.
    At $\theta=B_R$, the expected utility for playing action $1$ is at most $-c_1+B_R\cdot\nu_1+\beta_1$ (by assuming that the bidder always wins and the payment is $0$), and the expected utility for playing action $2$ is at least $-c_2+B_R\cdot\nu_2+\beta_2-\left(\max_{i\in[n]}\{B^{(i)}\}\nu_2+\beta_2\right)$ (this bidder wins with probability $1$ if we set $B_R>\max_{i\in[n]}\{B^{(i)}\}$, and the payment is at most $\left(\max_{i\in[n]}\{B^{(i)}\}\nu_2+\beta_2\right)$ even if the type of the second highest bidder reaches maximum $\max_{i\in[n]}\{B^{(i)}\}$).
    Since $\nu_2>\nu_1$, we can find a finite $B_R$ such that $-c_2+B_R\cdot\nu_2+\beta_2-\left(\max_{i\in[n]}\{B^{(i)}\}\nu_2+\beta_2\right)>-c_1+B_R\cdot\nu_1+\beta_1$ and $B_R>\max_{i\in[n]}\{B^{(i)}\}$.
    At $\theta=-B_L<0$, it suffices to set $B_L>\frac{\beta_2}{\nu_2}$.
    Due to the negative type, this bidder will win with probability $0$ if playing action $1$, and the expected utility is $-c_1$.
    For playing action $2$, the expected utility is either $-c_2$ (if the bidder loses) or at most $-c_2-B_L\cdot\nu_2+\beta_2<-c_2$ (if the bidder wins, in which case we ignore the payment). In both cases, playing action $2$ is worse than playing action $1$ since $-c_2<-c_1$.

    Let
    $$W=\max\left\{\sup_{\theta^{(i)}\in[-B-2,B+2],\Sigma^{(-i)}}\left|U_1(\theta^{(i)},\Sigma^{(-i)})\right|,\sup_{\theta^{(i)}\in[-B-2,B+2],\Sigma^{(-i)}}\left|U_2(\theta^{(i)},\Sigma^{(-i)})\right|\right\}.$$
    To show that $W$ is well-defined, it suffices to show that there is a universal upper and lower bound to each of $U_1(\theta^{(i)},\Sigma^{(-i)})$ and $U_2(\theta^{(i)},\Sigma^{(-i)})$ over all $\theta^{(i)}$ and $\Sigma^{(-i)}$.
    This is straightforward, as each term in the expected utility is bounded from both above and below: the cost is either $-c_1$ or $-c_2$, the valuation is between $(-B-2)\nu_2$ and $(B+2)\nu_2$, the expected reward is either $\beta_1$ or $\beta_2$, and the expected payment is between $0$ and $\left(\max_{i\in[n]}\{B^{(i)}\}\nu_2+\beta_2\right)$.

    Next, we define a function $\Phi:[-B-2,B+2]^n\to[-B-2,B+2]^n$ and write $\Phi=(\phi_1,\ldots,\phi_n)$ where each $\phi_i$ is a function from $[-B-2,B+2]^n$ to $[-B-2,B+2]$.
    At each point $\mathbf{x}=(x_1,\ldots,x_n)\in[-B-2,B+2]^n$, let
    $$\phi_i(\mathbf{x})=x_i+\frac1W\left(U_1^{(i)}\left(x_i,\overline{\Sigma^{(-i)}}\right)-U_2^{(i)}\left(x_i,\overline{\Sigma^{(-i)}}\right)\right),$$
    where $\overline{\Sigma^{(-i)}}$ denote the strategy profile such that each bidder $i'\neq i$ plays the threshold strategy with threshold $x_{i'}$ (i.e., bidder $i'$ play action $1$ if and only if $\theta^{(i')}<x_{i'}$).
    To show that the range of $\Phi$ is within $[-B-2,B+2]^n$, it suffices to show that $\phi_i(\mathbf{x})\in[-B-2,B+2]$.
    By our definition of $W$, we have $\frac1W\left(U_1^{(i)}\left(x_i,\overline{\Sigma^{(-i)}}\right)-U_2^{(i)}\left(x_i,\overline{\Sigma^{(-i)}}\right)\right)\in[-2,2]$.
    On the other hand, by our definition of $B$, the term $\frac1W\left(U_1^{(i)}\left(x_i,\overline{\Sigma^{(-i)}}\right)-U_2^{(i)}\left(x_i,\overline{\Sigma^{(-i)}}\right)\right)$ is positive if $x_i\in[-B-2,-B]$, and it is negative if $x_i\in[B,B+2]$.
    Therefore, we have $\phi_i(\mathbf{x})\in [-B-2,B+2]$, and our definition of $\Phi$ is valid.

    Notice that a fixed point $\mathbf{x}=(x_1,\ldots,x_n)$ of $\Phi$ with $\Phi(\mathbf{x})=\mathbf{x}$ gives a Bayes Nash equilibrium where bidder $i$ plays the threshold strategy with threshold $x_i$.
    This is implied by Lemma~\ref{prop:monotone_diff}\footnote{Lemma~\ref{prop:monotone_diff} can be extended to the case with possibly negative type. When $\theta<0$, playing action $1$ loses with probability $1$, so the expected utility is $-c_1$, which is a constant. It then remains to show that $U_2(\theta)$ is increasing in $\theta$ for $\theta<0$. To show this, consider $\theta_1<\theta_2<0$. Fixing a value ${\tt b}_{\max}$ representing the highest bid among the remaining $n-1$ bidders, it suffices to show that we always have $V_2(\theta_2;{\tt b}_{\max})\geq V_2(\theta_1;{\tt b}_{\max})$. To see this, there are three scenarios: 1) the bidder loses at both $\theta_1$ and $\theta_2$, 2) the bidder wins at $\theta_2$ and loses at $\theta_1$, and 3) the bidder wins at both $\theta_1$ and $\theta_2$. For scenario 1), we have $V_2(\theta_2;{\tt b}_{\max})= V_2(\theta_1;{\tt b}_{\max})=-c_2$. For scenario 2), we have $V_2(\theta_1;{\tt b}_{\max})=-c_2$ and $V_2(\theta_2;{\tt b}_{\max})=-c_2+(\theta_2\nu_2+\beta_2-{\tt b}_{\max})$, and $\theta_2\nu_2+\beta_2-{\tt b}_{\max}>0$ since $\theta_2\nu_2+\beta_2$ is the bidder's bid and the bidder wins the auction. For scenario 3), we have $V_2(\theta_1;{\tt b}_{\max})=-c_2+(\theta_1\nu_2+\beta_2-{\tt b}_{\max})$ and $V_2(\theta_2;{\tt b}_{\max})=-c_2+(\theta_2\nu_2+\beta_2-{\tt b}_{\max})$, and $V_2(\theta_1;{\tt b}_{\max})<V_2(\theta_2;{\tt b}_{\max})$ since $\theta_1<\theta_2$.}: at the fixed point, we have $U_1^{(i)}\left(x_i,\overline{\Sigma^{(-i)}}\right)=U_2^{(i)}\left(x_i,\overline{\Sigma^{(-i)}}\right)$ fixing $\overline{\Sigma^{(-i)}}$, and playing action $1$ is better if $\theta^{(i)}<x_i$ and playing action $2$ is better if $\theta^{(i)}>x_i$; this implies bidder $i$'s threshold strategy with $x_i$ is a best response to $\overline{\Sigma^{(-i)}}$.
    Notice that, it does not matter if $x_i$ is not within the original range $[0,B^{(i)}]$ of the distribution: having $x_i<0$ means bidder $i$ always plays action $2$, and having $x_i>B^{(i)}$ means bidder $i$ always plays action $1$.
    
    It then remains to show the existence of a fixed point of $\Phi$.
    This is implied by Brouwer's fixed point theorem if $\Phi$ is continuous.
    It then remains to show that $\Phi$ is continuous, and it suffices to show each of $U_1^{(i)}$ and $U_2^{(i)}$ is a continuous function from $[-B-2,B+2]^n$ to $\mathbb{R}$.
    We consider $i=1$ without loss of generality, and we write $U_1$ and $U_2$ for $U_1^{(1)}$ and $U_2^{(1)}$.

    To show that $U_1$ is continuous, we have
    $$U_1(\mathbf{x})=-c_1+\text{ProbabilityOfWinning}(\mathbf{x})\cdot(x_1\cdot\nu_1+\beta_1)-\text{ExpectedPayment}(\mathbf{x}).$$
    We will show that $\text{ProbabilityOfWinning}(\mathbf{x})$ and $\text{ExpectedPayment}(\mathbf{x})$ are both continuous functions, which implies the continuity of $U_1$.
    We have
    $$\text{ProbabilityOfWinning}(\mathbf{x})=\prod_{i=2}^nD^{(i)}\left(\text{CT}(x_1,x_i)\right),$$
    where $\text{CT}(x_1,x_i)$ is the ``critical threshold'' for bidder $i$ such that bidder $1$'s bid $x_1\cdot\nu_1+\beta_1$ is higher than bidder $i$'s, given bidder $i$'s strategy with threshold $x_i$.
    In particular, if $x_i\geq x_1$, then bidder $1$ wins if and only if $\theta^{(i)}\leq x_1$. 
    If $x_1>x_i>\frac{x_1\nu_1+\beta_1-\beta_2}{\nu_2}$, bidder $i$'s bid is higher than bidder $1$'s if and only if $\theta^{(i)}>x_i$ (if $\theta^{(i)}>x_i$, bidder $i$ plays action $2$ and bids $\theta^{(i)}\nu_2+\beta_2>x_i\nu_2+\beta_2>x_1\nu_1+\beta_1=b^{(1)}$; if $\theta^{(i)}<x_i$, bidder $i$ plays action $1$ and bid $\theta^{(i)}\mu_1+\beta_1<x_i\nu_1+\beta_1<x_1\nu_1+\beta_1=b^{(1)}$).
    If $x_i\leq \frac{x_1\nu_1+\beta_1-\beta_2}{\nu_2}$, bidder $i$'s bids is higher than bidder $1$'s if and only if $\theta^{(i)}>\frac{x_1\nu_1+\beta_1-\beta_2}{\nu_2}$ (if $\theta^{(i)}>\frac{x_1\nu_1+\beta_1-\beta_2}{\nu_2}\geq x_i$, bidder $2$ plays action $2$, and the bid is higher than bidder $1$'s as computed before; if $\theta^{(i)}<\frac{x_1\nu_1+\beta_1-\beta_2}{\nu_2}$, bidder $i$'s bid is smaller than bidder $1$'s even if bidder $i$ play action $2$).
    Thus, we have
    $$\text{CT}(x_1,x_i)=\left\{\begin{array}{ll}
        x_1 &  \mbox{if } x_i\geq x_1\\
        x_i & \mbox{if } x_1>x_i>\frac{x_1\nu_1+\beta_1-\beta_2}{\nu_2}\\
        \frac{x_1\nu_1+\beta_1-\beta_2}{\nu_2} & \mbox{if } x_i\leq \frac{x_1\nu_1+\beta_1-\beta_2}{\nu_2}
    \end{array}\right..$$
    It is obvious to check that this function is continuous by checking the continuity at the two places $x_1=x_i$ and $x_i=\frac{x_1\nu_1+\beta_1-\beta_2}{\nu_2}$.
    Lastly, $\text{ProbabilityOfWinning}(\mathbf{x})$ is given by the product of all $D^{(i)}(\text{CT}(x_1,x_i))$, and the cumulative distribution function $D^{(i)}$ is assumed to be continuous in this theorem.
    We therefore conclude that $\text{ProbabilityOfWinning}(\mathbf{x})$ is continuous.

    For the second term, we have
    $$\text{ExpectedPayment}(\mathbf{x})=$$
    $$\int_0^{\text{CT}(x_1,x_2)}\cdots\int_0^{\text{CT}(x_1,x_n)}d^{(2)}(z_2)\cdots d^{(n)}(z_n)\max\{\mathfrak{B}^{(2)}(z_2),\cdots,\mathfrak{B}^{(n)}(z_n)\}\quad dz_2\cdots dz_n,$$
    where
    $$\mathfrak{B}^{(i)}(z_i)=\left\{\begin{array}{ll}
        z_i\cdot\nu_1+\beta_1 & \mbox{if }z_i\leq x_i \\
        z_i\cdot\nu_2+\beta_2 & \mbox{otherwise}
    \end{array}\right.$$
    is the bid of bidder $i$ given $\theta^{(i)}=z_i$.
    The function $\text{ExpectedPayment}(\mathbf{x})$ is continuous as the upper limits $\text{CT}(x_1,x_2),\ldots,\text{CT}(x_1,x_n)$ are continuous.
    We have shown that $U_1$ is continuous.
    The proof for the continuity of $U_2$ is similar.

\subsection{Non-uniqueness of Asymmetric Bayes Nash Equilibria}
\label{sec:non-uniqueness}
In this section, we consider the second-price auction and present an example where bidders' type distributions are the uniform distribution on $[0,1]$ and there are two equilibria: besides the symmetric one captured by Theorem~\ref{thm:BNE-identical}, there is also an asymmetric equilibrium.
The intuitions for the existence of asymmetric equilibria are as follows.

Consider two bidders and two actions.
Consider the strategy profile with thresholds $(\theta^{\ast(1)},\theta^{\ast(2)})$,
where bidder $i$ plays a threshold strategy with threshold $\theta^{\ast(i)}$. 
By Lemma~\ref{prop:monotone_diff}, we must have $U_1^{(1)}(\theta^{\ast(1)})=U_2^{(1)}(\theta^{\ast(1)})$ and $U_1^{(2)}(\theta^{\ast(2)})=U_2^{(2)}(\theta^{\ast(2)})$.
That is, at bidder $i$'s threshold $\theta_i$, bidder $i$ has the same utility for playing action $1$ and action $2$, supposing bidder $3-i$ has a type sampled from the uniform distribution on $[0,1]$ and plays a fixed threshold strategy with $\theta_{3-i}$.
We have two equations for two variables $\theta^{\ast(1)},\theta^{\ast(2)}$, and we know one solution with $\theta^{\ast(1)}=\theta^{\ast(2)}$ captured by Theorem~\ref{thm:BNE-identical}.
However, there may be another solution, as each of the two equations contains quadratic terms (for $n=2$).
We can carefully construct an instance where the other solution is also valid.

Now we describe a concrete example.
The number of bidders is $n=2$, and the number of actions is $m=2$.
Suppose playing action $1$ yields quality $9$ with probability $1$ and playing action $2$ yields quality $11$ with probability $1$.
The costs for actions $1$ and $2$ are $0.5$ and $2$ respectively.
Bidders' valuation is $v(\theta,q)=\theta\cdot \vq(q)$ with $\vq(q)=q$.
A linear contract is used with parameter $t=1$.
In this example, $\nu_1=\mu_1=9$, $\nu_2=\mu_2=11$, $c_1=0.5$, $c_2=2$, and $t=1$.

We show that bidder $1$ playing the threshold strategy with $\theta^{\ast(1)}=\frac35$ and bidder $2$ playing the threshold strategy with $\theta^{\ast(2)}=\frac25$ form a Bayes Nash equilibrium.
By Lemma~\ref{prop:monotone_diff}, it suffices to show that $U_1^{(1)}(\theta^{\ast(1)})=U_2^{(1)}(\theta^{\ast(1)})$ and $U_1^{(2)}(\theta^{\ast(2)})=U_2^{(2)}(\theta^{\ast(2)})$.
We verify these two equations below.

We first consider bidder $1$ and assume her type is exactly $\theta^{\ast(1)}=\frac35$.
When she plays action $1$, she wins if and only if bidder $2$'s type is below $\theta^{\ast(2)}=\frac25$.
To see this, if bidder $2$'s type is above $\theta^{\ast(2)}=\frac25$, bidder $2$ will play action $2$ and bids at least $\theta^{\ast(2)}\nu_2+t\mu_2=15.4$, which is more than bidder $1$'s bid $\theta^{\ast(1)}\nu_1+t\mu_1=14.4$.
If bidder $2$'s type is below $\theta^{\ast(2)}=\frac25$, bidder $2$ will play action $1$, and bidder $1$ wins since $\theta^{\ast(1)}>\theta^{\ast(2)}$.
Therefore, bidder $1$ wins with probability $\frac25$ given the uniform distribution.
The expected utility for bidder $1$ when playing action $1$ is
$$-c_1+\frac25\left(\theta^{\ast(1)}\nu_1+t\mu_1\right)-\int_0^{\frac25}(z\nu_1+t\mu_1)dz,$$
where the second term is the expected value and reward, and the third term is the expected payment.
Plugging in the values of the variables, the expected utility is
$$\frac{47}{50}.$$

When bidder $1$ plays action $2$, she wins if and only if bidder $2$'s type is below $\theta^{\ast(1)}=\frac35$.
The expected value and reward, combined, are
$$\frac35(\theta^{\ast(1)}\nu_2+t\mu_2)=\frac{264}{25}.$$
The expected payment is
$$\int_0^{\theta^{\ast(2)}}(z\nu_1+t\mu_1)dz+\int_{\theta^{\ast(2)}}^{\theta^{\ast(1)}}(z\nu_2+t\mu_2)dz=\frac{381}{50}.$$
The expected utility is
$$-c_2+\frac{264}{25}-\frac{381}{50}=\frac{47}{50},$$
which agrees with the utility for playing action $1$.

Now we analyze bidder $2$ and assume her type is exactly $\theta^{\ast(2)}=\frac25$.
When she plays action $1$, she wins if and only if bidder $1$'s type is below $\frac25$.
The expected value and reward, combined, are
$$\frac25\left(\theta^{\ast(2)}\nu_1+t\mu_1\right)=\frac{126}{25}.$$
The expected payment is
$$\int_{0}^{\theta^{\ast(2)}}(z\nu_1+t\mu_1)dz=\frac{108}{25}.$$
The expected utility is
$$-c_1+\frac{126}{25}-\frac{108}{25}=\frac{11}{50}.$$

When bidder $2$ plays action $2$, she wins if and only if bidder $1$'s type is below $\theta^{\ast(1)}$ for similar reasons as stated earlier.
The expected value and reward, combined, are
$$\frac35\left(\theta^{\ast(2)}\nu_2+t\mu_2\right)=\frac{231}{25}.$$
The expected payment is
$$\int_{0}^{\theta^{\ast(1)}}(z\nu_1+t\mu_1)dz=\frac{351}{50}.$$
The expected utility is
$$-c_2+\frac{231}{25}-\frac{351}{50}=\frac{11}{50},$$
which agrees with the utility when action $1$ was played.

\subsection{Optimal Contract for Large Finite $n$ with Asymptotic Analysis}
\label{sec:BNE-optimalt-finite-n}
In this section, we investigate the optimal contract with finite $n$ with asymptotic analysis.
We consider the special case where the distribution is a uniform distribution on $[0,1]$.
This is for the ease of computation.
One of our main purposes is to show that $t$ is not exactly $\omega$ in the second case.
Eqn.~(\ref{eqn:uniform-firstorder}) implies $\theta^\ast\to1$ as $n\to\infty$, so $t\to\omega$.
However, Eqn.~(\ref{eqn:uniform-firstorder}) also shows that $\theta^\ast$ is not exactly $1$ for finite $n$ if $\omega>\frac{(c_2-c_1)-(\nu_2-\nu_1)}{\mu_2-\mu_1}$.

\begin{theorem}\label{thm:BNE-optimalt-uniform}
    Consider the second-price auction.
    When bidders' type distributions are identically uniform distribution on $[0,1]$, and there are two actions $1$ and $2$, consider linear contract $T(q)=tq$ where $t$ maximizes the expected revenue subject to that bidders play the unique symmetric Bayes Nash equilibrium.
    We have
    \begin{itemize}
        \item If $\omega\le\frac{(c_2-c_1)-(\nu_2-\nu_1)}{\mu_2-\mu_1}$, then $t$ can be any value in $\left[0,\frac{(c_2-c_1)-(\nu_2-\nu_1)}{\mu_2-\mu_1}\right]$ for any $n\geq2$. 
        \item If $\omega>\frac{(c_2-c_1)-(\nu_2-\nu_1)}{\mu_2-\mu_1}$, for $n\geq\log_2\left(\frac{2(\nu_2-\nu_1)}{c_2-c_1}\right)$, we have
        $$t=\max\left\{\omega-\frac{(\nu_2-\nu_1)(1-\theta^\ast)}{\mu_2-\mu_1},0\right\},$$
        where $\theta^\ast$ is the solution to the equation
        \begin{equation}\label{eqn:uniform-firstorder}
    2(\nu_2-\nu_1)(\theta^\ast)^n+(\omega(\mu_2-\mu_1)-(\nu_2-\nu_1))(\theta^\ast)^{n-1}=c_2-c_1.
        \end{equation}
    \end{itemize}
\end{theorem}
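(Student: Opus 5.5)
The plan is to redo the revenue computation from the proof of Theorem~\ref{thm:BNE-optimalt} exactly, with no limits, using $D(z)=z$ on $[0,1]$. I will treat $\theta^\ast$ as the decision variable through the bijection given by Eqn.~(\ref{eqn:thetastar}), and then optimize a one-variable function of $\theta^\ast$. Write $\Delta\nu=\nu_2-\nu_1$, $\Delta\mu=\mu_2-\mu_1$ and $\Delta c=c_2-c_1$.

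\textbf{Step 1 (reparametrize).} When $\theta^\ast<1$ solves Eqn.~(\ref{eqn:thetastar}), we have
\[
t\Delta\mu=\Delta c\,(\theta^\ast)^{-(n-1)}-\theta^\ast\Delta\nu .
\]
This is strictly decreasing in $\theta^\ast$. Hence every $t$ with $t\le\frac{\Delta c-\Delta\nu}{\Delta\mu}$ gives $\theta^\ast=1$, where all bidders play action $1$ and the revenue is constant (the two $t\mu_1$ terms cancel, as in the proof of Theorem~\ref{thm:BNE-optimalt}). The constraint $t\ge0$ becomes $\theta^\ast\ge\theta_0$, where $\theta_0$ is the threshold at $t=0$.

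\textbf{Step 2 (exact revenue).} I split the revenue into the same pieces as before.
\begin{itemize}
\item The first term is $\nu_1\mathbb{E}[\theta^{(N_2)}]+\Delta\nu\int_{\theta^\ast}^1\pi(z)z\,dz$, with $\pi(z)=n(n-1)(1-z)z^{n-2}$. Only the second part depends on $\theta^\ast$, and its derivative is $-\Delta\nu\, n(n-1)(1-\theta^\ast)(\theta^\ast)^{n-1}$.
\item From Eqn.~(\ref{eqn:gamma}), $\gamma=-n(1-\theta^\ast)\bigl(\Delta c-(\theta^\ast)^n\Delta\nu\bigr)$.
\item The last term is $R_4=\omega\bigl(\mu_2-(\theta^\ast)^n\Delta\mu\bigr)$.
\end{itemize}
Adding the three derivatives, the $n^2$ terms should cancel and leave
\[
\frac{R'(\theta^\ast)}{n}=\Delta c-g(\theta^\ast),\qquad g(\theta):=\theta^{n-1}\bigl(2\Delta\nu\,\theta+\omega\Delta\mu-\Delta\nu\bigr).
\]
Setting $g(\theta^\ast)=\Delta c$ is exactly Eqn.~(\ref{eqn:uniform-firstorder}). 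Substituting $\Delta c(\theta^\ast)^{-(n-1)}=2\Delta\nu\theta^\ast+\omega\Delta\mu-\Delta\nu$ into Step 1 gives $t\Delta\mu=\omega\Delta\mu-\Delta\nu(1-\theta^\ast)$, which is the claimed formula for $t$.

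\textbf{Step 3 (sign analysis, the main obstacle).} I need that $R'$ changes sign exactly once on $[0,1]$.
\begin{itemize}
\item \emph{Case 1.} If $g(1)=\Delta\nu+\omega\Delta\mu\le\Delta c$, I expect $g<\Delta c$ on $[0,1)$ since $g\le g(1)$ there. Then $R$ is increasing, so $\theta^\ast=1$ is optimal. This corresponds to every $t$ in $\bigl[0,\frac{\Delta c-\Delta\nu}{\Delta\mu}\bigr]$, for every $n$.
\item \emph{Case 2.} If $g(1)>\Delta c$, $g$ may fail to be monotone when $\omega\Delta\mu<\Delta\nu$, because the bracket in $g$ is then negative for small $\theta$. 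This is where $n\ge\log_2(2\Delta\nu/\Delta c)$ is used.
\begin{itemize}
\item On $[0,\tfrac12]$ the bracket is at most $\omega\Delta\mu$, so $g$ is either negative there or bounded by $2^{-(n-1)}\Delta\nu\le\Delta c$. I will need to check the constant carefully.
\item On $[\tfrac12,1]$ the bracket is positive and increasing, so $g$ is increasing there.
\end{itemize}
Together these give a unique crossing $\theta^\ast\in(0,1)$. It follows that $R$ increases and then decreases, so this crossing is the global maximizer.
\end{itemize}

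\textbf{Step 4 (the $t\ge0$ constraint).} Finally, if the unconstrained maximizer satisfies $\theta^\ast<\theta_0$ (equivalently, the formula gives negative $t$), unimodality puts the constrained optimum at the boundary $\theta_0$, that is, $t=0$. This yields the $\max\{\cdot,0\}$ in the statement.
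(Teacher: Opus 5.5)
Your route is the paper's: the same exact revenue as a function of $\theta^\ast$, the same derivative $R'(\theta^\ast)=n\bigl(\Delta c-g(\theta^\ast)\bigr)$, the same substitution giving $t=\omega-\Delta\nu(1-\theta^\ast)/\Delta\mu$, and the same use of $n\ge\log_2(2\Delta\nu/\Delta c)$ to force the crossing above $\tfrac12$. Your argument that $g$ is monotone on $[\tfrac12,1]$ is equivalent to the paper's bound $\varphi''<0$ for $\theta^\ast>\tfrac12$, and it makes the uniqueness of the crossing explicit. In the sub-case $\omega\Delta\mu\ge\Delta\nu$, state outright that $g$ is increasing on all of $[0,1]$, because your $2^{-(n-1)}\Delta\nu$ bound on $[0,\tfrac12]$ is not available there.

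One correction is needed in Steps 1 and 4. Since $t\Delta\mu=\Delta c(\theta^\ast)^{-(n-1)}-\theta^\ast\Delta\nu$ is decreasing in $\theta^\ast$, the constraint $t\ge0$ means $\theta^\ast\le\theta_0$, not $\theta^\ast\ge\theta_0$. Consequently, the formula yields negative $t$ exactly when the unconstrained maximizer $\hat\theta$ exceeds $\theta_0$. If $\hat\theta<\theta_0$, as you wrote, then $t$ is positive and admissible, so no boundary correction is needed. With the correct orientation, the feasible range $(0,\theta_0]$ lies where $R'>0$, so the optimum is at $\theta_0$, i.e.\ $t=0$, and your conclusion stands.
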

\begin{proof}

We compute the expected revenue similar to the proof of Theorem~\ref{thm:BNE-optimalt}, but without substituting $D(\theta^\ast)=1-\frac xn$ and $(D(\theta^\ast))^n=e^{-x}$ (which only holds for the limit case $n\to\infty$).
With the uniform distribution assumption, we can explicitly compute the two integrals in the first term $R_1$ of the revenue.

For uniform distribution, by taking $D(\theta^\ast)=\theta^\ast$ in Eqn.~(\ref{eqn:thetastar}), we have
\begin{equation}\label{eqn:thetastar-uniform}
    (\nu_2-\nu_1)(\theta^\ast)^{n}+t(\mu_2-\mu_1)(\theta^\ast)^{n-1}=c_2-c_1.
\end{equation}

Let $\pi$ be the probability density function of the distribution of the second maximum number of $n$ independent random variables with distribution Unif$(0,1)$.
Then
$$ \pi(z) = n(n-1) \cdot (1-z) \cdot z^{n-2}.$$
The expectation for the first term of the revenue is
$$R_1(\theta^\ast)=\int_0^{\theta^\ast}\pi(z)z\nu_1 dz+\int_{\theta^\ast}^1\pi(z)z\nu_2 dz=\frac{n-1}{n+1}\nu_2-(\nu_2-\nu_1)(n-1)(\theta^\ast)^{n}+(\nu_2-\nu_1)\frac{n(n-1)}{n+1}(\theta^\ast)^{n+1}.$$
The sum of the second and the third terms, by taking $D(\theta^\ast)=\theta^\ast$ for uniform distribution in (\ref{eqn:gamma}), is 
$$R_{23}(\theta^\ast)=-n(1-\theta^\ast)\left(c_2-c_1-(\theta^\ast)^{n}(\nu_2-\nu_1)\right).$$
The last term is
$$R_4(\theta^\ast)=(\theta^\ast)^n\cdot\omega\mu_1+(1-(\theta^\ast)^n)\cdot\omega\mu_2=\omega\mu_2-\omega(\mu_2-\mu_1)\cdot(\theta^\ast)^n.$$
Combining together, the expected revenue is
$$R(\theta^\ast)=\frac{n-1}{n+1}\nu_2+\omega\mu_2-n(c_2-c_1)+n(c_2-c_1)\theta^\ast+((\nu_2-\nu_1)-\omega(\mu_2-\mu_1))(\theta^\ast)^n-\frac{2n}{n+1}(\nu_2-\nu_1)(\theta^\ast)^{n+1}.$$
Denote the above by $\varphi(\theta^\ast)$.
We have
$$\varphi'(\theta^\ast)=n(c_2-c_1)+n((\nu_2-\nu_1)-\omega(\mu_2-\mu_1))(\theta^\ast)^{n-1}-2n(\nu_2-\nu_1)(\theta^\ast)^n,$$
and
$$\varphi''(\theta^\ast)=n(n-1)((\nu_2-\nu_1)-\omega(\mu_2-\mu_1))(\theta^\ast)^{n-2}-2n^2(\nu_2-\nu_1)(\theta^\ast)^{n-1}.$$

For the first case with $\omega\le\frac{(c_2-c_1)-(\nu_2-\nu_1)}{\mu_2-\mu_1}$ and for $\theta^\ast\in(0,1]$, we have
\begin{align*}
    \varphi'(\theta^\ast)&\ge n\left(c_2-c_1+\left(\nu_2-\nu_1-(\mu_2-\mu_1)\frac{(c_2-c_1)-(\nu_2-\nu_1)}{\mu_2-\mu_1}\right)(\theta^\ast)^{n-1}-2(\nu_2-\nu_1)(\theta^\ast)^n\right)\\
    &=n\left(c_2-c_1+(-(c_2-c_1)+2(\nu_2-\nu_1))(\theta^\ast)^{n-1}-2(\nu_2-\nu_1)(\theta^\ast)^n\right)\\
    &\geq n\left(c_2-c_1+(-(c_2-c_1)+2(\nu_2-\nu_1))(\theta^\ast)^{n-1}-2(\nu_2-\nu_1)(\theta^\ast)^{n-1}\right)\tag{since $\theta^\ast\leq1$}\\
    &=n\left((c_2-c_1)(1-(\theta^\ast)^{n-1})\right)\\
    &\geq0.
\end{align*}
This implies $\varphi(\theta^\ast)$ is optimized at the right boundary $\theta^\ast=1$.
In this case, all bidders always play action $1$.
The expected revenue is
$$R(1)=\frac{n-1}{n+1}\nu_1+\omega\mu_1,$$
and this holds for all $t$ such that $\theta^\ast=1$.
By Eqn.~(\ref{eqn:thetastar-uniform}), we can set $t$ to be any value in $\left[0,\frac{(c_2-c_1)-(\nu_2-\nu_1)}{\mu_2-\mu_1}\right]$.

For the second case $\omega>\frac{(c_2-c_1)-(\nu_2-\nu_1)}{\mu_2-\mu_1}$, we have
$$\varphi'(0)=n(c_2-c_1)>0$$
and
$$\varphi'(1)=n((c_2-c_1)-(\nu_2-\nu_1)-\omega(\mu_2-\mu_1))\leq0.$$
The first-order condition $\varphi'(\theta^\ast)=0$ has a solution $\theta^\ast$ in $[0,1]$, and $\varphi'(\theta^\ast)=0$ yields Eqn.~(\ref{eqn:uniform-firstorder}).

On the other hand, by noticing $n^2>n(n-1)$, we have
\begin{equation}
    \varphi''(\theta^\ast)<(\theta^\ast)^{n-2}n(n-1)\cdot\left(-\omega(\mu_2-\mu_1)-(2\theta^\ast-1)(\nu_2-\nu_1)\right),
\end{equation}
which is negative for $\theta^\ast>0.5$.
Next, we will show that, for sufficiently large $n$, the solution to Eqn.~(\ref{eqn:uniform-firstorder}) is more than $0.5$, which gives the optimal revenue.

Firstly, if $\omega(\mu_2-\mu_1)\geq(\nu_2-\nu_1)$, we have
$$\varphi''(\theta^\ast)<(\theta^\ast)^{n-2}n(n-1)\cdot(-2\theta^\ast(\nu_2-\nu_1))<0$$
for $\theta^\ast\in(0,1]$.
The solution to Eqn.~(\ref{eqn:uniform-firstorder}) optimizes the revenue.
If $\omega(\mu_2-\mu_1)<(\nu_2-\nu_1)$, the coefficient $(\omega(\mu_2-\mu_1)-(\nu_2-\nu_1))$ of the second term on the left-hand side of Eqn.~(\ref{eqn:uniform-firstorder}) is negative, so the solution of (\ref{eqn:uniform-firstorder}) is larger than the solution to
$$2(\nu_2-\nu_1)(\theta^\ast)^n=c_2-c_1.$$
Therefore, to ensure that the solution of (\ref{eqn:uniform-firstorder}) is larger than $0.5$, it suffices to set $n$ such that
$$2(\nu_2-\nu_1)0.5^n\leq c_2-c_1,$$
which is
$$n\geq\log_2\left(\frac{2(\nu_2-\nu_1)}{c_2-c_1}\right).$$
Therefore, if $n$ satisfies the inequality above, the solution of (\ref{eqn:uniform-firstorder}) maximizes $\varphi(\cdot)$.

On the other hand, by taking $D(\theta^\ast)=\theta^\ast$ in Eqn.~(\ref{eqn:thetastar}), we have
$$(\nu_2-\nu_1)(\theta^\ast)^{n}+t(\mu_2-\mu_1)(\theta^\ast)^{n-1}=c_2-c_1$$
Combining this with Eqn.~(\ref{eqn:uniform-firstorder}), we have
$$(\nu_2-\nu_1)(\theta^\ast)^{n}+((\omega-t)(\mu_2-\mu_1)-(\nu_2-\nu_1))(\theta^\ast)^{n-1}=0,$$
which implies
$$(\nu_2-\nu_1)(1-\theta^\ast)=(\omega-t)(\mu_2-\mu_1),$$
and
$$t=\omega-\frac{(\nu_2-\nu_1)(1-\theta^\ast)}{\mu_2-\mu_1}.$$
The theorem follows by ensuring $t$ above is not negative.
\end{proof}

\section{Second-Price Auction: Extensions to $m$ Actions}
\label{sect:Incomplete-m-actions}
In this section, we extend our results in Sect.~\ref{sec:spa} to the more general case with $m$ actions.
As the main take-away message, the optimal reward parameter $t$ is still $\omega$ for more than two actions (again assuming $n\to\infty$).

We first extend our characterization of Bayes Nash equilibria (Theorem~\ref{thm:Bayesian-EquilibriaCharacterization}).
Before this, we describe an action-threshold-selection algorithm that will be used in later theorems.

Consider an arbitrary bidder $i$ and fix the strategies $\Sigma^{(-i)}$ of the remaining bidders.
Below, we omit the superscript $(i)$.
Upon $\Sigma^{(-i)}$, for any two actions $j_1,j_2$ with $j_1<j_2$, Lemma~\ref{prop:monotone_diff} implies there is a threshold $\theta_{j_1,j_2}$ such that
$$U_{j_1}(\theta,\Sigma^{(-i)})>U_{j_2}(\theta,\Sigma^{(-i)})\qquad\mbox{if }\theta<\theta_{j_1,j_2}$$
and
$$U_{j_1}(\theta,\Sigma^{(-i)})<U_{j_2}(\theta,\Sigma^{(-i)})\qquad\mbox{if }\theta>\theta_{j_1,j_2}.$$
We then have a collection of $\binom{m}{2}$ thresholds $\{\theta_{j_1,j_2}:j_1<j_2\}$.
This collection will yield a threshold strategy for bidder $i$ where a subset of actions $A=\{j_1,j_2,\ldots,j_{m'}\}\subseteq[m]$ (with $j_1<j_2<\cdots<j_{m'}$) is available such that actions with larger indices are played for larger types of bidder $i$.

Let us consider the examples with $m=3$.
We have three thresholds $\theta_{1,2},\theta_{1,3},\theta_{2,3}$.
If $0<\theta_{1,2}<\theta_{1,3}<\theta_{2,3}<B$, then action $1$ is the best response for $\theta<\theta_{1,2}$.
For $\theta\in(\theta_{1,2},\theta_{1,3})$, action $2$ is better than action $1$ since $\theta>\theta_{1,2}$.
On the other hand, action $2$ is also better than action $3$ since $\theta<\theta_{1,3}<\theta_{2,3}$.
Therefore, action $2$ is the best response.
For $\theta\in(\theta_{1,3},\theta_{2,3})$, similar analysis gives that action $2$ continues to be the best response.
For $\theta\in(\theta_{2,3},B]$, action $3$ is the best response.
Therefore, the two thresholds separating the three actions are $\theta_{1,2}$ and $\theta_{2,3}$.
The threshold $\theta_{1,3}$ is redundant: this is the point where playing action $3$ is as good as playing action $1$, but playing action $2$ is better.

For another example, consider $0<\theta_{2,3}<\theta_{1,3}<\theta_{1,2}<B$.
Before we reach $\theta=\theta_{2,3}$, the thresholds $\theta_{1,3}$ and $\theta_{1,2}$ are still not met.
Playing action $1$ is still the best response even when the utility for playing action $3$ begins to meet the utility for playing action $2$.
Thus, this threshold $\theta_{2,3}$ is redundant.
Moreover, we must have $\theta_{1,3}<\theta_{1,2}$ if $\theta_{2,3}$ is smallest among the three threshold.
To see this, we always have $U_3(\theta,\Sigma^{(-i)})>U_2(\theta,\Sigma^{-i})$ for $\theta>\theta_{2,3}$ by Lemma~\ref{prop:monotone_diff}; as a result, before we reach $\theta_{1,3}$ where $U_3(\theta,\Sigma^{(-i)})=U_1(\theta,\Sigma^{(-i)})$, action $2$ is still ``left behind'', with $U_2(\theta,\Sigma^{-i})<U_3(\theta,\Sigma^{-i})=U_1(\theta,\Sigma^{-i})$.
Therefore, for $\theta<\theta_{1,3}$, playing action $1$ is the best response; for $\theta>\theta_{1,3}$, playing action $3$ is the best response.
In this case, action $2$ is never played, and the two thresholds $\theta_{2,3}$ and $\theta_{1,2}$ are redundant.

Generalizing this idea, we have Algorithm~\ref{alg:selection} which, given a collection of $\binom{m}{2}$ thresholds, outputs a set $A$ of non-redundant actions and the corresponding thresholds separating them.
Notice that the algorithm only outputs the available actions: $j_1,\ldots,j_{m'}$ with $j_1<j_2<\cdots<j_{m'}$, and it is understood that the non-redundant thresholds separating them are $\theta_{j_1,j_2},\theta_{j_2,j_3},\ldots,\theta_{j_{m'-1},j_{m'}}$.
Below, we assume $\theta_{j_1,j_2}$ is defined for every pair $(j_1,j_2)$ with $j_1<j_2$.
If $U_{j_1}(\theta,\Sigma^{(-i)})<U_{j_2}(\theta,\Sigma^{(-i)})$ for all $\theta\in[0,B]$, we set $\theta_{j_1,j_2}=-1$ by convention.
Similarly, If $U_{j_1}(\theta,\Sigma^{(-i)})>U_{j_2}(\theta,\Sigma^{(-i)})$ for all $\theta\in[0,B]$, we set $\theta_{j_1,j_2}=B+1$.

\begin{algorithm}[H]
\caption{Action-Threshold-Selection Algorithm}
\label{alg:selection}
\begin{algorithmic}[1]
\REQUIRE a set of $\binom{m}{2}$ thresholds $\{\theta_{j_1,j_2}:j_1,j_2\in[m], j_1<j_2\}$
\ENSURE an ordered set $A$ of non-redundant actions \\
\selection$(\{\theta_{j_1,j_2}:j_1,j_2\in[m], j_1<j_2\})$:
\STATE initialize $X=[m]$ and $A=\emptyset$
\WHILE{$X\neq\emptyset$}
    \STATE let $j$ be the action in $X$ with the smallest index
    \IF{$j$ is the unique action in $X$}
        \STATE add $j$ to $A$
        \STATE \textbf{break}
    \ENDIF
    \STATE let $j^\ast$ be the action in $X$ with the smallest $\theta_{j,j^\ast}$
    \STATE remove all $j'$ with $j'<j^\ast$ from $X$
    \IF{$\theta_{j,j^\ast}>0$}
        \STATE add $j$ to $A$
    \ENDIF
    \IF{$\theta_{j,j^\ast}>B$}
        \STATE remove all actions from $X$
    \ENDIF
\ENDWHILE
\STATE \textbf{return} $A$
\end{algorithmic}
\end{algorithm}

Algorithm~\ref{alg:selection} does the following.
Each time, we consider an available action $j$ with the smallest index.
We find the action $j^\ast$ such that, for the first time, action $j$ is surpassed by action $j^\ast$ as $\theta$ increases.
This is the action $j^\ast$ with the smallest $\theta_{j,j^\ast}$.
Notice that all actions between $j$ and $j^\ast$ are redundant: if $j^\ast$ begins to dominate $j$ prior to any actions between $j$ and $j^\ast$, $j^\ast$ is better than all these actions later on, and these actions can never be the best responses.
If $\theta_{j,j^\ast}<0$, this means $j^\ast$ is always better than $j$ in the domain $[0,B]$, and $j$ is redundant.
Otherwise, we add $j$ as a non-redundant action, and the next iteration of the algorithm will check $j^\ast$ by definition (and see which action dominates $j^\ast$ first).
If $\theta_{j,j^\ast}>B$, this means $j$ remains the best response later on and all the way to the end.
We have located all the non-redundant actions and the algorithm will terminate and output $A$ after the inclusion of $j$.

\begin{theorem}[Characterization for BNE]\label{thm:Bayesian-EquilibriaCharacterization-m}
    Consider the second-price auction with monotone $T(\cdot)$.
    In any Bayes Nash equilibrium, each bidder $i$'s strategy is characterized by a set of actions $A^{(i)}=\{j_1^{(i)},\ldots,j_{m^{'(i)}}^{(i)}\}$ with $j_1^{(i)}<j_2^{(i)}<\cdots<j_{m^{'(i)}}^{(i)}$ and a corresponding set of thresholds $\theta_{j_1,j_2}^{(i)},\theta_{j_2,j_3}^{(i)},\ldots,\theta_{j_{m^{'(i)}-1},j_{m^{'(i)}}}^{(i)}$ such that action $j_k$ is played if $\theta^{(i)}\in[\theta_{j_{k-1},j_k}^{(i)},\theta_{j_{k},j_{k+1}}^{(i)})$, where we set $\theta_{j_0,j_1}=0$ and $\theta_{j_{m^{'(i)}}, j_{m^{'(i)}+1}}=B^{(i)}+1$.
\end{theorem}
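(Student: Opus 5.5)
The plan is to fix bidder $i$ and the opponents' strategies $\Sigma^{(-i)}$ arising in the equilibrium, and to study the best-response correspondence of bidder $i$ as a function of her type $\theta$. By equilibrium, bidder $i$ must play, at every type $\theta$, an action in $\arg\max_{j\in[m]}U_j(\theta,\Sigma^{(-i)})$. So it suffices to show that this argmax, as a function of $\theta$, is an interval-partition of $[0,B^{(i)}]$ into consecutive intervals with weakly increasing actions, and that the surviving actions and cut points are exactly those produced by Algorithm~\ref{alg:selection} applied to the pairwise thresholds $\theta_{j_1,j_2}$.

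\textbf{Step 1: pairwise single crossing.} For each pair $j_1<j_2$, Lemma~\ref{prop:monotone_diff} (which holds for arbitrary $m$, since it only compares two actions under a monotone $T$) gives that $U_{j_2}(\theta)-U_{j_1}(\theta)$ is weakly increasing and vanishes at most once. Continuity of $U_j(\cdot)$ in $\theta$ follows from the integral representation $U_j(\theta)=\int f(x)V_j(\theta;x)\,dx$ with $V_j$ continuous in $\theta$. Hence $\theta_{j_1,j_2}$ is well defined, with the conventions $-1$ or $B+1$. Moreover, $U_{j_1}>U_{j_2}$ strictly below $\theta_{j_1,j_2}$ and $U_{j_1}<U_{j_2}$ strictly above it. This is the strict single-crossing property for every pair.

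\textbf{Step 2: monotone best responses.} From single crossing I will derive that if action $j$ is optimal at $\theta$ and action $j'$ is optimal at $\theta'>\theta$, then $j'\ge j$. The argument is by contradiction. Suppose $j'<j$. Optimality of $j$ over $j'$ at $\theta$ gives $U_j(\theta)\ge U_{j'}(\theta)$, which forces $\theta\ge\theta_{j',j}$. Single crossing then gives $U_j>U_{j'}$ strictly at $\theta'>\theta$, unless the tie occurs exactly at the single crossing point. This contradicts the optimality of $j'$ at $\theta'$. Ties occur only at the finitely many thresholds, so the best response is a step function of $\theta$ that is weakly increasing in the action index.

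\textbf{Step 3: identify the thresholds.} I then show by induction over the iterations of Algorithm~\ref{alg:selection} that the successive optimal actions and their switch points match the algorithm's output. Suppose action $j$ is optimal just to the right of the current switch point. The next action to become optimal is the $j^\ast$ that first overtakes $j$, namely the one minimizing $\theta_{j,j^\ast}$. Every action strictly between $j$ and $j^\ast$ is already beaten by $j$ before $\theta_{j,j^\ast}$ and by $j^\ast$ after it. The latter uses transitivity: at $\theta_{j,j^\ast}$ we have $U_{j^\ast}=U_j>U_{j''}$, and since $j''<j^\ast$, single crossing for the pair $(j'',j^\ast)$ keeps $j^\ast$ ahead from that point on. The case $\theta_{j,j^\ast}\le 0$ corresponds to $j$ being skipped, and the case $\theta_{j,j^\ast}>B$ corresponds to termination. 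This establishes the interval description with half-open intervals $[\theta_{j_{k-1},j_k},\theta_{j_k,j_{k+1}})$. The endpoint convention is immaterial, because types form a continuum and the paper already adopts such tie conventions.

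\textbf{Main obstacle.} I expect the main difficulty to lie in this bookkeeping of Step 3, specifically in justifying that redundant pairwise thresholds, such as $\theta_{1,3}$ in the paper's example, never become switch points. I would handle it with the transitivity argument just described, applied at the minimizing crossing.
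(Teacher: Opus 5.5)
Your proposal is correct and follows essentially the same route as the paper: pairwise single crossing from Lemma~\ref{prop:monotone_diff}, monotonicity of the optimal action in the type, and then reading off the non-redundant actions and switch points via Algorithm~\ref{alg:selection}. Your write-up is more careful than the paper's sketch, for instance in supplying the continuity of $U_j$ that makes $\theta_{j_1,j_2}$ well defined, and your Step~2 alone already yields the stated threshold form (a weakly increasing action map with ties at finitely many types is a step function). Step~3 is therefore not needed for the statement, which also spares you the algorithm's unspecified tie-breaking when several $\theta_{j,j'}$ equal the convention value $-1$.
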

\begin{proof}
    It can be proved based on Lemma~\ref{prop:monotone_diff}.
    Consider a Bayes Nash equilibrium $\Sigma$. Fixed $\Sigma^{(-i)}$.
    We show that the best response to $\Sigma^{(-i)}$ must be a threshold strategy.
    Below, we omit the superscript $(i)$.
    For every pair of actions $j_1,j_2$ with $j_1<j_2$, define $\theta_{j_1,j_2}$ be the $\theta$ such that $U_{j_1}(\theta,\Sigma^{(-i)})=U_{j_2}(\theta,\Sigma^{(-i)})$.
    If $U_{j_1}(\theta,\Sigma^{(-i)})<U_{j_2}(\theta,\Sigma^{(-i)})$ for all $\theta\in[0,B]$, we set $\theta_{j_1,j_2}=-1$.
    Similarly, If $U_{j_1}(\theta,\Sigma^{(-i)})>U_{j_2}(\theta,\Sigma^{(-i)})$ for all $\theta\in[0,B]$, we set $\theta_{j_1,j_2}=B+1$.
    
    As $\theta$ increases, by Lemma~\ref{prop:monotone_diff}, $U_j(\theta)$ increases faster for larger $j$.
    By the time $U_j(\theta)$ becomes the maximum among $U_1(\theta),\ldots,U_m(\theta)$, action $j$ is optimal at this moment, and actions $1,\ldots,j-1$ will always be suboptimal as $\theta$ further increases (since $U_j(\theta)-U_{j'}(\theta)$ is strictly increasing for any $j'<j$).
    Following our discussions prior to this theorem, the best response to $\Sigma^{(-i)}$ is a threshold strategy by applying \selection$(\{\theta_{j_1,j_2}\})$ in Algorithm~\ref{alg:selection}.
\end{proof}

Next, we consider identical type distribution and extend Theorem~\ref{thm:BNE-identical} to the case with general $m$.

\begin{theorem}[Unique Symmetric BNE for Identical Type Distributions]\label{thm:BNE-identical-m}
    Consider the second-price auction with a linear contract $T(q)=tq$.
    When bidders' type distributions are identical, there is a symmetric Bayes Nash equilibrium where each bidder's strategy is a threshold strategy with available action sets $A=\{j_1,\ldots,j_{m'}\}$ (where $j_1<j_2<\cdots<j_{m'}$) and thresholds $\theta_{j_1,j_2},\theta_{j_2,j_3},\ldots,\theta_{j_{m'}-1,j_{m'}}$, where $A$ and the thresholds are given by 
    $$\text{\selection}(\{\theta_{j_1',j_2'}:j_1'<j_2'\})$$ 
    and each $\theta_{j_1',j_2'}$ is defined as follows.
    It is the solution to the equation
    \begin{equation}\label{eqn:thetastar-m}
        \left(D(\theta_{j_1',j_2'})\right)^{n-1}\left(\theta_{j_1',j_2'}\cdot(\nu_{j_2'}-\nu_{j_1'})+t(\mu_{j_2'}-\mu_{j_1'})\right)=c_{j_2'}-c_{j_1'}
    \end{equation}
    if a solution exists, and it is $B+1$ otherwise.

    In addition, the equilibrium stated above is the unique symmetric natural-strategy Bayes Nash equilibrium, up to tie-breaking at threshold types and tie-breaking in \selection.
\end{theorem}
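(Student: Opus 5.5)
The plan is to mirror the proof of Theorem~\ref{thm:BNE-identical}, replacing the single indifference condition by the family of pairwise conditions (\ref{eqn:thetastar-m}). I then use Lemma~\ref{prop:monotone_diff} and the discussion around Algorithm~\ref{alg:selection} to assemble the best response.

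\textbf{Setting up the candidate profile.} Let $\Sigma$ be the profile in which every bidder plays the threshold strategy output by \selection$(\{\theta_{j_1',j_2'}\})$, with the natural bid. Under the linear contract, the natural bid $\theta\nu_j+t\mu_j$ is increasing in $j$ for fixed $\theta$, and increasing in $\theta$ for fixed $j$. Because the selected actions increase with $\theta$, the equilibrium bid $\theta\nu_{a(\theta)}+t\mu_{a(\theta)}$ is strictly increasing in $\theta$. Hence, under $\Sigma$, a bidder wins exactly when her type is the highest.

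\textbf{Key computation.} Fix bidder $i$ with type $\theta$ and suppose the others follow $\Sigma$. Here lies the main obstacle: off-path action choices. If bidder $i$ plays a non-equilibrium action $j$, her bid $\theta\nu_j+t\mu_j$ may exceed or fall short of the equilibrium bid of type $\theta$. Her winning event is then no longer just ``highest type'', so $U_j(\theta)-U_{j'}(\theta)$ is not simply $D(\theta)^{n-1}(\theta(\nu_j-\nu_{j'})+t(\mu_j-\mu_{j'}))-(c_j-c_{j'})$ everywhere.

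I avoid computing every $U_j$ globally. Lemma~\ref{prop:monotone_diff} applies to any fixed $\Sigma^{(-i)}$, and it says each difference $U_{j_2}-U_{j_1}$ is increasing with at most one root. So the best response is a threshold strategy, and it suffices to check indifference between \emph{consecutive selected actions} $j_k,j_{k+1}$ at the claimed threshold $\theta_{j_k,j_{k+1}}$.

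At that type, bidding with action $j_k$ gives bid $\theta\nu_{j_k}+t\mu_{j_k}$, and bidding with action $j_{k+1}$ gives $\theta\nu_{j_{k+1}}+t\mu_{j_{k+1}}$. Since the other bidders' equilibrium bid function is continuous at the threshold, both bids equal the limits of the equilibrium bids from the left and from the right. The continuity argument is as in the first-price proof: at the threshold, continuity of bids requires exactly (\ref{eqn:thetastar-m}) in the SPA scaled form.

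I will verify directly that both bids beat every opponent of type $<\theta$ and lose to every opponent of type $>\theta$. For an opponent below the threshold, the bid is at most $\theta\nu_{j_k}+t\mu_{j_k}$. For an opponent above, the bid exceeds $\theta\nu_{j_{k+1}}+t\mu_{j_{k+1}}$. So both actions win with probability $D(\theta)^{n-1}$, and the expected second price $\bar p$ is common to both. Exactly as in Theorem~\ref{thm:BNE-identical}, $\bar p$ cancels, and the indifference becomes (\ref{eqn:thetastar-m}) for the pair $(j_k,j_{k+1})$. Non-selected actions are then handled by the redundancy argument preceding Algorithm~\ref{alg:selection}: each is dominated by a selected action on the relevant interval. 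That domination uses the ordering of the pairwise roots, which also come from (\ref{eqn:thetastar-m}), because the left-hand side there is strictly increasing in $\theta$. Boundary cases, where a root is absent and set to $B+1$ or the root is $\le 0$, follow as in the two-action proof.

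\textbf{Uniqueness.} Take any symmetric natural-strategy equilibrium. By Theorem~\ref{thm:Bayesian-EquilibriaCharacterization-m}, it is a threshold strategy with increasing actions, so bids are strictly increasing. Repeating the computation above, each consecutive threshold must satisfy (\ref{eqn:thetastar-m}), whose solution is unique. Which actions are selected is then forced by the same pairwise comparisons, i.e.\ by \selection. This pins the strategy down up to ties at thresholds and ties inside the algorithm.
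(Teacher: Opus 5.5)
\textbf{There is a genuine gap: your argument for the non-selected actions does not go through as stated.} The rest of your architecture is the paper's, and it is sound: Lemma~\ref{prop:monotone_diff}, indifference of consecutive selected actions at their thresholds, domination of the remaining actions, and uniqueness via Theorem~\ref{thm:Bayesian-EquilibriaCharacterization-m}. In particular, your direct verification for a consecutive selected pair $r=j_k<s=j_{k+1}$ at $\tau=\theta_{r,s}$ is correct.

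\textbf{Why the non-selected step fails.} The redundancy discussion before Algorithm~\ref{alg:selection} uses the \emph{true} indifference points of $U_{j_1}(\cdot,\Sigma^{(-i)})$ and $U_{j_2}(\cdot,\Sigma^{(-i)})$. In this theorem, however, \selection{} is fed the roots of (\ref{eqn:thetastar-m}). The two agree only for consecutive selected pairs. As you noted yourself, when bidder $i$ plays an off-path action $h$, her winning event is no longer ``all others have lower types.'' So the root of $U_h-U_r$ is in general not the root of (\ref{eqn:thetastar-m}) for $(r,h)$; the footnote to Proposition~\ref{prop:strictlyIncreasingBidding->threshold-m} makes exactly this caveat. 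Your sentence ``the ordering of the pairwise roots, which also come from (\ref{eqn:thetastar-m})'' is therefore false. Invoking the redundancy argument does not rule out a skipped action as a profitable deviation.

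\textbf{How to fix it.} Compare $h$ with the selected actions only at the selected thresholds, where the formula is valid.
\begin{itemize}
\item If $r<h<s$, then at $\tau$ the bid $\tau\nu_h+t\mu_h$ lies strictly between $\tau\nu_r+t\mu_r$ and $\tau\nu_s+t\mu_s$. So $h$ wins on exactly the same event (all opponents' types below $\tau$) and pays the same expected price.
\item Hence $U_h(\tau)-U_r(\tau)=D(\tau)^{n-1}\left(\tau(\nu_h-\nu_r)+t(\mu_h-\mu_r)\right)-(c_h-c_r)\le 0$. This holds because the algorithm picked $s$ with $\theta_{r,s}\le\theta_{r,h}$, and the expression is increasing in $\tau$.
\item Lemma~\ref{prop:monotone_diff} then gives $U_h\le U_r$ on $[0,\tau]$. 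Using $U_s(\tau)=U_r(\tau)$, it also gives $U_h\le U_s$ on $[\tau,B]$.
\item Chaining these inequalities over the selected thresholds yields the best response.
\item Actions above the last selected action $r$ are handled at $\theta=B$. There every action wins with probability one, and $\theta_{r,h}>B$ gives $U_h(B)<U_r(B)$.
\end{itemize}

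\textbf{The uniqueness part has the same hole.} ``Forced by the same pairwise comparisons'' requires showing $\theta_{r,h}\ge\tau$ for every other $h>r$, at each equilibrium threshold $\tau$ between used actions $r<s$.
\begin{itemize}
\item For $r<h<s$, this follows from the same-winning-event evaluation above.
\item For $h>s$, the bid $\tau\nu_h+t\mu_h$ may also beat some opponents with types above $\tau$. You need the extra observation that these additional wins carry nonnegative surplus in a second-price auction. Then $U_h(\tau)-U_r(\tau)$ is at least the (\ref{eqn:thetastar-m})-expression for $(r,h)$, and the equilibrium condition forces that expression to be $\le 0$.
\end{itemize}

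\textbf{Delete the continuity remark.} You claim that the opponents' bid function is continuous at the threshold and that (\ref{eqn:thetastar-m}) is a continuity condition. Both claims are false. The SPA natural bid jumps up at $\tau$ by $\tau(\nu_s-\nu_r)+t(\mu_s-\mu_r)>0$; continuity is the FPA phenomenon. It is precisely this gap in the opponents' bids that lets the bids of $r$, $s$, and every skipped $h$ share the same winning event.
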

\begin{proof}
For every pair of actions $r<s$, write
    $\widehat \theta_{r,s}$
for the number defined by \eqref{eqn:thetastar-m}, with the convention
$\widehat \theta_{r,s}=B+1$ if \eqref{eqn:thetastar-m} has no solution.
That is, if a solution exists, then
\[
    D(\widehat \theta_{r,s})^{n-1}
    \left(
        \widehat \theta_{r,s}(\nu_s-\nu_r)
        +t(\mu_s-\mu_r)
    \right)
    =c_s-c_r.
\]
The left-hand side is strictly increasing in $\theta$, so the solution is unique whenever it exists.

For notational simplicity, write
$\tau_\ell:=\widehat\theta_{j_\ell,j_{\ell+1}}$ and  $\ell=1,\ldots,m'-1$,
and set
    $\tau_0:=0$, and $\tau_{m'}:=B$.
Consider the symmetric strategy profile described in the theorem: every bidder uses the action strategy
\[
    a(\theta)=j_\ell
    \quad\text{for}\quad
    \theta\in[\tau_{\ell-1},\tau_\ell),
    \qquad \ell=1,\ldots,m',
\]
with arbitrary tie-breaking at threshold types, and uses the natural bid $b(\theta)=\theta\nu_{a(\theta)}+t\mu_{a(\theta)}$.
We show that this symmetric strategy profile is a Bayes Nash equilibrium.

Fix a bidder $i$ and suppose all other bidders use the strategy above. Let
    $U_j(\theta)$
denote bidder $i$'s interim expected utility when her type is $\theta$, she chooses action $j$,
and she submits the corresponding natural bid $\theta\nu_j+t\mu_j$.

We first show that two consecutive selected actions are tied exactly at their selected threshold.
Fix $\ell\in\{1,\ldots,m'-1\}$ and write
\[
    r:=j_\ell,\qquad s:=j_{\ell+1},\qquad \tau:=\tau_\ell=\widehat\theta_{r,s}.
\]
At type $\tau$, the natural bids from actions $r$ and $s$ are
\[
    \tau\nu_r+t\mu_r
    \qquad\text{and}\qquad
    \tau\nu_s+t\mu_s.
\]
Since $r$ and $s$ are consecutive selected actions, every opponent with type below $\tau$
plays some selected action weakly below $r$, and every opponent with type above $\tau$
plays some selected action weakly above $s$. Hence, for every opponent type $\theta'<\tau$,
the opponent's bid is strictly below $\tau\nu_r+t\mu_r$, while for every opponent type
$\theta'>\tau$, the opponent's bid is strictly above $\tau\nu_s+t\mu_s$.
Therefore, whether bidder $i$ chooses action $r$ or action $s$, she wins exactly when all
other bidders' types are below $\tau$. This event has probability $D(\tau)^{n-1}$.
Moreover, conditional on this event, the expected second-price payment is the same under
actions $r$ and $s$. Therefore,
\[
\begin{aligned}
    U_s(\tau)-U_r(\tau)
    =
    D(\tau)^{n-1}
    \left(
        \tau(\nu_s-\nu_r)+t(\mu_s-\mu_r)
    \right)
    -(c_s-c_r)=0,
\end{aligned}
\]
where the last equality follows from the definition of
$\tau=\widehat\theta_{r,s}$.

By Lemma~\ref{prop:monotone_diff}, $U_s(\theta)-U_r(\theta)$ is increasing in $\theta$.
Thus,
\[
    U_r(\theta)>U_s(\theta)
    \quad\text{for}\quad \theta<\tau,
\]
and
\[
    U_s(\theta)>U_r(\theta)
    \quad\text{for}\quad \theta>\tau,
\]
up to possible equality at the threshold type itself.

Next, consider an action $h$ that is skipped by the selection algorithm when it jumps from
$r$ to $s$, so that
\[
    r<h<s.
\]
Since $s$ is chosen by the selection algorithm as the first action that overtakes $r$, we have
\[
    \widehat\theta_{r,s}\le \widehat\theta_{r,h}.
\]
At $\tau=\widehat\theta_{r,s}$, the natural bid from action $h$ lies between the natural bids
from actions $r$ and $s$:
\[
    \tau\nu_r+t\mu_r
    <
    \tau\nu_h+t\mu_h
    <
    \tau\nu_s+t\mu_s.
\]
Hence, action $h$ has the same winning event and the same conditional expected payment
as actions $r$ and $s$ at type $\tau$: bidder $i$ wins exactly when all opponents' types are
below $\tau$. Consequently,
\[
\begin{aligned}
    U_h(\tau)-U_r(\tau)
    &=
    D(\tau)^{n-1}
    \left(
        \tau(\nu_h-\nu_r)+t(\mu_h-\mu_r)
    \right)
    -(c_h-c_r)  \\
    &\le 0,
\end{aligned}
\]
where the inequality follows from
$\tau=\widehat\theta_{r,s}\le \widehat\theta_{r,h}$ and the monotonicity of
\[
    D(\theta)^{n-1}
    \left(
        \theta(\nu_h-\nu_r)+t(\mu_h-\mu_r)
    \right).
\]
Again by Lemma~\ref{prop:monotone_diff}, $U_h(\theta)-U_r(\theta)$ is increasing in
$\theta$, so
\[
    U_h(\theta)\le U_r(\theta)
    \quad\text{for all}\quad \theta\le \tau.
\]
Similarly, since $h<s$ and $U_s(\tau)\ge U_h(\tau)$, Lemma~\ref{prop:monotone_diff}
implies
\[
    U_h(\theta)\le U_s(\theta)
    \quad\text{for all}\quad \theta\ge \tau.
\]
Thus, every action skipped between two consecutive selected actions is dominated by the lower
selected action before the threshold and by the higher selected action after the threshold.

The same argument also covers the terminal case. If, at some step, the current action $r$
has no successor selected inside $[0,B]$, then for every $h>r$ we have
$\widehat\theta_{r,h}>B$. At type $B$, choosing action $r$ or $h$ leads to the same winning
event, namely winning with probability one, and the same expected second-price payment.
Therefore,
\[
    U_h(B)-U_r(B)
    =
    B(\nu_h-\nu_r)+t(\mu_h-\mu_r)-(c_h-c_r)
    <0.
\]
By Lemma~\ref{prop:monotone_diff}, $U_h(\theta)<U_r(\theta)$ for every $\theta<B$.
Hence, no higher action can be a profitable deviation after the last selected action.

Combining the preceding observations inductively over the selected thresholds
\[
    0=\tau_0<\tau_1<\cdots<\tau_{m'-1}<\tau_{m'}=B,
\]
we obtain that, for every $\ell=1,\ldots,m'$,
\[
    U_{j_\ell}(\theta)\ge U_h(\theta)
    \qquad
    \text{for every action }h\in[m]
    \text{ and every }
    \theta\in[\tau_{\ell-1},\tau_\ell).
\]
Therefore, the prescribed action $a(\theta)=j_\ell$ is a best response on
$[\tau_{\ell-1},\tau_\ell)$. Given the chosen action, the natural bid is dominant in the
second-price auction. Hence, the symmetric strategy
profile constructed above is a symmetric Bayes Nash equilibrium.

It remains to prove uniqueness. Consider any symmetric natural-strategy Bayes Nash equilibrium.
By Theorem~\ref{thm:Bayesian-EquilibriaCharacterization-m}, its action rule must be a threshold strategy with actions
ordered increasingly in type. Let two consecutive actions used in this equilibrium be $r<s$,
and let $\tau$ be the threshold at which the equilibrium switches from $r$ to $s$.

At type $\tau$, the same argument as above applies: since no action strictly between $r$ and
$s$ is used in a neighborhood of the threshold, choosing $r$ or $s$ yields the same winning
event and the same conditional expected second-price payment. Therefore, equilibrium
indifference at the threshold gives
\[
    D(\tau)^{n-1}
    \left(
        \tau(\nu_s-\nu_r)+t(\mu_s-\mu_r)
    \right)
    =c_s-c_r.
\]
Thus,
\[
    \tau=\widehat\theta_{r,s}.
\]

Moreover, if an action $h$ with $r<h<s$ is skipped between $r$ and $s$, then at the same
threshold $\tau$ the action $h$ has the same winning event and same conditional expected
payment as $r$ and $s$. Since $h$ is not a strict best response at $\tau$, we must have
\[
    D(\tau)^{n-1}
    \left(
        \tau(\nu_h-\nu_r)+t(\mu_h-\mu_r)
    \right)
    \le c_h-c_r,
\]
which implies
\[
    \widehat\theta_{r,h}\ge \tau=\widehat\theta_{r,s}.
\]
Likewise, for any $h>s$, if choosing $h$ were strictly better than choosing $r$ at $\tau$,
then $h$ would be a profitable deviation at the threshold. Therefore, $h$ is not strictly better
than $r$ at $\tau$. Since choosing $h$ gives at least the payoff contribution
\[
    D(\tau)^{n-1}
    \left(
        \tau(\nu_h-\nu_r)+t(\mu_h-\mu_r)
    \right)
    -(c_h-c_r)
\]
relative to action $r$ on the event that all opponents' types are below $\tau$, it follows that
\[
    D(\tau)^{n-1}
    \left(
        \tau(\nu_h-\nu_r)+t(\mu_h-\mu_r)
    \right)
    \le c_h-c_r.
\]
Hence,
\[
    \widehat\theta_{r,h}\ge \tau=\widehat\theta_{r,s}.
\]
Thus, starting from the current action $r$, the next action $s$ used in any symmetric equilibrium
is exactly the action selected by
\[
    \text{\selection}\left(\{\widehat\theta_{r',s'}:r'<s'\}\right),
\]
up to the deterministic tie-breaking rule of \selection.

Applying this argument recursively from the lowest action, the entire sequence of used actions
and thresholds must coincide with the output of \selection. Therefore, the symmetric equilibrium
constructed above is the unique symmetric natural-strategy Bayes Nash equilibrium, up to
tie-breaking at threshold types and possible ties inside the selection algorithm.
\end{proof}

Finally, we move on and show the main result about the optimal choice $t\to\omega$ for the revenue maximization problem.

\begin{theorem}\label{thm:BNE-optimalt-m}
    Consider the second-price auction with a linear contract $T(q)=tq$.
    Let $R_n(t)$ be the expected revenue under the symmetric Bayes Nash equilibrium characterized in Theorem~\ref{thm:BNE-identical-m}. Then the limiting expected revenue
    $$\bar{R}(t):=\lim_{n\to\infty}R_n(t)$$
    is maximized at $t=\omega$.
\end{theorem}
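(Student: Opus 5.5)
We will compute the limit $\bar R(t)$ explicitly on the scale of the top order statistics. We then show it has the form $\text{const}+\int_0^\infty g(y,a_t(y))\,dy$, with an integrand that the equilibrium at $t=\omega$ maximizes pointwise. This extends the two-action computation in the proof of Theorem~\ref{thm:BNE-optimalt}.

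\emph{Step 1: limiting thresholds.} Fix $t$. As in the proof of Theorem~\ref{thm:BNE-optimalt}, the relevant thresholds of Theorem~\ref{thm:BNE-identical-m} tend to $B$. If some pairwise threshold stayed bounded away from $B$, then $D^{n-1}\to 0$ there and \eqref{eqn:thetastar-m} would fail. So we reparametrize types by $y=n(1-D(\theta))$, so that $D(\theta)^{n-1}\to e^{-y}$ and $\theta\to B$ for fixed $y$. Equation \eqref{eqn:thetastar-m} then becomes
\[ e^{-y_{r,s}}\bigl(B(\nu_s-\nu_r)+t(\mu_s-\mu_r)\bigr)=c_s-c_r . \]
This is exactly the indifference $h_t(y,r)=h_t(y,s)$ for
\[ h_t(y,j):=e^{-y}(B\nu_j+t\mu_j)-c_j . \]
For each $j<s$, the difference $h_t(y,s)-h_t(y,j)$ is decreasing in $y$. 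Hence \selection\ applied to the limiting thresholds produces the rule
\[ a_t(y)=\arg\max_j h_t(y,j) , \]
which selects higher actions for smaller $y$ (higher types). In the limit, the rescaled values $y$ of the top bidders form a unit-rate Poisson process on $[0,\infty)$.

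\emph{Step 2: revenue formula.} Take any monotone action rule $a(\cdot)$ in $y$, played with natural bids. We would show
\[ \bar R=\int_0^\infty\bigl(e^{-y}(B\nu_{a(y)}+\omega\mu_{a(y)})-c_{a(y)}\bigr)\,dy+C , \]
where $C$ does not depend on $a$, after normalizing by subtracting the all-action-$1$ baseline $B\nu_1+\omega\mu_1$. For two actions, Eqn.~\eqref{eqn:Bayesian-revenue} gives exactly this:
\[ R(x)=B\nu_1+\omega\mu_1+\int_0^x\Bigl(e^{-y}\bigl(\omega(\mu_2-\mu_1)+B(\nu_2-\nu_1)\bigr)-(c_2-c_1)\Bigr)\,dy . \]
For $m$ actions, the plan is to redo the four-term decomposition:
\begin{itemize}
\item the second-highest bid;
\item the reward paid out minus the reward included in the payment, handled as in \eqref{eqn:gamma} by substituting the threshold equation;
\item the quality benefit $\omega\mu_{a}$ of the winner.
\end{itemize}
Each switch $r\to s$ at level $y_\ell$ contributes the two-action expression with $(\nu,\mu,c)$ replaced by the differences. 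The resulting sum telescopes into the integral above.

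A cleaner route, which I would try first, is the revenue-equivalence identity from the proof of Theorem~\ref{thm:revenue-equivalence}:
\[ \text{revenue}=\text{expected winner value}+\text{winner's }\omega q-\text{total costs}-\text{total bidder rents}, \]
with rents given by $\int D^{n-1}\nu_{a}$. On the Poisson scale, $n(1-D)$ becomes $y$, and both the rents and the excess costs turn into integrals against $dy$.

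\emph{Step 3: optimization.} The integrand is $h_\omega(y,a(y))$, and this form holds for every monotone rule $a$. So $\bar R$ is maximized over all such rules by $a(y)=\arg\max_j h_\omega(y,j)$ pointwise. By Step 1, this is precisely $a_\omega$, the limiting equilibrium rule at $t=\omega$. For any other $t$, the equilibrium rule $a_t$ is a monotone rule, hence feasible in the relaxation. Therefore $\bar R(t)\le\bar R(\omega)$.

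The main obstacle is Step 2. One must justify the limit interchange, namely uniform convergence of $D^{n-1}\to e^{-y}$ and of $\theta\to B$ on the relevant range, including thresholds at $y=0$ or $y=\infty$ (actions never used). One must also verify that the payment and reward terms combine into $y$-integrals for an arbitrary number of switches. I would handle this by induction on the number of selected actions, peeling off the top switch and reusing the two-action computation of Theorem~\ref{thm:BNE-optimalt} with differenced parameters.
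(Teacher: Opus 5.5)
Your argument is correct, and its decisive step differs from the paper's. The paper fixes the induced action set $A$, computes $\bar R(t)=B\nu_m+\omega\mu_m+\sum_j S_j(t)$ from the four-term decomposition (your first route for Step~2), and differentiates each $S_j$ in $t$ to show it is single-peaked at $\omega$. It then glues the pieces for different $A$ by continuity.

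You instead write $\bar R(t)=J(a_t)$, where $J(a):=B\nu_1+\omega\mu_1+\int_0^\infty\bigl(h_\omega(y,a(y))-h_\omega(y,1)\bigr)\,dy$ depends only on the limiting action rule. You then note that $a_\omega$ maximizes the integrand pointwise; the paper's $B\nu_m+\omega\mu_m+\sum_j S_j(t)$ is exactly $J(a_t)$ written out. This is a Myerson-style virtual-surplus argument. At finite $n$ the same accounting gives revenue $n\,\mathbb{E}_\theta\bigl[D^{n-1}(\phi\,\nu_{a}+\omega\mu_{a})-(c_{a}-c_1)\bigr]$ with virtual value $\phi(\theta)=\theta-(1-D(\theta))/d(\theta)$. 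Matching this with the equilibrium rule reproduces $t=\omega-(\nu_2-\nu_1)(1-\theta^\ast)/(\mu_2-\mu_1)$ from Theorem~\ref{thm:BNE-optimalt-uniform}. As $n\to\infty$, both $\phi$ and $\theta$ collapse to $B$ on the relevant scale, which is why full pass-through becomes optimal.

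Your route avoids the re-indexing, the case split by $A$, and the gluing step. Neither the rent formula nor the pointwise bound uses linearity of $T$. So it also suggests that $t=\omega$ is asymptotically optimal against any fixed monotone contract inducing a symmetric threshold equilibrium. In exchange, the paper's route yields explicit closed forms and the monotonicity of $\bar R$ on each piece.

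One statement needs correcting. In Steps~2 and~3 you assert the formula for \emph{every} monotone rule played with natural bids, and that is false. The auctioneer's revenue never contains the costs $c_j$; they enter only through the equilibrium indifference conditions (the substitution leading to \eqref{eqn:gamma}), or equivalently through the envelope formula for rents. Take two actions and an off-equilibrium limiting threshold $x$ at a given $t$. The middle revenue term is then $-t\,xe^{-x}(\mu_2-\mu_1)$, not $-x\bigl((c_2-c_1)-e^{-x}B(\nu_2-\nu_1)\bigr)$. The two agree only when $x$ solves the limiting threshold equation for that $t$.

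So prove $\bar R(t)=J(a_t)$ only for equilibrium rules, and use $J$ on arbitrary monotone rules purely as a comparison functional. That is all Step~3 needs. Your ``cleaner route'' is the right way to do this, and it removes most of the obstacle you anticipate:
\begin{itemize}
\item $U(\theta)=-c_1+\int_0^\theta D(z)^{n-1}\nu_{a(z)}\,dz$ is the equilibrium envelope condition.
\item The excess-cost term equals $\int_0^n(c_{a(y)}-c_1)\,dy$ exactly, because $y$ is uniform on $[0,n]$.
\item Total rents $n\int_0^B(1-D(z))D(z)^{n-1}\nu_{a(z)}\,dz\to0$ by dominated convergence.
\item The winner terms converge because $n(1-D(\theta^{(N_1)}))$ tends to an $\mathrm{Exp}(1)$ variable.
\end{itemize}
No induction over switches is needed.
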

\begin{proof}
    Once $t$ is fixed, it induces a symmetric Bayes Nash equilibrium specified by Theorem~\ref{thm:BNE-identical-m}.
    In particular, Eqn.~(\ref{eqn:thetastar-m}) will define $\binom{m}{2}$ thresholds and Algorithm~\ref{alg:selection} will compute the set of available actions $A$ with corresponding thresholds.
    We will first assume $A=[m]$ and show that setting $t=\omega$ is optimal for $n\to\infty$ within a closed interval of $t$ where $A$ is the available actions.
    Notice that we can just re-index the actions $j_1,\ldots,j_{m'}$ in $A$ by $1,\ldots,m$, and we do this for notation simplicity.
    We will later discuss different intervals of $t$ where different action sets of $A$ are induced.
    Below, we will write $\theta_{j(j+1)}$ for $\theta_{j,j+1}$.

    Firstly, by the same calculations in the proof of Theorem~\ref{thm:BNE-optimalt}, we can rule out the possibility for $t\to\infty$.
    Moreover, we can let $D(\theta_{j(j+1)})=1-\frac{x_{j(j+1)}}n$ and $(D(\theta_{j(j+1)}))^{n}$, $(D(\theta_{j(j+1)}))^{n-1}$, and $(D(\theta_{j(j+1)}))^{n-2}$ are all $e^{-x_{j(j+1)}}$.
    Notice that $x_{12},x_{23},\ldots,x_{(m-1)m}$ can be viewed as functions of $t$ given by Eqn.~(\ref{eqn:thetastar-m}).
    Let $R(t)$ be the function that outputs the expected revenue when the bidders are playing the symmetric Bayes Nash equilibrium given the reward parameter $t$.

    Since all the thresholds approach $B$ as $n\to\infty$, for the two actions $j$ and $j+1$, we rewrite Eqn.~(\ref{eqn:thetastar-m}) as
    \begin{equation}\label{eqn:thetastar-m'}
        e^{-x_{j(j+1)}}\left(B(\nu_{j+1}-\nu_{j})+t(\mu_{j+1}-\mu_{j})\right)=c_{j+1}-c_{j}.
    \end{equation}
    
    We compute the four terms of $R(t)$ given below.
    \begin{equation*}
    R(t)=\mathbb{E}[\theta^{(N_2)}\vq(q^{(N_2)})]+t\cdot \mathbb{E}[q^{(N_2)}]-t\cdot \mathbb{E}[q^{(N_1)}]+\omega\cdot\mathbb{E}[q^{(N_1)}],
    \end{equation*}
    where $N_1$ and $N_2$ are the highest and the second highest bidders, which are the two bidders with the highest types by the symmetric equilibrium strategy profile.

    Consider $n$ independent random variables with distribution $\mathcal{D}$, and consider the distributions of the first and the second largest random variables.
    Let $\pi_1$ and $\pi_2$ be the probability density functions for the two distributions.
    We have
    $$\pi_1(z)=nd(z)(D(z))^{n-1},$$
    and
    $$\pi_2(z)=n(n-1)d(z)(1-D(z))(D(z))^{n-2},$$
    where $D$ and $d$ are the cumulative distribution function and probability density function for $\mathcal{D}$.
    The cumulative distribution functions for $\pi_1$ and $\pi_2$ are
    $$\Pi_1(z)=(D(z))^n\qquad\mbox{and}\qquad\Pi_2(z)=n(D(z))^{n-1}-(n-1)(D(z))^n.$$
    
    Let $\theta_{01}=0$ and $\theta_{m(m+1)}=B$ below. The first term of the expected revenue is
    \begin{align*}
        R_1(t)&=\sum_{j=1}^m\int_{\theta_{(j-1)j}}^{\theta_{j(j+1)}}\pi_2(z)\cdot z\nu_j\,dz\\
        &=\int_0^B\pi_2(z)\cdot z\nu_1\,dz+\sum_{j=2}^m\int_{\theta_{(j-1)j}}^B\pi_2(z)\cdot z(\nu_j-\nu_{j-1})\,dz.\\
    \end{align*}
    The first term is just $\nu_1$ times the expectation of the second largest type, which is $B\nu_1$ for $n\to\infty$.
    For the second term, we have $\theta_{(j-1)j}\to B$ for each $j=2,\ldots,m$ as $n\to\infty$.
    Thus,
    \begin{align*}
        R_1(t)&=B\nu_1+\sum_{j=2}^m B(\nu_j-\nu_{j-1})\cdot \int_{\theta_{(j-1)j}}^B\pi_2(z)\,dz\\
        &=B\nu_1+\sum_{j=2}^m B(\nu_j-\nu_{j-1})\cdot \left(\Pi_2(B)-\Pi_2(\theta_{(j-1)j})\right)\\
        &=B\nu_1+\sum_{j=2}^m B(\nu_j-\nu_{j-1})\cdot\left(1-n(D(\theta_{(j-1)j}))^{n-1}+(n-1)(D(\theta_{(j-1)j}))^n\right)\\
        &=B\nu_1+\sum_{j=2}^m B(\nu_j-\nu_{j-1})\cdot\left(1-(D(\theta_{(j-1)j}))^n-n(D(\theta_{(j-1)j}))^{n-1}\left(1-D(\theta_{(j-1)j})\right)\right)\\
        &\to B\nu_1+\sum_{j=2}^m B(\nu_j-\nu_{j-1})(1-e^{-x_{(j-1)j}}-x_{(j-1)j}\cdot e^{-x_{(j-1)j}})\tag{as $n\to\infty$}\\
        &=B\nu_m-\sum_{j=2}^m e^{-x_{(j-1)j}}(1+x_{(j-1)j})\cdot B(\nu_j-\nu_{j-1}).
    \end{align*}

    We can similarly compute the second term of the expected revenue:
    \begin{align*}
        R_2(t)&=\sum_{j=1}^m\int_{\theta_{(j-1)j}}^{\theta_{j(j+1)}}\pi_2(z)\cdot t\mu_j\,dz\\
        &=\int_0^B\pi_2(z)\cdot t\mu_1\,dz+\sum_{j=2}^m\int_{\theta_{(j-1)j}}^B\pi_2(z)\cdot t(\mu_j-\mu_{j-1})\,dz\\
        &=t\mu_1+\sum_{j=2}^mt(\mu_j-\mu_{j-1})(\Pi_2(B)-\Pi_2(\theta_{(j-1)j}))\\
        &\to t\mu_1+\sum_{j=2}^mt(\mu_j-\mu_{j-1})(1-e^{-x_{(j-1)j}}-x_{(j-1)j}\cdot e^{-x_{(j-1)j}})\tag{same calculations as in $R_1$}\\
        &=t\mu_m-\sum_{j=2}^me^{-x_{(j-1)j}}(1+x_{(j-1)j})\cdot t(\mu_j-\mu_{j-1})
    \end{align*}
    For the third term, we use $\pi_1$ and $\Pi_1$ instead:
    \begin{align*}
        R_3(t)&=t\mu_1+\sum_{j=2}^mt(\mu_j-\mu_{j-1})(\Pi_1(B)-\Pi_1(\theta_{(j-1)j}))\tag{same as before, with $\Pi_2$ replaced by $\Pi_1$}\\
        &=t\mu_1+\sum_{j=2}^mt(\mu_j-\mu_{j-1})(1-(D(\theta_{(j-1)j}))^n)\\
        &\to t\mu_1+\sum_{j=2}^mt(\mu_j-\mu_{j-1})(1-e^{-x_{(j-1)j}})\tag{as $n\to\infty$}\\
        &=t\mu_m-\sum_{j=2}^me^{-x_{(j-1)j}}\cdot t(\mu_j-\mu_{j-1}).
    \end{align*}
    Finally, by the same calculation above with $t$ replaced by $\omega$, the fourth term of the expected revenue is
    $$R_4(t)\to \omega\mu_m-\sum_{j=2}^me^{-x_{(j-1)j}}\cdot \omega(\mu_j-\mu_{j-1}).$$
    Putting together, we have
    \begin{align*}
        R(t)&=R_1(t)+R_2(t)-R_3(t)+R_4(t)\\
        &\to B\nu_m+\omega\mu_m+\sum_{j=2}^mS_j(t)
    \end{align*}
    where
    $$S_j(t)=-(1+x_{(j-1)j})e^{-x_{(j-1)j}}\cdot B(\nu_j-\nu_{j-1})-x_{(j-1)j}e^{-x_{(j-1)j}}\cdot t(\mu_j-\mu_{j-1})-e^{-x_{(j-1)j}}\cdot \omega(\mu_j-\mu_{j-1}).$$
    By Eqn.~(\ref{eqn:thetastar-m'}), we have
    $$e^{-x_{(j-1)j}}\cdot t(\mu_j-\mu_{j-1})=(c_{j}-c_{j-1})-e^{-x_{(j-1)j}}\cdot B(\nu_j-\nu_{j-1}).$$
    Substituting this into $S_j(t)$, we have
    $$S_j(t)=-e^{-x_{(j-1)j}}\cdot B(\nu_j-\nu_{j-1})-x_{(j-1)j}(c_{j}-c_{j-1})-e^{-x_{(j-1)j}}\cdot \omega(\mu_j-\mu_{j-1}).$$
    
    Now we check the first-order condition.
    We have
    \begin{align*}
        S_j'(t)&=\frac{dS_j(t)}{dx_{(j-1)j}}\cdot\frac{dx_{(j-1)j}}{dt}\tag{chain rule of derivative}\\
        &=\frac{dx_{(j-1)j}}{dt}\cdot\left(e^{-x_{(j-1)j}}\cdot B(\nu_j-\nu_{j-1})-(c_j-c_{j-1})+e^{-x_{(j-1)j}}\cdot \omega(\mu_j-\mu_{j-1})\right)
    \end{align*}
    By setting $t=\omega$ and correspondingly letting $x_{(j-1)j}$ be the solution to (\ref{eqn:thetastar-m'}), we have
    $$e^{-x_{(j-1)j}}\left(B(\nu_j-\nu_{j-1})+\omega(\mu_j-\mu_{j-1})\right)=c_j-c_{j-1},$$
    which implies
    $S_j'(t)=0$, which further implies
    $$R'(t)\to0\qquad\mbox{for }t=\omega.$$

    To conclude that $t=\omega$ reaches the maximum, we will show that $S_j'(t)>0$ for $t<\omega$ and $S_j'(t)<0$ for $t>\omega$.
    The first term
    $$\frac{dx_{(j-1)j}}{dt}=\frac{\mu_j-\mu_{j-1}}{B(\nu_j-\nu_{j-1})+t(\mu_j-\mu_{j-1})}$$
    is always positive.
    For the second term 
    $$\left(e^{-x_{(j-1)j}}\cdot B(\nu_j-\nu_{j-1})-(c_j-c_{j-1})+e^{-x_{(j-1)j}}\cdot \omega(\mu_j-\mu_{j-1})\right),$$
    it is decreasing in $x_{(j-1)j}$. By Eqn.~(\ref{eqn:thetastar-m'}), we have
    $$x_{(j-1)j}=\ln\left(\frac{B(\nu_j-\nu_{j-1})+t(\mu_j-\mu_{j-1})}{c_j-c_{j-1}}\right),$$
    so $x_{(j-1)j}$ is increasing in $t$, and the second term of $S_j'(t)$ is decreasing in $t$.
    Since $t=\omega$ is a solution to $S_j'(t)=0$, we have $S_j'(t)>0$ for $t<\omega$ and $S_j'(t)<0$ for $t>\omega$.

    We conclude that setting $t=\omega$ is optimal for $n\to\infty$ within a closed interval of $t$ for the same available action set $A$.

    Now, consider the function $\bar{R}(t)$ that takes $t$ as input and outputs the limit of the expected revenue for $n\to\infty$.
    This is a piecewise function where each piece is a function corresponding to a specific action set $A$.
    In addition, the function corresponding to each piece is a composition of elementary functions, which is continuous.
    Therefore, the function $\bar{R}(t)$ is continuous, as it is both left-continuous and right-continuous at each separating point between the two adjacent segments (when a selected action appears/disappears, either a threshold collapses to $0$, reaches $B$, or two crossings coincide, so the adjacent formula glues continuously).
    Moreover, we have shown that, when we extend the domain of the function corresponding to each piece, the function has a single peak that is maximized at $t=\omega$.
    Therefore, for the segment containing $t=\omega$, $\bar{R}(t)$ is maximized at $t=\omega$.
    For all the segments to the left, $\bar{R}(t)$ is increasing on the corresponding segments; for all the segments to the right, $\bar{R}(t)$ is decreasing on the corresponding segments.
    Since we have shown that $\bar{R}(t)$ is continuous, $t=\omega$ is the global maximum point.
\end{proof}

\section{Proof of Theorem~\ref{thm:revenue-equivalence} for General $m$}
\label{sec:revenue-equivalence-m}
We first show the following proposition that extends Proposition~\ref{prop:strictlyIncreasingBidding->threshold}.

\begin{proposition}\label{prop:strictlyIncreasingBidding->threshold-m}
    Consider identical bidders' type distributions and any linear contract $T(q)=tq$.
    Suppose there exists a symmetric Bayes Nash equilibrium where everyone plays $(a,b)$ with strictly increasing $b$.
    Then the action strategy $a$ must be a threshold strategy with available action sets $A=\{j_1,\ldots,j_{m'}\}$ (where $j_1<j_2<\cdots<j_{m'}$) and thresholds $\theta_{j_1,j_2},\theta_{j_2,j_3},\ldots,\theta_{j_{m'}-1,j_{m'}}$, where $A$ and the thresholds are given by \selection$(\{\theta_{j_1',j_2'}:j_1'<j_2'\})$ and each $\theta_{j_1',j_2'}$ is defined as follows.
    It is the solution to the equation
    \begin{equation}\label{eqn:thetastar-increasingbidding-m}
        (D(\theta_{j_1',j_2'}))^{n-1}\left(\theta_{j_1',j_2'}\cdot(\nu_{j_2'}-\nu_{j_1'})+t(\mu_{j_2'}-\mu_{j_1'})\right)=c_{j_2'}-c_{j_1'}
    \end{equation}
    if a solution exists, and it is $B+1$ otherwise.
\end{proposition}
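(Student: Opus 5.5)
The plan follows the proof of Proposition~\ref{prop:strictlyIncreasingBidding->threshold}: fix everyone's bidding strategy $b$ and study only bidder $i$'s choice of action. Because $\Sigma$ is symmetric and $b$ is strictly increasing, a bidder of type $\theta$ bidding $b(\theta)$ wins if and only if her type is the highest. This happens with probability $D(\theta)^{n-1}$, and her expected payment is some $P(\theta)$ that does not depend on her action. So the utility from action $j$ while keeping the bid $b(\theta)$ is
\[
\widetilde U_j(\theta)=D(\theta)^{n-1}(\theta\nu_j+t\mu_j)-P(\theta)-c_j .
\]
In equilibrium, $a(\theta)$ must maximize $\widetilde U_j(\theta)$ over $j\in[m]$, since otherwise the deviation $(j,b(\theta))$ would be profitable. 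The term $P(\theta)$ cancels in every pairwise comparison, so $a(\theta)\in\arg\max_j g_j(\theta)$ with $g_j(\theta):=D(\theta)^{n-1}(\theta\nu_j+t\mu_j)-c_j$.

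Next I will study the pairwise differences. For $j_1<j_2$,
\[
g_{j_2}(\theta)-g_{j_1}(\theta)=D(\theta)^{n-1}\bigl(\theta(\nu_{j_2}-\nu_{j_1})+t(\mu_{j_2}-\mu_{j_1})\bigr)-(c_{j_2}-c_{j_1}).
\]
The bracket is positive, because $\nu_{j_2}>\nu_{j_1}$ and $\mu_{j_2}\ge\mu_{j_1}$ under stochastic dominance. The bracket is also strictly increasing in $\theta$, so for $\theta>0$ the whole expression is strictly increasing in $\theta$. At $\theta=0$ it equals $-(c_{j_2}-c_{j_1})<0$. Hence the difference has at most one root, and that root is exactly the $\theta_{j_1,j_2}$ of Eqn.~(\ref{eqn:thetastar-increasingbidding-m}); if there is no root in $[0,B]$, the difference is negative throughout, which matches the convention $\theta_{j_1,j_2}=B+1$. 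So the functions $g_j$ satisfy the same single-crossing structure that Lemma~\ref{prop:monotone_diff} provided for $U_j$ in the second-price auction, and they have exactly the crossing points that were used in Theorem~\ref{thm:BNE-identical-m}.

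Finally I will redo the combinatorial argument behind Theorem~\ref{thm:Bayesian-EquilibriaCharacterization-m} and the selection discussion preceding Algorithm~\ref{alg:selection}, with $g_j$ in place of $U_j$. Since $g_j-g_{j'}$ is strictly increasing for $j>j'$, once an action $j$ becomes optimal, every lower action stays strictly suboptimal for all larger $\theta$. It follows that the pointwise argmax is a threshold strategy whose actions increase with type. Starting from action $1$, which is optimal at $\theta=0$ because $c_1$ is the smallest cost, the next action to take over is the $j^\ast$ that minimizes $\theta_{j,j^\ast}$, and every action strictly between $j$ and $j^\ast$ is dominated. This is exactly what \selection\ does, including the steps for thresholds outside $[0,B]$.

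I expect the main obstacle to be this last bookkeeping step: showing that the sequence of maximizers coincides with the algorithm's output. The delicate cases are ties, where several $\theta_{j,\cdot}$ coincide, and thresholds that fall at $0$ or beyond $B$. I would handle these the same way Theorem~\ref{thm:BNE-identical-m} does, by stating the result up to tie-breaking at threshold types and inside \selection, since such ties affect only a measure-zero set of types.
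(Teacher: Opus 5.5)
Your proposal is correct and takes essentially the same route as the paper. The paper fixes the common strictly increasing bid $b$, so that the winning probability $D(\theta)^{n-1}$ and the expected payment $P(\theta)$ do not depend on the action. It then gets strict pairwise single crossing exactly at the roots of Eqn.~(\ref{eqn:thetastar-increasingbidding-m}) and reruns the selection argument from Theorem~\ref{thm:BNE-identical-m}. Your explicit pointwise-argmax argument for the $g_j$, including the treatment of ties and out-of-range thresholds, simply fills in bookkeeping that the paper's short proof only asserts.
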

\begin{proof}
    Fix everyone's bidding strategy to be $b$.
    Consider bidder $i$'s type $\theta$ and any two actions $j_1'$ and $j_2'$ with $j_1'<j_2'$.
    By the same analysis in the proof of Proposition~\ref{prop:strictlyIncreasingBidding->threshold}, playing $j_1'$ is strictly better than $j_2'$ if $\theta<\theta_{j_1',j_2'}$, and playing $j_2'$ is strictly better if $\theta>\theta_{j_1',j_2'}$, where $\theta_{j_1',j_2'}$ is defined by Eqn.~(\ref{eqn:thetastar-increasingbidding-m}).
    This is the property sufficient for proving Theorem~\ref{thm:BNE-identical-m},\footnote{In fact, this property is stronger than what we have when proving Theorem~\ref{thm:BNE-identical-m}: in the context of Theorem~\ref{thm:BNE-identical-m}, the threshold characterized by Eqn.~(\ref{eqn:thetastar-increasingbidding-m}) does not hold if $j_1'$ and $j_2'$ are not in the available action set $A$, because $(D(\theta_{j_1',j_2'}))^{n-1}$ is not the probability of winning when bidder $i$'s type is $\theta_{j_1',j_2'}$ and is playing $j_1'$ or $j_2'$ (notice that the other agents are playing a third action other than $j_1'$ and $j_2'$). On the other hand, in the context of this proposition, everyone is adopting the same bidding strategy. The winning probability of bidder $i$ is always $D(\theta_{j_1',j_2'})^{n-1}$.} and the remaining part of the proof is the same as the proof of Theorem~\ref{thm:BNE-identical-m}.
\end{proof}

To prove Theorem~\ref{thm:revenue-equivalence}, we again show the same two observations:
\begin{enumerate}
        \item the action strategy $a$ must be the same as it is in $\Sigma_{\spa}$, and
        \item for each bidder $i$ and each $\theta^{(i)}\in\Theta$, the expected payment (where the expectation is taken over $\theta^{(-i)}\sim\mathcal{D}^{(-i)}$) of bidder $i$ under $\Sigma$ in $\mathcal{M}$ is the same as the expected payment under $\Sigma_{\spa}$ in $\mathcal{M}_{\spa}$.
\end{enumerate}

The first claim follows from Proposition~\ref{prop:strictlyIncreasingBidding->threshold-m} and Theorem~\ref{thm:BNE-identical-m}.
The second claim can be proved in the same way as it is in the $m=2$ case: instead of analyzing $P(\theta)$ in two regimes, $[0,\theta^\ast]$ and $[\theta^\ast,B]$, we make the same analysis of $P(\theta)$ in $m'$ regimes separated by the thresholds $\theta_{j_1,j_2},\theta_{j_2,j_3},\ldots,\theta_{j_{m'}-1,j_{m'}}$ as described in Proposition~\ref{prop:strictlyIncreasingBidding->threshold-m}.

\section{First-Price Auction: Extensions to $m$ Actions}\label{append:fpa}
\begin{theorem}[Existence of Symmetric BNE for Identical Type Distributions under FPA]
    Consider the first-price auction when bidders' type distributions are identical (following Assumption~\ref{assumption:distribution}). 
    Suppose a linear contract $T(q)=tq$ is used and there are $m$ actions.
    Then there exists a symmetric Bayes Nash equilibrium $\Sigma=(\sigma^{(1)},\ldots,\sigma^{(n)})$ where every bidder $i\in[n]$ adopts the same strategy $\sigma^{(i)}=(a,b)$.
    The action strategy $a$ follows the one in Proposition~\ref{prop:strictlyIncreasingBidding->threshold-m}.

    Each bidder adopts a continuous and strictly increasing bidding strategy $b(\theta)$, which is given by
    $$
    b(\theta)
    =
    \theta\nu_{a(\theta)}+t\mu_{a(\theta)}
    -
    \frac{
        c_{a(\theta)}-c_1
        +
        \displaystyle\int_0^\theta
        D(z)^{n-1}\nu_{a(z)}\,dz
    }{
        D(\theta)^{n-1}
    }
    $$
    for every $\theta>0$, and set $b(0):=t\mu_1$ for $\theta=0$.
\end{theorem}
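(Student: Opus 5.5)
I would mirror the three-claim structure of the proof of Theorem~\ref{thm:BNE-identical-fpa}, replacing the single threshold $\theta^\ast$ by the selected actions $j_1<\cdots<j_{m'}$ and thresholds $\tau_1<\cdots<\tau_{m'-1}$ produced by \selection{} as in Proposition~\ref{prop:strictlyIncreasingBidding->threshold-m}. Set $\tau_0=0$ and $\tau_{m'}=B$.

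\textbf{Step 1 (candidate bid).} Take $a$ from Proposition~\ref{prop:strictlyIncreasingBidding->threshold-m} and any strictly increasing $b$. A type-$\theta$ bidder wins with probability $D(\theta)^{n-1}$, so her utility is
\[
\calU(\theta)=D(\theta)^{n-1}\bigl(\theta\nu_{a(\theta)}+t\mu_{a(\theta)}-b(\theta)\bigr)-c_{a(\theta)}.
\]
The two-sided no-deviation inequalities used for $m=2$ apply verbatim inside each interval $(\tau_{\ell-1},\tau_\ell)$. They give $\calU'(\theta)=D(\theta)^{n-1}\nu_{a(\theta)}$ except at finitely many points. Since also $\calU(0)=-c_1$ (the lowest type never wins and chooses the cheapest action), we get $\calU(\theta)=-c_1+\int_0^\theta D(z)^{n-1}\nu_{a(z)}\,dz$. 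Solving for $b$ yields the stated formula, and setting $b(0)=t\mu_1$ makes it continuous at $0$.

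\textbf{Step 2 (continuity and monotonicity).} On each interval where $a\equiv j_\ell$, the derivative computation from the two-action case gives $b'(\theta)>0$, because $c_{j_\ell}-c_1\ge 0$ and the integral is positive for $\theta>0$. At each threshold $\tau_\ell$ I compute the jump:
\[
b(\tau_\ell^+)-b(\tau_\ell^-)=\tau_\ell(\nu_{j_{\ell+1}}-\nu_{j_\ell})+t(\mu_{j_{\ell+1}}-\mu_{j_\ell})-\frac{c_{j_{\ell+1}}-c_{j_\ell}}{D(\tau_\ell)^{n-1}}.
\]
This is zero by Eqn.~(\ref{eqn:thetastar-increasingbidding-m}) for the pair $(j_\ell,j_{\ell+1})$. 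The terms involving $c_1$ cancel, and only the consecutive cost difference survives. Together these give a continuous, strictly increasing $b$ on $[0,B]$.

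\textbf{Step 3 (no profitable deviation).} This is the main obstacle. Define
\[
\mathcal V(\theta,x)=\max_{j\in[m]}\{x(\theta\nu_j+t\mu_j)-c_j\}.
\]
I would extend Lemma~\ref{lem:fpa-V} to this setting: by the envelope theorem, $\partial_\theta\mathcal V=x\nu_{a(\theta,x)}$, and this is weakly increasing in $x$. The reason is that for $j<j'$ the gap between actions $j'$ and $j$ is $x(\theta(\nu_{j'}-\nu_j)+t(\mu_{j'}-\mu_j))-(c_{j'}-c_j)$, which is nondecreasing in $x$ since $\mu$ is increasing by stochastic dominance. So the maximizer index, with ties broken toward the larger index, is monotone in $x$, and $\nu$ is increasing in the index.

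The crucial identity is $a(\theta,D(\theta)^{n-1})=a(\theta)$, i.e.\ the equilibrium action maximizes $\mathcal V$ at its own winning probability. This requires that the \selection{} output coincides with the pointwise argmax of $x(\theta\nu_j+t\mu_j)-c_j$ at $x=D(\theta)^{n-1}$. It follows from Proposition~\ref{prop:strictlyIncreasingBidding->threshold-m}: with common strictly increasing bids, $j'$ beats $j$ exactly when $\theta>\theta_{j,j'}$, so the selected action is the pointwise best.

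Given this identity, the argument for $z\le\theta$ and $z>\theta$ is word-for-word the one leading to Eqn.~(\ref{eqn:fpa-no-deviation}): integrate $\partial_s\mathcal V(s,D(s)^{n-1})$ against $\partial_s\mathcal V(s,D(z)^{n-1})$ and use monotonicity in $x$. Finally, a bid below $b(0)$ gives utility $-c_1$, which is no better than bidding $b(0)$, and a bid above $b(B)$ is dominated by bidding $b(B)$.
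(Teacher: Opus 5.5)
Your proposal is correct and follows the paper's own argument. It uses the same three claims: the candidate bid from $\calU'(\theta)=D(\theta)^{n-1}\nu_{a(\theta)}$, strict monotonicity on constant-action intervals with zero jumps at the selected thresholds by Eqn.~(\ref{eqn:thetastar-increasingbidding-m}), and no profitable deviation via the $m$-action extension of Lemma~\ref{lem:fpa-V}. You also make explicit two points the paper only asserts: the jump computation at each $\tau_\ell$, and the fact that the \selection{} output is the pointwise maximizer of $D(\theta)^{n-1}(\theta\nu_j+t\mu_j)-c_j$, which is what gives $a(\theta,D(\theta)^{n-1})=a(\theta)$.
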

\begin{proof}
    Proposition~\ref{prop:strictlyIncreasingBidding->threshold-m} gives the only possible characterization of action strategy under any equilibrium.
    Again, it remains to verify the three claims in the proof of Theorem~\ref{thm:BNE-identical-fpa}.
    Claim 1 is directly implied by Proposition~\ref{prop:strictlyIncreasingBidding->threshold-m}, where the bidding strategy can be obtained according to Footnote~\ref{fn:fpa-b}.
    Claim 2 holds due to the same reason as $m=2$, as the bidding strategy is continuous and strictly increasing on each interval with a constant action, and continuous at each threshold $\theta_{j_1,j_2}$ given in Eqn.~(\ref{eqn:thetastar-increasingbidding-m}).
    Lemma~\ref{lem:fpa-V} can be generalized to $m$ actions, that $$\frac{\partial \mathcal V(\theta,x)}{\partial\theta}=x\nu_{a(\theta,x)}$$ holds almost everywhere except those types where the derivatives may not exist, and is weakly increasing in $x$ for $m$ actions.
    The reason is the same as its original proof, where increasing $x$ only makes an action with a higher cost more attractive.
    Therefore, claim 3 holds, as the subsequent argument is the same for $m$ actions as for two actions.
\end{proof}

\end{document}